\newcommand\EFFACE[1]{}
\newtheorem{theorem}{Theorem}
\newtheorem{proposition}[theorem]{Proposition}
\newtheorem{definition}[theorem]{Definition}
\newtheorem{notation}[theorem]{Notation}
\newtheorem{lemma}[theorem]{Lemma}
\newtheorem{corollary}[theorem]{Corollary}
\newtheorem{claim}[theorem]{Claim}
\newtheorem{observation}[theorem]{Observation}
\newenvironment{proof}{
\par
\noindent {\bf Proof.}\rm}{\mbox{}\hfill$\square$\par\vskip 3mm}
\newcommand\SSS{\mathcal{S}}   
\newcommand\GGG{\mathcal{B}}   
\newcommand\FFFA{\mathcal{A}}   
\newcommand\XX{\mathcal{X}}   
\newcommand\XXX{\mathcal{C}}   
\newcommand\YY{\mathcal{Y}}   
\newcommand\YYYC{\mathcal{D}}   
\newcommand\ZZ{\mathcal{Z}}   
\newcommand\ZZZ{\mathcal{Z}}   
\newcommand\TTT{\mathcal{T}}   
\newcommand\EEE{\mathcal{E}}   
\newcommand\BBB{\mathcal{H}}   
\def\tf{\tilde{f}}
\def\tfs{\tf^*}
\def\NNNNN{\mathbb{N}}
\def\diam{{\rm diam}}
\def\rad{{\rm rad}}
\def\cost{{\rm cost}}
\let\@fnsymbol\@arabic
\begin{document}


\newcommand\CTtriangle[2]{
\node[scale=0.4,draw,circle,fill=black] (v0) at (#1,#2){};
\draw[thick] (v0) to (#1-0.5,#2-1.5);
\draw[thick] (v0) to (#1+0.5,#2-1.5);
\draw[thick] (#1+0.5,#2-1.5) to (#1-0.5,#2-1.5);
}

\newcommand\CTtriangleOPT[2]{
\node[scale=0.4,draw,circle,fill=black] (v0) at (#1,#2){};
\draw[thick,dashed] (v0) to (#1-0.5,#2-1.5);
\draw[thick,dashed] (v0) to (#1+0.5,#2-1.5);
\draw[thick,dashed] (#1+0.5,#2-1.5) to (#1-0.5,#2-1.5);
}

\newcommand\CTzero[2]{
\node[scale=0.4,draw,circle,fill=black] (v0) at (#1,#2){};
}

\newcommand\CTun[2]{
\node[scale=0.4,draw,circle,fill=black] (v0) at (#1,#2){};
\node[scale=0.4,draw,circle,fill=black] (v1) at (#1,#2-1){};
\draw[thick] (v0) to (v1);
}

\newcommand\CTzeroplus[2]{
\node[scale=0.4,draw,circle,fill=black] (v0) at (#1,#2){};
\node[scale=0.4,draw,circle,fill=black] (v1) at (#1,#2-1){};
\draw[thick,dotted] (v0) to (v1);
}

\newcommand\CTunplus[2]{
\node[scale=0.4,draw,circle,fill=black] (v0) at (#1,#2){};
\node[scale=0.4,draw,circle,fill=black] (v1) at (#1-0.2,#2-1){};
\draw[thick] (v0) to (v1);
\draw[thick,dotted] (#1+0.1,#2-1) -- ++(0.4,0);
}

\newcommand\CTdeux[2]{
\node[scale=0.4,draw,circle,fill=black] (v0) at (#1,#2){};
\node[scale=0.4,draw,circle,fill=black] (v1) at (#1-0.3,#2-1){};
\node[scale=0.4,draw,circle,fill=black] (v2) at (#1+0.3,#2-1){};
\draw[thick] (v0) to (v1);
\draw[thick] (v0) to (v2);
}

\newcommand\CTdeuxmoins[2]{
\node[scale=0.4,draw,circle,fill=black] (v0) at (#1,#2){};
\node[scale=0.4,draw,circle,fill=black] (v1) at (#1-0.3,#2-1){};
\node[scale=0.4,draw,circle,fill=black] (v2) at (#1+0.3,#2-1){};
\draw[thick] (v0) to (v1);
\draw[thick,dashed] (v0) to (v2);
}

\newcommand\CTdeuxplus[2]{
\node[scale=0.4,draw,circle,fill=black] (v0) at (#1,#2){};
\node[scale=0.4,draw,circle,fill=black] (v1) at (#1-0.5,#2-1){};
\node[scale=0.4,draw,circle,fill=black] (v3) at (#1+0.5,#2-1){};
\draw[thick] (v0) to (v1);
\draw[thick] (v0) to (v3);
\draw[thick,dotted] (#1-0.15,#2-1) -- ++(0.4,0);
}

\newcommand\CTtrois[2]{
\node[scale=0.4,draw,circle,fill=black] (v0) at (#1,#2){};
\node[scale=0.4,draw,circle,fill=black] (v1) at (#1-0.6,#2-1){};
\node[scale=0.4,draw,circle,fill=black] (v2) at (#1,#2-1){};
\node[scale=0.4,draw,circle,fill=black] (v3) at (#1+0.6,#2-1){};
\draw[thick] (v0) to (v1);
\draw[thick] (v0) to (v2);
\draw[thick] (v0) to (v3);
}

\newcommand\CTunatrois[2]{
\node[scale=0.4,draw,circle,fill=black] (v0) at (#1,#2){};
\node[scale=0.4,draw,circle,fill=black] (v1) at (#1-0.6,#2-1){};
\node[scale=0.4,draw,circle,fill=black] (v2) at (#1,#2-1){};
\node[scale=0.4,draw,circle,fill=black] (v3) at (#1+0.6,#2-1){};
\draw[thick] (v0) to (v1);
\draw[thick,dashed] (v0) to (v2);
\draw[thick,dashed] (v0) to (v3);
}

\newcommand\CTtroisplus[2]{
\node[scale=0.4,draw,circle,fill=black] (v0) at (#1,#2){};
\node[scale=0.4,draw,circle,fill=black] (v1) at (#1-0.8,#2-1){};
\node[scale=0.4,draw,circle,fill=black] (v2) at (#1-0.2,#2-1){};
\node[scale=0.4,draw,circle,fill=black] (v3) at (#1+0.8,#2-1){};
\draw[thick] (v0) to (v1);
\draw[thick] (v0) to (v2);
\draw[thick] (v0) to (v3);
\draw[thick,dotted] (#1+0.15,#2-1) -- ++(0.4,0);
}

\newcommand\CTtroisplusD[2]{
\node[scale=0.4,draw,circle,fill=black] (v0) at (#1,#2){};
\node[scale=0.4,draw,circle,fill=black] (v1) at (#1-0.8,#2-1){};
\node[scale=0.4,draw,circle,fill=black] (v2) at (#1-0.2,#2-1){};
\node[scale=0.4,draw,circle,fill=black] (v3) at (#1+0.8,#2-1){};
\draw[thick,dashed] (v0) to (v1);
\draw[thick,dashed] (v0) to (v2);
\draw[thick,dashed] (v0) to (v3);
\draw[thick,dotted] (#1+0.15,#2-1) -- ++(0.4,0);
}

\newcommand\CTquatreplus[2]{
\node[scale=0.4,draw,circle,fill=black] (v0) at (#1,#2){};
\node[scale=0.4,draw,circle,fill=black] (v1) at (#1-1.1,#2-1){};
\node[scale=0.4,draw,circle,fill=black] (v2) at (#1-0.5,#2-1){};
\node[scale=0.4,draw,circle,fill=black] (v3) at (#1+0.1,#2-1){};
\node[scale=0.4,draw,circle,fill=black] (v4) at (#1+1.1,#2-1){};
\draw[thick] (v0) to (v1);
\draw[thick] (v0) to (v2);
\draw[thick] (v0) to (v3);
\draw[thick] (v0) to (v4);
\draw[thick,dotted] (#1+0.45,#2-1) -- ++(0.4,0);
}

\newcommand\CTtriangleL[3]{
\node[scale=0.4,draw,circle,fill=black] (v0) at (#1,#2){};
\draw[thick] (v0) to (#1-0.5,#2-1.5);
\draw[thick] (v0) to (#1+0.5,#2-1.5);
\draw[thick] (#1+0.5,#2-1.5) to (#1-0.5,#2-1.5);
\node[above] at (#1,#2-1.5) {#3};
}

\newcommand\CTgrandtriangleL[3]{
\node[scale=0.4,draw,circle,fill=black] (v0) at (#1,#2){};
\draw[thick] (v0) to (#1-0.8,#2-1.5);
\draw[thick] (v0) to (#1+0.8,#2-1.5);
\draw[thick] (#1+0.8,#2-1.5) to (#1-0.8,#2-1.5);
\node[above] at (#1,#2-1.5) {#3};
}

\newcommand\CTzeroL[3]{
\node[scale=0.4,draw,circle,fill=black] (v0) at (#1,#2){};
\node[above] at (#1,#2) {#3};
}

\newcommand\CTzeroLL[3]{
\node[scale=0.4,draw,circle,fill=black] (v0) at (#1,#2){};
\node[below] at (#1,#2) {#3};
}

\newcommand\CTunL[3]{
\node[scale=0.4,draw,circle,fill=black] (v0) at (#1,#2){};
\node[scale=0.4,draw,circle,fill=black] (v1) at (#1,#2-1){};
\draw[thick] (v0) to (v1);
\node[below] at (#1,#2-1) {#3};
}

\newcommand\CTunplusL[3]{
\node[scale=0.4,draw,circle,fill=black] (v0) at (#1,#2){};
\node[scale=0.4,draw,circle,fill=black] (v1) at (#1-0.2,#2-1){};
\node[above] at (#1-0.2,#2-3) {#3};
\draw[thick] (v0) to (v1);
\draw[thick,dotted] (#1+0.1,#2-1) -- ++(0.4,0);
}

\newcommand\CTdeuxL[4]{
\node[scale=0.4,draw,circle,fill=black] (v0) at (#1,#2){};
\node[scale=0.4,draw,circle,fill=black] (v1) at (#1-0.3,#2-1){};
\node[scale=0.4,draw,circle,fill=black] (v2) at (#1+0.3,#2-1){};
\draw[thick] (v0) to (v1);
\draw[thick] (v0) to (v2);
\node[below] at (#1-0.3,#2-1) {#3};
\node[below] at (#1+0.3,#2-1) {#4};
}

\newcommand\CTdeuxmoinsL[4]{
\node[scale=0.4,draw,circle,fill=black] (v0) at (#1,#2){};
\node[scale=0.4,draw,circle,fill=black] (v1) at (#1-0.3,#2-1){};
\node[scale=0.4,draw,circle,fill=black] (v2) at (#1+0.3,#2-1){};
\draw[thick] (v0) to (v1);
\draw[thick,dashed] (v0) to (v2);
\node[below] at (#1-0.3,#2-1) {#3};
\node[below] at (#1+0.3,#2-1) {#4};
}

\newcommand\CTdeuxplusL[4]{
\node[scale=0.4,draw,circle,fill=black] (v0) at (#1,#2){};
\node[scale=0.4,draw,circle,fill=black] (v1) at (#1-0.5,#2-1){};
\node[scale=0.4,draw,circle,fill=black] (v3) at (#1+0.5,#2-1){};
\draw[thick] (v0) to (v1);
\draw[thick] (v0) to (v3);
\node[above] at (#1-0.5,#2-3) {#3};
\node[above] at (#1+0.5,#2-3) {#4};
\draw[thick,dotted] (#1-0.15,#2-1) -- ++(0.4,0);
}

\newcommand\CTtroisL[5]{
\node[scale=0.4,draw,circle,fill=black] (v0) at (#1,#2){};
\node[scale=0.4,draw,circle,fill=black] (v1) at (#1-0.6,#2-1){};
\node[scale=0.4,draw,circle,fill=black] (v2) at (#1,#2-1){};
\node[scale=0.4,draw,circle,fill=black] (v3) at (#1+0.6,#2-1){};
\draw[thick] (v0) to (v1);
\draw[thick] (v0) to (v2);
\draw[thick] (v0) to (v3);
\node[below] at (#1-0.6,#2-1) {#3};
\node[below] at (#1,#2-1) {#4};
\node[below] at (#1+0.6,#2-1) {#5};
}

\newcommand\CTunatroisL[5]{
\node[scale=0.4,draw,circle,fill=black] (v0) at (#1,#2){};
\node[scale=0.4,draw,circle,fill=black] (v1) at (#1-0.6,#2-1){};
\node[scale=0.4,draw,circle,fill=black] (v2) at (#1,#2-1){};
\node[scale=0.4,draw,circle,fill=black] (v3) at (#1+0.6,#2-1){};
\draw[thick] (v0) to (v1);
\draw[thick,dashed] (v0) to (v2);
\draw[thick,dashed] (v0) to (v3);
\node[below] at (#1-0.6,#2-1) {#3};
\node[below] at (#1,#2-1) {#4};
\node[below] at (#1+0.6,#2-1) {#5};
}

\newcommand\CTtroisplusL[5]{
\node[scale=0.4,draw,circle,fill=black] (v0) at (#1,#2){};
\node[scale=0.4,draw,circle,fill=black] (v1) at (#1-0.8,#2-1){};
\node[scale=0.4,draw,circle,fill=black] (v2) at (#1-0.2,#2-1){};
\node[scale=0.4,draw,circle,fill=black] (v3) at (#1+0.8,#2-1){};
\draw[thick] (v0) to (v1);
\draw[thick] (v0) to (v2);
\draw[thick] (v0) to (v3);
\node[below] at (#1-0.8,#2-1) {#3};
\node[below] at (#1-0.2,#2-1) {#4};
\node[below] at (#1+0.8,#2-1) {#5};
\draw[thick,dotted] (#1+0.15,#2-1) -- ++(0.4,0);
}

\newcommand\CTtroisplusDL[5]{
\node[scale=0.4,draw,circle,fill=black] (v0) at (#1,#2){};
\node[scale=0.4,draw,circle,fill=black] (v1) at (#1-0.8,#2-1){};
\node[scale=0.4,draw,circle,fill=black] (v2) at (#1-0.2,#2-1){};
\node[scale=0.4,draw,circle,fill=black] (v3) at (#1+0.8,#2-1){};
\draw[thick,dashed] (v0) to (v1);
\draw[thick,dashed] (v0) to (v2);
\draw[thick,dashed] (v0) to (v3);
\draw[thick,dotted] (#1+0.15,#2-1) -- ++(0.4,0);
\node[below] at (#1-0.8,#2-1) {#3};
\node[below] at (#1-0.2,#2-1) {#4};
\node[below] at (#1+0.8,#2-1) {#5};
}

\newcommand\CTquatreL[6]{
\node[scale=0.4,draw,circle,fill=black] (v0) at (#1,#2){};
\node[scale=0.4,draw,circle,fill=black] (v1) at (#1-0.3,#2-1){};
\node[scale=0.4,draw,circle,fill=black] (v2) at (#1-0.9,#2-1){};
\node[scale=0.4,draw,circle,fill=black] (v3) at (#1+0.3,#2-1){};
\node[scale=0.4,draw,circle,fill=black] (v4) at (#1+0.9,#2-1){};
\draw[thick] (v0) to (v1);
\draw[thick] (v0) to (v2);
\draw[thick] (v0) to (v3);
\draw[thick] (v0) to (v4);
\node[below] at (#1-0.9,#2-1) {#3};
\node[below] at (#1-0.3,#2-1) {#4};
\node[below] at (#1+0.3,#2-1) {#5};
\node[below] at (#1+0.9,#2-1) {#5};
}

\newcommand\CTquatreplusL[6]{
\node[scale=0.4,draw,circle,fill=black] (v0) at (#1,#2){};
\node[scale=0.4,draw,circle,fill=black] (v1) at (#1-1.1,#2-1){};
\node[scale=0.4,draw,circle,fill=black] (v2) at (#1-0.5,#2-1){};
\node[scale=0.4,draw,circle,fill=black] (v3) at (#1+0.1,#2-1){};
\node[scale=0.4,draw,circle,fill=black] (v4) at (#1+1.1,#2-1){};
\draw[thick] (v0) to (v1);
\draw[thick] (v0) to (v2);
\draw[thick] (v0) to (v3);
\draw[thick] (v0) to (v4);
\draw[thick,dotted] (#1+0.45,#2-1) -- ++(0.4,0);
\node[below] at (#1-1.1,#2-1) {#3};
\node[below] at (#1-0.5,#2-1) {#4};
\node[below] at (#1+0.1,#2-1) {#5};
\node[below] at (#1+1.1,#2-1) {#5};
}



\title{On the Broadcast Independence Number\\ of Locally Uniform 2-Lobsters}

\author{Messaouda AHMANE~\thanks{Faculty of Mathematics, Laboratory L'IFORCE, University of Sciences and Technology
Houari Boumediene (USTHB), B.P.~32 El-Alia, Bab-Ezzouar, 16111 Algiers, Algeria.}
\and Isma BOUCHEMAKH~\footnotemark[1]
\and \'Eric SOPENA~\thanks{Univ. Bordeaux, Bordeaux INP, CNRS, LaBRI, UMR5800, F-33400 Talence, France.}
}

\maketitle

\abstract{
Let~$G$ be a simple undirected graph.
A broadcast on~$G$ is
a function $f : V(G)\rightarrow\NNNNN$ such that $f(v)\le e_G(v)$ holds for every vertex~$v$ of~$G$, 
where $e_G(v)$ denotes the eccentricity of~$v$ in~$G$, that is, the maximum distance from~$v$ to any other vertex of~$G$.
The cost of~$f$ is the value $\cost(f)=\sum_{v\in V(G)}f(v)$.
A broadcast~$f$ on~$G$ is independent if for every two distinct vertices $u$ and~$v$ in~$G$, $d_G(u,v)>\max\{f(u),f(v)\}$,
where $d_G(u,v)$ denotes the distance between $u$ and~$v$ in~$G$.
The broadcast independence number of~$G$ is then defined as the maximum cost of an independent broadcast on~$G$.

A caterpillar is a tree such that, after the removal of all leaf vertices, the remaining graph is a non-empty path.
A lobster is a tree such that, after the removal of all leaf vertices, the remaining graph is a caterpillar.
In [M. Ahmane, I. Bouchemakh and E. Sopena.
On the Broadcast Independence Number of Caterpillars.
Discrete Applied Mathematics, in press (2018)], we studied independent broadcasts of caterpillars.
In this paper, carrying on with this line of research, we consider independent broadcasts of lobsters 
and give an explicit formula for the
broadcast independence number of a family of lobsters called locally uniform $2$-lobsters.
}

\medskip

\noindent
{\bf Keywords:} Independence; Broadcast independence; Lobster.

\noindent
{\bf MSC 2010:} 05C12, 05C69.

\section{Introduction}
\label{sec:intro}

All the graphs we consider in this paper are simple and loopless undirected graphs. 
We denote by $V(G)$ and $E(G)$ the set of vertices and the set of edges of a graph~$G$, respectively.

For any two vertices $u$ and $v$ of $G$,  the \emph{distance} $d_G(u,v)$ between $u$ and $v$ in $G$
is the length (number of edges) of a shortest path joining $u$ and $v$.
The \emph{eccentricity} $e_G(v)$ of a vertex $v$ in $G$ is
the maximum distance from $v$ to any other vertex of $G$. 
The minimum eccentricity in $G$ is the \emph{radius} $\rad(G)$ of $G$, while the maximum eccentricity in $G$ is the
\emph{diameter} $\diam(G)$ of $G$. 

A function $f : V(G)\rightarrow\{0,\dots,\diam(G)\}$ is a \emph{broadcast on $G$} 
if for every vertex $v$ of $G$, $f(v)\le e_G(v)$.
The value $f(v)$ is  called the \emph{$f$-value of $v$}.
Given a broadcast $f$ on $G$, an \emph{$f$-broadcast vertex} is a vertex $v$ with $f(v)>0$.
The set of all $f$-broadcast vertices is denoted $V_f^+$.
If $u\in V_f^+$ is a broadcast vertex, $v\in V(G)$ and $d_G(u,v)\le f(u)$, we say that \emph{$u$ $f$-dominates $v$}.
In particular, every $f$-broadcast vertex $f$-dominates itself.
The \emph{cost} $\cost(f)$ of a broadcast $f$ on $G$ is given by
$$\cost(f)=\sum_{v\in V(G)}f(v)=\sum_{v\in V_f^+}f(v).$$

A broadcast $f$ on $G$ is a \emph{dominating broadcast} if every vertex of $G$
is $f$-dominated by some vertex of $V_f^+$.
The minimum cost of a dominating broadcast on $G$ is the \emph{broadcast domination number} of $G$,
denoted $\gamma_b(G)$.
A 
broadcast $f$ on $G$ is an \emph{independent broadcast} if every $f$-broadcast vertex 
is $f$-dominated only by itself.
The maximum cost of an independent broadcast on $G$ is the \emph{broadcast independence number} of $G$,
denoted $\beta_b(G)$.
An independent broadcast on $G$ with cost $\beta$ is an independent \emph{$\beta$-broadcast}.
An independent $\beta_b(G)$-broadcast on $G$ is an \emph{optimal} independent broadcast.
Note here that any optimal independent broadcast is necessarily a dominating broadcast.

The notions of broadcast domination and broadcast independence were introduced by D.J.~Erwin
in his Ph.D. thesis~\cite{E01} under the name of \emph{cost domination} and \emph{cost independence}, respectively.
During the last decade,
broadcast domination has been investigated by several authors (see e.g.~\cite{BB,BBF18,BHHM04,BF17,BMGY,BMT13,BS11,BS09,CHM11,DDH09,DEHHH06,E04,GM17,HM14,HL06,HM09,LM15,MR17,MT,MW13,MW15,S08,SK14}),
while independent broadcast domination has attracted much less 
attention (see~\cite{ABS18,BZ14}), until the recent work of Bessy and Rautenbach.
In~\cite{BR18b}, these authors prove that $\beta_b(G)\le 4\alpha(G)$ for every graph $G$,
where $\alpha(G)$ denotes the independence number of $G$, that is, 
the maximum cardinality of an independent set in $G$.
In~\cite{BR18b}, they prove that $\beta_b(G) < 2\alpha(G)$ whenever $G$ has 
girth at least~6 and minimum degree at least~3, or
girth at least~4 and minimum degree at least~5.
Answering questions posed in \cite{H06} and~\cite{DEHHH06},
they prove in~\cite{BR18a} that deciding whether $\beta_b(G)\ge k$ 
for a given planar graph with maximum degree four and a given positive integer $k$ 
is an NP-complete problem,
and, using an approach based on dynamic programming,
they prove that determining the value of $\beta_b(T)$ for a tree $T$ of order $n$
can be done in time $O(n^9)$.

\medskip

Our goal, initiated in~\cite{ABS18}, is to give explicit formulas for $\beta_b(T)$,
whenever $T$ belongs to some particular subclass of trees, that can be computed
in (hopefully) linear time.
Recall that a \emph{caterpillar} is a tree such that deleting all its pendent
vertices leaves a simple path, called the \emph{spine} of the caterpillar. 
A \emph{lobster} is then a tree such that deleting all its pendent
vertices leaves a caterpillar. 
The spine of such a lobster is the spine of the
so-obtained caterpillar. 
A vertex belonging to the spine of a caterpillar, or of a lobster,
is called a \emph{spine-vertex} and an \emph{internal} spine-vertex
if it is not an end vertex of the spine.
The \emph{length} of a lobster $L$ is the length (number of edges) of its spine.

Note that if $L$ is a lobster of length~0, then the unique spine-vertex of $L$
must be of degree at least~2, since otherwise, deleting all leaves of $L$ would
leave a single edge, which is not a caterpillar.
Hence, $\diam(L)=k+4$ for every lobster $L$ of length~$k$.

In~\cite{ABS18}, we gave an explicit formula for the broadcast independence number
of caterpillars having no two consecutive internal spine vertices of degree~2.
The aim of this paper is to pursue the study of independent broadcasts of trees by
considering the case of locally uniform $2$-lobsters.

Let $G$ be a graph  and $A\subset V(G)$, $|A|\ge 2$, be a set of pairwise antipodal vertices in $G$,
that is, at distance $\diam(G)$ from each other.
The function $f$ defined by $f(u)=\diam(G)-1$ for every vertex $u\in A$,
and $f(v)=0$ for every vertex $v\not\in A$,
is clearly an independent $|A|(\diam(G)-1)$-broadcast on $G$. 

\begin{observation}[Dunbar {\it et al.}\cite{DEHHH06}]
For every graph $G$ of order at least 2
and every set $A\subset V(G)$, $|A|\ge 2$, of pairwise antipodal vertices in $G$, 
$\beta_b(G)\ge |A|(\diam(G)-1)\ge 2(\diam(G)-1)$.
\label{obs:2(d-1)}
\end{observation}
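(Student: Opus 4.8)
The plan is to verify that the explicit broadcast $f$ described immediately before the statement already does the job, so no further construction is required. Recall that $f$ is defined by $f(u)=\diam(G)-1$ for each $u\in A$ and $f(v)=0$ for each $v\notin A$. The argument then reduces to checking three things: that $f$ is a legitimate broadcast, that it is independent, and that its cost equals $|A|(\diam(G)-1)$.

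First I would confirm that $f$ respects the eccentricity constraint $f(v)\le e_G(v)$ for every vertex $v$. For $v\notin A$ this is immediate, since $f(v)=0$. For $u\in A$, the hypotheses that the vertices of $A$ are pairwise antipodal and that $|A|\ge 2$ guarantee the existence of some $w\in A$ with $w\ne u$ and $d_G(u,w)=\diam(G)$; hence $e_G(u)\ge\diam(G)>\diam(G)-1=f(u)$, so the constraint holds.

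Next I would verify independence. The set of $f$-broadcast vertices is exactly $V_f^+=A$, so it suffices to show that no vertex of $A$ is $f$-dominated by a different vertex of $A$. For any two distinct $u,w\in A$ we have $d_G(u,w)=\diam(G)$, while $\max\{f(u),f(w)\}=\diam(G)-1$; therefore $d_G(u,w)>\max\{f(u),f(w)\}$, and neither $u$ nor $w$ $f$-dominates the other. Consequently each broadcast vertex is $f$-dominated only by itself, and $f$ is an independent broadcast.

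Finally, the cost is $\cost(f)=\sum_{u\in A}(\diam(G)-1)=|A|(\diam(G)-1)$, which gives $\beta_b(G)\ge|A|(\diam(G)-1)$ by the definition of the broadcast independence number, and the second inequality is then merely the bound $|A|\ge 2$. There is no genuine obstacle here: the only point demanding even slight attention is the independence check, where one must invoke the strict inequality $\diam(G)>\diam(G)-1$ to ensure that antipodal broadcast vertices at the shared $f$-value do not dominate one another; everything else follows directly from the definitions.
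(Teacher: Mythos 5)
Your proof is correct and follows exactly the route the paper intends: it verifies that the explicit broadcast $f$ defined just before the statement (value $\diam(G)-1$ on $A$, zero elsewhere) is a valid independent broadcast of cost $|A|(\diam(G)-1)$, which the paper asserts without detail by saying $f$ ``is clearly an independent $|A|(\diam(G)-1)$-broadcast.'' You have simply filled in the eccentricity, independence, and cost checks that the paper leaves implicit.
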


\medskip

In this paper, we determine the broadcast independence
number of locally uniform $2$-lobsters.
The paper is organised as follows. We introduce in the next section the
main definitions and a few preliminary results.
We then consider in Section~\ref{sec:2-lobsters} the case of locally uniform
$2$-lobsters and prove our main result, which gives an explicit
formula for the broadcast independence number of such lobsters.
We then propose some concluding remarks in Section~\ref{sec:discussion}.


\section{Preliminaries}
\label{sec:preliminaries}

Let $G$ be a graph and $H$ be a subgraph of $G$.
Since $d_H(u,v)\ge d_G(u,v)$ for every two vertices $u,v\in V(H)$,
every independent broadcast $f$ on $G$ satisfying $f(u)\le e_H(u)$ for every
vertex $u\in V(H)$ is an independent broadcast on $H$.
Hence we have:

\begin{observation}
If $H$ is a subgraph of $G$ and $f$ is an independent broadcast on $G$ 
satisfying $f(u)\le e_H(u)$ for every vertex $u\in V(H)$, then the restriction
$f_H$ of $f$ to $V(H)$ is an independent broadcast on $H$.
\label{obs:subgraph}
\end{observation}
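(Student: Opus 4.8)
The plan is to verify directly that the restriction $f_H = f|_{V(H)}$ satisfies the two defining properties of an independent broadcast on $H$: first, that it is a broadcast on $H$, i.e. that $f_H(u)\le e_H(u)$ for every $u\in V(H)$; and second, that it is independent, i.e. that $d_H(u,v)>\max\{f_H(u),f_H(v)\}$ for every two distinct vertices $u,v\in V(H)$.

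The first property is immediate from the hypothesis: for every $u\in V(H)$ we have $f_H(u)=f(u)\le e_H(u)$ by assumption, and since $0\le f(u)\le e_H(u)\le\diam(H)$, the values taken by $f_H$ lie in the admissible range, so $f_H$ is indeed a broadcast on $H$. It is worth stressing that this is exactly where the hypothesis $f(u)\le e_H(u)$ is used and cannot be dropped: passing from $H$ to the supergraph $G$ can only decrease eccentricities, so the broadcast condition $f(u)\le e_G(u)$ that holds on $G$ does not by itself guarantee $f(u)\le e_H(u)$.

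For the second property, I would fix two distinct vertices $u,v\in V(H)\subseteq V(G)$ and simply chain the distance inequality recalled just above the observation with the independence of $f$ on $G$:
$$d_H(u,v)\ge d_G(u,v)>\max\{f(u),f(v)\}=\max\{f_H(u),f_H(v)\}.$$
Here the first inequality is the monotonicity $d_H\ge d_G$ of distances under passage to a subgraph, and the strict inequality is the independence of $f$ on $G$ applied to $u,v$. This establishes that $f_H$ is independent on $H$, completing the proof.

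There is no genuine obstacle here; the one point requiring a moment's care is that the two inequalities must point in compatible directions. Independence is a \emph{lower} bound on pairwise distances, and distances can only \emph{grow} when we restrict to the subgraph $H$, so independence is automatically preserved; the eccentricity bound, by contrast, is an \emph{upper} bound, and eccentricities can shrink in the larger graph $G$, which is precisely why it must be imposed as a hypothesis rather than inherited for free.
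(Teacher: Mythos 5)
Your proof is correct and follows exactly the same route as the paper, which justifies the observation in one line via the inequality $d_H(u,v)\ge d_G(u,v)$ for subgraphs together with the hypothesis $f(u)\le e_H(u)$. Your write-up merely spells out the two defining conditions more explicitly and correctly identifies why the eccentricity bound must be assumed rather than inherited.
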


For any independent broadcast $f$ on a graph $G$, and any subgraph $H$ of $G$,
we denote by $f^*(H)$ the \emph{$f$-value} of $H$ defined as
$$f^*(H)=\sum_{v\in V(H)}f(v).$$
Observe that $f^*(G)=\cost(f)$.

\medskip

The following lemma shows that, for any graph $G$ of order at least~3,
if $v$ is a vertex of $G$ having at least one pendent neighbour,
then no independent broadcast $f$ on $G$ with $f(v)>0$ can be optimal.

\begin{lemma}
Let $G$ be a graph of order at least~3 and $v$ be a vertex of $G$ having a pendent neighbour~$u$.
If $f$ is an
independent broadcast on $G$ with $f(v)>0$, then
there exists an independent broadcast $f'$ on $G$ with $\cost(f')>\cost(f)$.
\label{lem:branch-0}
\end{lemma}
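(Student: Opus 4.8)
The plan is to construct the improved broadcast $f'$ by transferring the whole broadcast of $v$ onto its pendent neighbour $u$ and, crucially, increasing its value by one unit, thereby gaining one in total cost. Two preliminary facts drive the argument. First, since $u$ is a pendent vertex whose only neighbour is $v$, every vertex $w\neq u$ satisfies $d_G(u,w)=d_G(v,w)+1$; as $G$ has order at least~3 there is some vertex distinct from $u$ and $v$, and taking the maximum of this identity over all $w\neq u$ yields $e_G(u)=e_G(v)+1$. Second, because $f$ is independent and $f(v)>0$, the leaf $u$ cannot carry a positive value: otherwise $u$ and $v$ would be two broadcast vertices at distance $d_G(u,v)=1\le\max\{f(u),f(v)\}$, contradicting independence. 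Hence $f(u)=0$.

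I would then define $f'$ by setting $f'(u)=f(v)+1$, $f'(v)=0$, and $f'(w)=f(w)$ for every other vertex $w$. The value bound is immediate: $f'(u)=f(v)+1\le e_G(v)+1=e_G(u)$, while all other values are unchanged and hence still admissible, so $f'$ is a broadcast. Since $f(u)=0$, the cost changes only through the replacement of $f(v)$ by $f(v)+1$, so $\cost(f')=\cost(f)+1>\cost(f)$, which is the desired strict improvement.

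It remains to check that $f'$ is independent, and this is where the real work lies. The set of $f'$-broadcast vertices is $\big(V_f^+\setminus\{v\}\big)\cup\{u\}$. For a pair of broadcast vertices both lying in $V_f^+\setminus\{v\}$ the independence condition is inherited directly from $f$. The only genuinely new pairs are those of the form $\{u,y\}$ with $y\in V_f^+\setminus\{v\}$; here I would use that $v$ and $y$ are both $f$-broadcast vertices, so independence of $f$ gives $d_G(v,y)\ge\max\{f(v),f(y)\}+1$, and then exploit the distance identity $d_G(u,y)=d_G(v,y)+1$ to obtain
\[
d_G(u,y)\ \ge\ \max\{f(v),f(y)\}+2\ >\ \max\{f(v)+1,\,f(y)\}\ =\ \max\{f'(u),f'(y)\}.
\]
This confirms that $u$ does not dominate $y$ (nor conversely), so $f'$ is independent.

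The main obstacle, and the place the hypotheses are actually used, is precisely this last inequality: the gain of one unit on $u$ is safe only because moving the broadcast from $v$ to its leaf $u$ buys an extra unit of distance to every other broadcast vertex, creating a slack of two against a demand that rises by at most one. The order-at-least-three hypothesis enters through $e_G(u)=e_G(v)+1$, which is what guarantees that the increased value $f(v)+1$ remains an admissible broadcast value on $u$.
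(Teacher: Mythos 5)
Your proposal is correct and uses exactly the same construction as the paper, which defines the very same $f'$ (with $f'(u)=f(v)+1$, $f'(v)=0$, all else unchanged) and simply asserts it is ``clearly'' an independent broadcast of larger cost. Your write-up just supplies the verifications the paper omits, namely $f(u)=0$, $e_G(u)=e_G(v)+1$, and the distance slack $d_G(u,y)=d_G(v,y)+1$ that preserves independence.
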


\begin{proof}
The mapping $f'$
defined by $f'(u)=f(v)+1$, $f'(v)=0$ and $f'(w)=f(w)$ for every vertex $w\in V(G)\setminus\{u,v\}$
is clearly an independent broadcast on $G$ with $\cost(f')>\cost(f)$.
\end{proof}

The following lemma was given in~\cite{ABS18}.
However, we include its proof here for the sake of completeness.

\begin{lemma}
Let $T$ be a tree of order at least~3, and $T'$ be a subtree of $T$ of order at least 2, with root~$r$.
Let $f$ be an optimal independent broadcast on $T$.
If $r$ is an $f$-broadcast vertex, then $T'$ contains at least one other $f$-broadcast vertex.
In particular, this implies that if $T'$ is a subtree of height~$h$, that is, $e_{T'}(r)=h$, then $f(r)<h$.
\label{lem:end-leaves-tree}
\end{lemma}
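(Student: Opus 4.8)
The plan is to argue by contradiction. Suppose $f$ is optimal yet $r$ is the \emph{only} $f$-broadcast vertex of $T'$; I will produce an independent broadcast $f'$ with $\cost(f')>\cost(f)$, contradicting optimality. Write $k=f(r)\ge 1$ and $h=e_{T'}(r)\ge 1$, the latter because $|V(T')|\ge 2$. I fix a vertex $w\in V(T')$ with $d_{T'}(r,w)=h$. Since $T'$ is a rooted subtree of the tree $T$, the root $r$ separates $V(T')$ from $V(T)\setminus V(T')$, so $d_T(r,w)=h$ and every path in $T$ from $w$ to a vertex outside $T'$ runs through $r$; in particular $d_T(w,y)=h+d_T(r,y)$ for every $y\notin V(T')$.

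I would then split into two cases according to whether $f$ has a broadcast vertex outside $T'$. Suppose first that some $y\in V_f^+$ lies outside $T'$ (note $V_f^+\cap V(T')=\{r\}$ by assumption). Because $f$ is independent and $r,y$ are both broadcast vertices, neither dominates the other, so $d_T(r,y)>\max\{f(r),f(y)\}\ge k$, hence $d_T(r,y)\ge k+1$. Define $f'$ by $f'(r)=0$, $f'(w)=k+1$, and $f'(v)=f(v)$ otherwise. This is a valid broadcast, since $e_T(w)\ge d_T(w,y)=h+d_T(r,y)\ge 1+(k+1)=k+2>k+1$. For independence I only need to control the broadcast vertices $w$ and those of $V_f^+\setminus\{r\}$: for any such $y\notin V(T')$ one has $d_T(w,y)=h+d_T(r,y)>d_T(r,y)>f(y)=f'(y)$ and $d_T(w,y)\ge k+2>k+1=f'(w)$, so $w$ and $y$ do not dominate each other, while every pair avoiding $w$ and $r$ is unchanged from $f$. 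Finally $\cost(f')=\cost(f)-k+(k+1)=\cost(f)+1$, the desired contradiction.

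It remains to treat the case $V_f^+=\{r\}$. Here, since any optimal independent broadcast is dominating, $r$ must $f$-dominate every vertex of $T$, forcing $k=f(r)\ge e_T(r)$ and hence $k=e_T(r)$ and $\cost(f)=k$. By Observation~\ref{obs:2(d-1)}, $\cost(f)=\beta_b(T)\ge 2(\diam(T)-1)\ge 2(e_T(r)-1)=2k-2$, whence $k\le 2$. A short analysis, using that $T$ has order at least $3$ and that $T'$ has order at least $2$ (so $r$ cannot be the leaf of a diameter-$2$ star), rules out both $k=1$ and $k=2$, contradicting optimality once more.

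In both cases $f$ is not optimal, so $T'$ must contain a broadcast vertex $y'\ne r$. The final ``in particular'' claim then follows at once: since $f$ is independent, $r$ does not dominate $y'$, whence $f(r)<d_T(r,y')=d_{T'}(r,y')\le e_{T'}(r)=h$. I expect the main obstacle to be the redistribution step: verifying that moving $r$'s value out to the deepest vertex $w$ keeps $f'$ both admissible and independent relies precisely on the fact that $r$ is a cut vertex separating $T'$, and the degenerate configuration $V_f^+=\{r\}$ has to be excluded by a separate counting argument rather than by the construction.
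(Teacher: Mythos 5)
Your argument is essentially the paper's: you derive a contradiction by pushing the value sitting at $r$ down into $T'$, and you dispose of the degenerate case $V_f^+=\{r\}$ via the counting bound of Observation~\ref{obs:2(d-1)}. The organisation differs slightly, to your advantage in the main case: where the paper distinguishes $f(r)<e_{T'}(r)$ (add a second broadcast vertex deep in $T'$ with value $f(r)$, keeping $f(r)$ at $r$) from $e_{T'}(r)\le f(r)<e_{T-(T'-r)}(r)$ (shift the value to a neighbour of $r$ inside $T'$, increased by one), your single move --- zero out $r$ and put $k+1$ on a deepest vertex $w$ of $T'$ --- covers both situations at once, and your verification of admissibility and independence is sound, granted the same implicit reading of ``subtree rooted at $r$'' that the paper uses, namely that $r$ separates $T'-r$ from the rest of $T$ (this is what makes $d_T(w,y)=h+d_T(r,y)$ true, and the paper needs it just as much, e.g.\ for $e_T(v)=e_T(r)+1$ in its third case). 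The one real soft spot is your Case B: after correctly reducing to $k=f(r)=e_T(r)\le 2$, you defer to ``a short analysis'' claimed to rule out $k=1$ and $k=2$. It rules out $k=1$, but not $k=2$: for $T=T'=P_3$ with $r$ an endpoint, $f(r)=2$ is an optimal independent broadcast whose only broadcast vertex is $r$, so the configuration cannot be excluded and your parenthetical reason (``$r$ cannot be the leaf of a diameter-$2$ star'') does not hold. You should not feel too bad about this: the paper's own proof has the identical gap (its assertion $\cost(f)<2(\diam(T)-1)$ fails exactly when $\diam(T)=2$), and the exception is harmless for the intended application since locally uniform lobsters have diameter at least~$4$; but a self-contained write-up would need either the extra hypothesis $\diam(T)\ge 3$ or an explicit treatment of the star case.
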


\begin{proof}
Suppose to the contrary that $f(r)>0$ and $f(u)=0$ for every vertex $u\in V(T')\setminus\{r\}$.
Let $t'=e_{T'}(r)$ and $\overline{t'}=e_{T-(T'-r)}(r)$. 

If $f(r)<t'$, the independent broadcast $f'$ given by $f'(v)=f(r)$ for some vertex $v$ in $T'$ with $d_{T'}(r,v)=t'$
and $f'(u)=f(u)$ for every vertex $u\in V'(T)\setminus\{v\}$ is such that $\cost(f')=\cost(f)+f(r)$, 
contradicting the optimality of $f$.

If $f(r)\ge\overline{t'}$, then $r$ is the unique $f$-broadcast vertex, which implies
$\cost(f)<2(\diam(T)-1)$, again contradicting the optimality of $f$ by Observation~\ref{obs:2(d-1)}.

Hence $\overline{t'}>f(r)\ge t'$.
Let now $v$ be any neighbour of $r$ in $T'$.
Since $\overline{t'}>f(r)\ge t'$, we have $e_T(v)=e_T(r)+1=\overline{t'}+1>f(r)+1$.
The function $f'$ defined by $f'(r)=0$, $f'(v)=f(r)+1$
and $f'(u)=f(u)$ for every vertex $u\in V(T)\setminus\{r,v\}$ is therefore
an independent broadcast on $T$ with $\cost(f')=\cost(f)+1$, contradicting the optimality of $f$.

This completes the proof.
\end{proof}

\medskip

In order to formally define locally uniform lobsters,
and then locally uniform $2$-lobsters,
we introduce some notation.

\begin{notation}[$\SSS_1,\ \SSS_2$]
{\rm A tree $T$ rooted at a vertex $r$ is of type $\SSS_1$ if every leaf of $T$
is at distance $1$ from $r$, which means that $T$ is a star with center $r$.
A tree $T$ rooted at a vertex $r$ is of type $\SSS_2$ if every leaf of $T$
is at distance $2$ from $r$.
}\end{notation}

Let $L$ be a lobster with spine $v_0\dots v_k$, $k\ge 0$.
The \emph{subtree} of $v_i$, $0\le i\le k$, denoted $S_i$, is the maximal subtree
of $L$ rooted at $v_i$ that contains no spine-vertex except $v_i$.
A \emph{spine-subtree} of $L$ is a subtree of some $v_i$, $0\le i\le k$.
A \emph{branch} of a spine-subtree $S_i$ is a maximal subtree of $S_i$ containing $v_i$
and exactly one neighbour of $v_i$. Therefore, if $v_i$ has degree $d$ in $S_i$, then
$S_i$ has $d$ distinct branches.

A locally uniform lobster is then defined as follows.

\begin{definition}[Locally uniform lobster]
{\rm A lobster $L$ is \emph{locally uniform} if every spine-subtree of $L$
is of type either $\SSS_1$ or $\SSS_2$. In other words, all branches of any
spine-vertex have the same depth.
}\end{definition}

The following observation directly follows from this definition.

\begin{observation}
If $L$ is a locally uniform lobster with spine $v_0\dots v_k$, $k\ge 0$,
then both spine-subtrees $S_0$ and $S_k$ are of type $\SSS_2$.
\label{obs:ends}
\end{observation}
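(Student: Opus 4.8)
The plan is to prove Observation~\ref{obs:ends} directly from the definitions, by examining what happens at the spine-endvertices $v_0$ and $v_k$. First I would recall the structural constraint noted just before Observation~\ref{obs:2(d-1)}: a lobster of length~$0$ must have its unique spine-vertex of degree at least~$2$, because otherwise deleting all leaves would leave a single edge rather than a (non-empty path) caterpillar. The analogous point for the endvertices of a longer spine is the crux: $v_0$ (and symmetrically $v_k$) cannot have all of its branches being single edges, for then $v_0$ would become a leaf after the first leaf-deletion pass, so it would not survive into the spine of the caterpillar obtained from $L$.

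The key steps, carried out for $v_0$ (the argument for $v_k$ being identical by symmetry), are as follows. By local uniformity, the spine-subtree $S_0$ is of type $\SSS_1$ or of type $\SSS_2$, so all branches of $v_0$ have the same depth, either~$1$ or~$2$. I would argue that type~$\SSS_1$ is impossible at a spine-endvertex. If $S_0$ were of type $\SSS_1$, then every neighbour of $v_0$ inside $S_0$ would be a pendent vertex (a leaf of $L$). Since $v_0$ is an endvertex of the spine, its only non-leaf neighbour is $v_1$ (or $v_0$ has no spine-neighbour at all when $k=0$). I would then trace through the two-step leaf-removal process that defines a lobster: removing all leaves of $L$ must leave a caterpillar whose spine is $v_0\dots v_k$, and in particular $v_0$ must still be present and must be a spine-vertex of that caterpillar. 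But if all branches of $v_0$ have depth~$1$, then after deleting the leaves of $L$, the vertex $v_0$ retains at most its single spine-neighbour, making $v_0$ a pendent vertex of the remaining caterpillar; this contradicts $v_0$ being a spine-vertex (an internal structural vertex surviving the caterpillar's own leaf-deletion), exactly as in the length-$0$ remark. Hence $S_0$ must be of type $\SSS_2$, and likewise $S_k$.

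The main obstacle I anticipate is purely definitional bookkeeping rather than any real difficulty: one has to be careful about the precise meaning of ``spine-vertex'' in terms of the iterated leaf-deletion, and about degenerate cases such as $k=0$ (where $v_0=v_k$ is the single spine-vertex) and the boundary behaviour at the ends where $v_0$ has only one spine-neighbour. I would handle $k=0$ by directly invoking the remark preceding the observation, which already forces the unique spine-subtree to be of type $\SSS_2$ rather than $\SSS_1$, and treat $k\ge 1$ uniformly with the argument above. No computation is needed; the entire content is that a depth-$1$ spine-subtree at an endvertex would cause that endvertex to disappear from the caterpillar's spine, violating the defining property of a lobster.
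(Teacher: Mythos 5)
Your proof is correct and follows essentially the same route as the paper: the paper's justification is precisely the one-sentence remark after Observation~\ref{obs:ends}, namely that a type $\SSS_1$ end-subtree would make $v_0$ (or $v_k$) a leaf of the caterpillar obtained by deleting the leaves of $L$, so it could not survive into the spine. Your additional care with the $k=0$ case and the leaf-deletion bookkeeping only makes explicit what the paper leaves implicit.
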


Indeed, if $S_0$ or $S_k$ is of type $\SSS_1$, then $v_0$ or $v_k$
is a leaf of the caterpillar obtained by deleting all leaves of $L$,
which implies that $v_0\dots v_k$ is not the spine of $L$, a contradiction.

Observe that Lemma~\ref{lem:end-leaves-tree} implies in particular the following result
for locally uniform lobsters.

\begin{corollary}
If $L$ is a locally uniform lobster with spine $v_0\dots v_k$, $k\ge 0$,
and $f$ is an optimal independent broadcast on $L$, then
the two following conditions hold.
\begin{enumerate}
\item If $v$ is a vertex having a pendent neighbour, then $f(v)=0$.
\item For every $i$, $0\le i\le k$, $f(v_i)=0$ if $S_i$ is of type $\SSS_1$, and $f(v_i)\le 1$ if 
$S_i$ is of type $\SSS_2$.
\end{enumerate}
\label{cor:lobster}
\end{corollary}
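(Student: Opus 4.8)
The plan is to derive Corollary~\ref{cor:lobster} as a straightforward consequence of the two lemmas established earlier, applying them to appropriately chosen vertices and subtrees of the locally uniform lobster~$L$.

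First I would prove condition~(1). Let $v$ be any vertex of $L$ having a pendent neighbour, and suppose for contradiction that $f(v)>0$. Since $L$ is a lobster of length $k\ge 0$, its diameter is $k+4\ge 4$, so $L$ has order at least~$5\ge 3$. Thus the hypotheses of Lemma~\ref{lem:branch-0} are satisfied: $L$ is a graph of order at least~3, $v$ has a pendent neighbour, and $f$ is an independent broadcast with $f(v)>0$. Lemma~\ref{lem:branch-0} then produces an independent broadcast $f'$ with $\cost(f')>\cost(f)$, contradicting the optimality of $f$. Hence $f(v)=0$, which is exactly condition~(1).

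Next I would prove condition~(2) by applying Lemma~\ref{lem:end-leaves-tree} with $T=L$ and $T'=S_i$, the spine-subtree rooted at $v_i$. One must first check that $S_i$ is a subtree of order at least~2: since $L$ has length $k$, each spine-vertex $v_i$ has at least one non-spine neighbour (of type $\SSS_1$ or $\SSS_2$, by local uniformity and the fact that $v_0\dots v_k$ is genuinely the spine), so $S_i$ contains at least one vertex besides $v_i$. Now if $S_i$ is of type $\SSS_1$, then every leaf of $S_i$ is at distance~1 from $v_i$, so $e_{S_i}(v_i)=h=1$; Lemma~\ref{lem:end-leaves-tree} gives $f(v_i)<1$, i.e.\ $f(v_i)=0$. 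If instead $S_i$ is of type $\SSS_2$, then $e_{S_i}(v_i)=h=2$, and the same lemma yields $f(v_i)<2$, i.e.\ $f(v_i)\le 1$. This establishes condition~(2).

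I expect no serious obstacle here, since both constituent lemmas do the real work; the corollary is essentially a matter of verifying that their hypotheses hold in the locally uniform lobster setting. The only point requiring mild care is the bookkeeping on orders and eccentricities — confirming that $L$ has order at least~3 (so that Lemma~\ref{lem:branch-0} applies) and that each spine-subtree $S_i$ has order at least~2 (so that Lemma~\ref{lem:end-leaves-tree} applies), together with reading off $e_{S_i}(v_i)\in\{1,2\}$ directly from the definitions of types $\SSS_1$ and $\SSS_2$. For condition~(1) one should also note that a vertex with a pendent neighbour need not be a spine-vertex, but this causes no difficulty because Lemma~\ref{lem:branch-0} is stated for an arbitrary vertex of the graph.
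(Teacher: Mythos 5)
Your proof is correct and follows essentially the same route as the paper, which presents this corollary as an immediate consequence of the preliminary lemmas (the paper invokes Lemma~\ref{lem:end-leaves-tree}, which also yields condition~(1) by taking $T'$ to be the edge joining $v$ to its pendent neighbour, while you use Lemma~\ref{lem:branch-0} for that part — both work equally well). The bookkeeping you carry out (order of $L$ at least~3, each $S_i$ of order at least~2, $e_{S_i}(v_i)\in\{1,2\}$ by the definitions of $\SSS_1$ and $\SSS_2$) is exactly what is needed.
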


Moreover, the following lemma says that for every 
optimal independent broadcast on 
a locally uniform lobster with spine $v_0\dots v_k$, $k\ge 0$,
both the spine-subtrees $S_0$ and $S_k$ contain an $f$-broadcast vertex.

\begin{lemma}
If $L$ is a locally uniform lobster with spine $v_0\dots v_k$, $k\ge 0$,
and $f$ is an optimal independent broadcast on $L$, then
$f^*(S_0)>0$ and $f^*(S_k)>0$.
\label{lem:broadcast-vertex-in-S0-Sk}
\end{lemma}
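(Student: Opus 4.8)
The plan is to argue by contradiction, exploiting the fact that $S_0$ and $S_k$ are both of type $\SSS_2$ (Observation~\ref{obs:ends}) and that they sit at the very ends of the spine, so that the vertices of $S_0$ are precisely the vertices realising the diameter together with the vertices of $S_k$. Suppose for contradiction that $f^*(S_0)=0$, i.e.\ $f$ assigns $0$ to every vertex of the end spine-subtree $S_0$. The strategy is to show that $S_0$ then contains a vertex that is ``wasted''—specifically a leaf $u$ at distance $2$ from $v_0$—and to modify $f$ into a strictly cheaper-to-beat broadcast $f'$ by placing a large value on such a leaf, contradicting the optimality of $f$ via Observation~\ref{obs:2(d-1)} and the cost comparison.

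First I would recall that since $S_0$ is of type $\SSS_2$, there is a leaf $u$ with $d_L(u,v_0)=2$, and by the diameter computation $\diam(L)=k+4$ stated in the introduction, $u$ is at distance $k+4$ from each leaf at distance $2$ from $v_k$ in $S_k$; in particular $e_L(u)=\diam(L)=k+4$. The key point is that $f^*(S_0)=0$ means no vertex of $S_0$ dominates itself by its own positive value, so every vertex of $S_0$ must be $f$-dominated by some broadcast vertex lying \emph{outside} $S_0$, i.e.\ on the $v_1,\dots,v_k$ side. I would then locate the broadcast vertex $w$ that dominates $u$; since $d_L(w,u)\ge d_L(v_1,u)=3$ and $w$ must satisfy $f(w)\ge d_L(w,u)$, this forces $f(w)$ to be reasonably large, and I would track how much ``reach'' this costs.

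The cleanest route is then a direct exchange argument: define $f'$ by setting $f'(u)=f(u)+$ (some positive amount) on the wasted leaf $u\in S_0$, while zeroing out or reducing the far-reaching broadcast vertex $w$ that was covering $S_0$ from outside, keeping $f$ unchanged elsewhere; one checks $f'$ remains independent (the new value on $u$ reaches only vertices that were already reached, and removing $w$ only decreases domination elsewhere, which can be repaired at no net loss) and that $\cost(f')>\cost(f)$, contradicting optimality. The symmetric argument handles $f^*(S_k)=0$.

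The main obstacle I expect is verifying independence of the modified broadcast $f'$: increasing $f(u)$ on a leaf of $S_0$ must not cause $u$ to $f'$-dominate some other broadcast vertex, and the deleted or reduced vertex $w$ must not leave a previously dominated broadcast vertex suddenly dominated by a \emph{different} broadcast vertex. Handling these interactions requires carefully using Corollary~\ref{cor:lobster}, which bounds the $f$-values on vertices with pendent neighbours and on the spine-vertices of $\SSS_1$- and $\SSS_2$-subtrees, so that the values along the spine are small enough for the distance inequalities to go through. I anticipate that the precise bookkeeping of which vertex dominates $u$, and the exact amount by which to raise $f'(u)$, is where the real work lies; everything else is a routine cost comparison leaning on Observation~\ref{obs:2(d-1)}.
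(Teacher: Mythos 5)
Your overall strategy---assume $f^*(S_0)=0$, locate the external broadcast vertex $w$ covering a leaf $u$ of $S_0$, and trade $w$'s value for a larger value on $u$---is exactly the route the paper takes. But your write-up stops precisely where the proof actually happens, and the pieces you defer are not routine. Three things are missing. First, you need $d_L(w,u)\ge 4$, not just the $\ge 3$ you state: the only distance-$3$ candidate is $v_1$, which is excluded because Corollary~\ref{cor:lobster} forces $f(v_1)\le 1<3$. The resulting bound $f(w)\ge 4$ is what makes $w$ the \emph{unique} vertex $f$-dominating the leaves of $S_0$ (and, again by Corollary~\ref{cor:lobster}, forces $w$ to be a leaf of $L$); that uniqueness is what makes the independence check of the modified broadcast tractable, whereas "zeroing out or reducing $w$" without it leaves other broadcast vertices interacting with $u$ in ways you cannot control. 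Second, the "some positive amount" must be pinned down: the paper sets $f'(u)=f(w)+d_L(u,w)-2$, so the gain is $d_L(u,w)-2\ge 2$, and the offset of $2$ is exactly what keeps $f'(u)$ a legal broadcast value and keeps $u$ from dominating the surviving broadcast vertices. Third, before performing the exchange one must dispose of the degenerate case in which $w$'s reach is so large (relative to $\diam(L)$ and the type of the spine-subtree containing $w$) that $w$ is the \emph{only} broadcast vertex of $L$; there the exchange is unavailable, and the contradiction comes instead from Observation~\ref{obs:2(d-1)}, since a single broadcast vertex costs less than $2(\diam(L)-1)$. You correctly anticipate that "the precise bookkeeping\ldots is where the real work lies," but anticipating the work is not doing it: as written, the proposal is an outline of the paper's argument with its decisive steps left open.
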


\begin{proof}
It is enough to prove the result for $S_0$.
Assume to the contrary that $f^*(S_0)=0$, and let $v$ be a vertex of $L$ that
$f$-dominates the leaves of $S_0$.
Since $f^*(S_0)=0$, we necessarily have $f(v)\geq 4$
which implies that $v$ is unique.
By Corollary~\ref{cor:lobster}, $v$ must be a leaf of $L$.
Let $\ell$ be any leaf of $S_0$.

Let $S$ denote the spine-subtree containing $v$.
If $S$ is of type $\SSS_1$ and $f(v)+d_L(\ell,v)>\diam(L)+1$, 
or $S$ is of type $\SSS_2$ and $f(v)+d_L(\ell,v)>\diam(L)+3$, 
then $v$ is the unique $f$-broadcast
vertex of $L$, which contradicts the optimality of $f$ by Observation~\ref{obs:2(d-1)}.
We now define the mapping $f'$ on $V(L)$ given by
$f'(u)=f(u)$ for every vertex $u\notin\{v,\ell\}$,
$f'(v)=0$, and
$f'(\ell)=f(v)+d_L(\ell,v)-2$.
The mapping $f'$ is clearly an independent broadcast on $L$ and,
since $d_L(v,\ell)\geq 4$, we get $\cost(f') > \cost(f)$,
contradicting the optimality of~$f$.
\end{proof}


\begin{figure}
\begin{center}
\begin{tikzpicture}
\draw[thick] (0.1,0) to (11,0);
\CTzero{0.1}{0} \CTun{-0.3}{-1}  \CTdeux{0.5}{-1}
\draw[thick] (0.1,0) to (-0.3,-1);
\draw[thick] (0.1,0) to (0.5,-1);
\CTtrois{2}{0}
\CTun{3.5}{-1}  \CTun{4.1}{-1} 
\CTun{4.7}{-1}  \CTdeux{5.6}{-1}   \CTtrois{7.1}{-1}
\CTzero{5.3}{0}
\draw[thick] (5.3,0) to (3.5,-1);
\draw[thick] (5.3,0) to (4.1,-1);
\draw[thick] (5.3,0) to (4.7,-1);
\draw[thick] (5.3,0) to (5.6,-1);
\draw[thick] (5.3,0) to (7.1,-1);
\CTdeux{8.6}{0}
\CTdeux{9.8}{0}
\CTdeux{11}{0}  \CTun{10.7}{-1}  \CTun{11.3}{-1}
\end{tikzpicture}
\caption{\label{fig:sample-2-lobster}A sample locally uniform $2$-lobster.}
\end{center}

\end{figure}
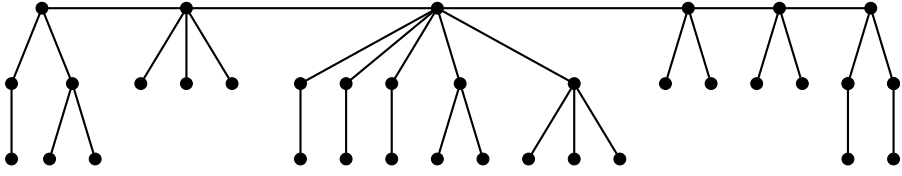

We now define $2$-lobsters and locally uniform $2$-lobsters.


\begin{definition}[$2$-lobster]
{\rm A lobster $L$ 
is a \emph{$2$-lobster} if every spine-subtree of $L$ has at least two branches.
}\end{definition}

\begin{definition}[Locally uniform $2$-lobster]
{\rm A \emph{locally uniform $2$-lobster} is a $2$-lobster which is locally uniform
(see Figure~\ref{fig:sample-2-lobster}).
}\end{definition}

Due to their special structure, we can
improve the lower bound on the broadcast independence number of 
locally uniform $2$-lobsters of length $k\ge 1$.

\begin{observation}
For every locally uniform $2$-lobster $L$ of length $k\ge 1$, 
$$\beta_b(L)\ge 2(k-1) + 12 = 2(\diam(L)-1) + 4.$$
\label{obs:lower-bound-2-lul}
\end{observation}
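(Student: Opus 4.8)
The plan is to prove the bound by exhibiting an explicit independent broadcast $f$ on $L$ whose cost equals $2(k-1)+12$; since $\beta_b(L)$ is the maximum cost of an independent broadcast, this suffices. I would exploit the two structural facts guaranteed for a locally uniform $2$-lobster: by Observation~\ref{obs:ends} the end spine-subtrees $S_0$ and $S_k$ are of type $\SSS_2$, and, $L$ being a $2$-lobster, each of them has at least two branches. Hence $S_0$ (resp.\ $S_k$) contains two leaves lying at distance $2$ from $v_0$ (resp.\ $v_k$) and in distinct branches, so at distance $4$ from each other.

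For the construction, I select in $S_0$ two leaves $\ell_0,\ell_0'$ in distinct branches and set $f(\ell_0)=f(\ell_0')=3$, and symmetrically choose $\ell_k,\ell_k'$ in $S_k$ with value $3$. In each internal spine-subtree $S_i$, $1\le i\le k-1$, I pick an arbitrary leaf $\ell_i$ and set $f(\ell_i)=2$; all other vertices receive value $0$. (When $k=1$ there are no internal subtrees and $f$ reduces to the four end-leaves of value $3$.) The cost is $4\cdot 3+(k-1)\cdot 2=2(k-1)+12=2(\diam(L)-1)+4$, as desired, so the whole argument reduces to checking that $f$ is a valid independent broadcast. Validity as a broadcast is immediate: every broadcast vertex is a leaf whose eccentricity is at least its distance to an opposite end-leaf, which is at least $4$, so its value ($2$ or $3$) never exceeds its eccentricity.

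The substance is the independence condition $d_L(u,w)>\max\{f(u),f(w)\}$ for all distinct broadcast vertices $u,w$, which I would organise by the location of the pair. Writing $\mathrm{depth}(S_i)\in\{1,2\}$ for the common depth of the branches of $S_i$: (i) two leaves inside the same end-subtree are at distance exactly $4>3$; (ii) an end-leaf of $S_0$ and one of $S_k$ are at distance $2+k+2=k+4>3$ for $k\ge 1$; (iii) an end-leaf (value $3$) and an internal leaf $\ell_i$ (value $2$) are at distance $2+i+\mathrm{depth}(S_i)\ge 4>3$ from $S_0$'s side, and $2+(k-i)+\mathrm{depth}(S_i)\ge 4>3$ from $S_k$'s side; (iv) two internal leaves $\ell_i,\ell_j$ with $i<j$ (both value $2$) are at distance $\mathrm{depth}(S_i)+(j-i)+\mathrm{depth}(S_j)\ge 3>2$. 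Since the independence condition simply says that neither broadcast vertex dominates the other, these four estimates settle all pairs.

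The delicate point — and the reason the values $3$ and $2$ are forced to be exactly these — is to make cases (iii) and (iv) tight yet valid \emph{simultaneously}, regardless of the $\SSS_1/\SSS_2$ types of the internal subtrees. A type-$\SSS_1$ internal subtree pushes its leaf as close as possible to its neighbours, producing the extremal distances $4$ in (iii) (attained at $i=1$ or $i=k-1$) and $3$ in (iv) (attained at two consecutive type-$\SSS_1$ subtrees). Thus the internal value must not exceed $2$ to stay independent from an adjacent $\SSS_1$ neighbour, while the end value must not exceed $3$ to stay independent from the second leaf of the same end-subtree; checking that these worst-case distances still \emph{strictly} exceed the relevant $\max$-values is the only nonroutine step, the estimates in (i), (ii) and the eccentricity bound being straightforward.
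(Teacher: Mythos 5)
Your proof is correct and follows essentially the same strategy as the paper: exhibit an explicit independent broadcast with value $3$ on leaves (in distinct branches) of the two end spine-subtrees and small values on leaves of the internal ones, then verify independence via routine distance estimates. The only difference is that the paper assigns value $1$ to one leaf of \emph{each} branch of every internal $S_i$ (contributing at least $2$ per subtree since $L$ is a $2$-lobster), whereas you assign value $2$ to a single leaf per internal subtree; both yield cost $2(k-1)+12$ and both satisfy the independence condition.
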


To see that, consider the function $f$ on $V(L)$ defined as follows.
For each branch of $S_0$ and $S_k$, pick one leaf and set its $f$-broadcast
value to~$3$, and, for each branch of every $S_i$, $1\le i\le k-1$, if $k>1$,
pick one leaf and set its $f$-broadcast value to~$1$. 
The mapping $f$ is clearly an independent broadcast on $L$ and,
since both $S_0$ and $S_k$ are of type $\SSS_2$ and every spine-subtree of $L$
has at least two branches, we get $\cost(f)=2(k-1) + 12$.

\section{Independent broadcasts of locally uniform $2$-lobsters}
\label{sec:2-lobsters}

In this section we determine the broadcast independence number
of locally uniform $2$-lobsters.
Recall that by Observation~\ref{obs:lower-bound-2-lul}, 
$\beta_b(L)>2(\diam(L)-1)$ for every locally uniform $2$-lobster $L$ of length $k\ge 1$
(the special case of locally uniform $2$-lobsters of length~0 will be considered
separately, in Lemma~\ref{lem:beta-star-zero}).

We first introduce some notation and define different types of spine-subtrees
in Subsection~\ref{sub:types}.
We then define the value $\beta^*(L)$ for every locally uniform $2$-lobster $L$
in Subsection~\ref{sub:beta-star}, prove that every such lobster $L$ admits an
independent $\beta^*(L)$-broadcast in Subsection~\ref{sub:lower-bound}
and that it cannot admit any independent broadcast with cost
strictly greater than $\beta^*(L)$ in Subsection~\ref{sub:upper-bound}.
This allows us to finally state our main result in Subsection~\ref{sub:main-result}.

%


\subsection{Different types of spine-subtrees}
\label{sub:types}

Let $L$ be a locally uniform $2$-lobster with spine $v_0\dots v_k$, $k\ge 0$.
Two spine-subtrees $S_i$ and $S_{i+1}$, $0\le i\le k-1$, are called \emph{neighbouring spine-subtrees}.
Moreover, we say that $S_i$ \emph{precedes} $S_{i+1}$, and that $S_{i+1}$ \emph{follows} $S_i$.
A \emph{sequence} of $p$ spine-subtrees, $p\ge 2$, is a sequence of consecutive spine-subtrees
of the form $S_i\dots S_{i+p-1}$ for some $i$, $0\le i\le k-p+1$.

We will say that two independent broadcasts $f_1$ and $f_2$ on a locally uniform $2$-lobster $L$ are \emph{similar}
if their values on each spine-subtree of $L$ are equal, that is,
$f_1^*(S_i)=f_2^*(S_i)$ for every $i$, $0\le i\le k$. 
Observe that any two similar independent broadcasts have the same cost.


A \emph{1-leaf} of $L$ is a pendent vertex of $L$ adjacent to a spine-vertex.
A pendent vertex which is not a 1-leaf is a \emph{2-leaf} 
(recall that every pendent vertex is at distance at most~2 from a spine-vertex).
An \emph{only-leaf} is a leaf whose neighbour has only one leaf neighbour.
Therefore, an only-leaf in a locally uniform $2$-lobster
is necessarily a 2-leaf, and is then called a \emph{$2$-only-leaf}.
Two leaves having the same neighbour are said to be \emph{sister-leaves}.

\begin{notation}[$\lambda_1$, $\lambda_2$, $\lambda_2^*$]{\rm 
For every $i$, $0\le i\le k$, we denote by $\lambda_1(S_i)$, $\lambda_2(S_i)$ and $\lambda_2^*(S_i)$, 
the number  of 1-leaves, of 2-leaves, and of 2-only-leaves of $S_i$, respectively.
Moreover, we extend these three functions to the whole lobster $L$, by letting
$$\lambda_1(L)=\sum_{i=0}^{i=k}\lambda_1(S_i),\ \ 
\lambda_2(L)=\sum_{i=0}^{i=k}\lambda_2(S_i),\ \
\mbox{and}\ \  \lambda_2^*(L)=\sum_{i=0}^{i=k}\lambda_2^*(S_i).$$
}\end{notation}

Let $v_i$ be a spine-vertex of $L$ with $t$ non-spine neighbours, denoted $w_i^1,\dots,w_i^{t}$.
For every $j$, $1\le j\le t$, the branch $B_i^j$ of $v_i$ is the maximal spine-subtree of $S_i$, rooted at $v_i$,
containing the edge $v_iw_i^j$ but no edge $v_iw_i^{j'}$ with $j'\neq j$.
We then define two types of branches. 

\begin{notation}[$\BBB_1$, $\BBB_2$, $\alpha_1$, $\alpha_2$, $\alpha_2^*$]
{\rm A branch is of type $\BBB_1$ if it does not contain any 2-leaf, and
of type $\BBB_2$ if it does not contain any 1-leaf.
For every spine-subtree $S_i$, $0\le i\le k$, we denote by $\alpha_1(S_i)$ and
$\alpha_2(S_i)$ the number of branches of $S_i$ of type $\BBB_1$ and of type $\BBB_2$, respectively.
Moreover, we denote by $\alpha_2^{*}(S_i)$ the number of branches of $S_i$ of type $\BBB_2$
having at most two 2-leaves.
}\end{notation}

Since all branches of any spine-subtree of a locally uniform $2$-lobster are of the same type,
we get $\alpha_1(S_i)\ge 2$, $\alpha_2(S_i)=\alpha_2^*(S_i)=0$, if $S_i$ is of type $\SSS_1$,
and $\alpha_1(S_i)=0$, $\alpha_2(S_i)\ge 2$, $\alpha_2^*(S_i)\ge 0$, if $S_i$ is of type $\SSS_2$.

\begin{notation}[$b_i$]
{\rm For every $i$, $0\le i\le k$, we denote by $b_i$ the number of branches of the spine-subtree~$S_i$.
}\end{notation}

Observe that $b_i=\deg_L(v_i)-2$ if $1\le i\le k-1$, and $b_i=\deg_L(v_i)-1$ if $i\in\{0,k\}$.

In order to define various types of spine-subtrees, we will use
the following notation.

\begin{notation}[Operators on types of spine-subtrees]
{\rm Let $\XX$, $\YY$ and $\ZZ$ be any types of spine-subtrees.
We then define the following types.
\begin{itemize}
\item $\overline{\XX}$.\\
A spine-subtree $S$
is of type $\overline{\XX}$ if $S$ is not of type $\XX$.
\item $\XX|\YY$.\\
A spine-subtree $S$
is of type $\XX|\YY$ if $S$ is of type $\XX$ or~$\YY$.
\item $\XX.\YY$, $\XX\YY$.\\
A sequence of two spine-subtrees $SS'$ is of type $\XX.\YY$, or simply $\XX\YY$,
if $S$ is of type $\XX$ and $S'$ is of type $\YY$.
\item $\XX[P_1,\dots,P_p]$.\\
For any properties $P_1,\dots,P_p$, $p\ge 1$, 
a spine-subtree $S$ is of type $\XX[P_1,\dots,P_p]$
if $S$ is a spine-subtree of type $\XX$ satisfying properties $P_1,\dots,P_p$.
For instance, a spine-subtree $S$ is of type $\SSS_2[\lambda_2\ge 5,\alpha_2^{*}\le 3]$ 
if $S$ is a spine-subtree of type $\SSS_2$ with at least five leaves,
having at most three branches with at most two leaves.
Similarly, a branch of type $\YY[P_1,\dots,P_p]$ is a branch of type $\YY$ satisfying properties $P_1,\dots,P_p$.
For instance, a branch of type $\BBB_2[\lambda_2=3]$ is a branch of type $\BBB_2$ having three 2-leaves.
\item $\langle\XX\rangle\YY$, $\YY\langle\ZZ\rangle$, $\langle\XX\rangle\YY\langle\ZZ\rangle$.\\
A spine-subtree $S$ is of type
$\langle\XX\rangle\YY$ (resp. $\YY\langle\ZZ\rangle$)
if $S$ is a spine-subtree of type $\YY$ and the spine-subtree $S'$ preceding $S$ (resp. following $S$)
is of type $\XX$ (resp. $\ZZ$).
A spine-subtree $S$ is then of type $\langle\XX\rangle\YY\langle\ZZ\rangle$ 
if $S$ is of type $\langle\XX\rangle\YY$ and of type $\YY\langle\ZZ\rangle$.
\item $\varnothing$.\\
Slightly abusing the notation, we use the symbol $\varnothing$ to denote an ``empty spine-subtree'', so that, for instance,
a spine-subtree $S$ is of type $\langle\varnothing\rangle\YY=$ (resp. $\YY\langle\varnothing\rangle$), if $S=S_0$ (resp. $S=S_k$)
and $S$ is of type~$\YY$.
\item $\{\XX_1\dots \XX_p\}^+$, $\{\XX_1\dots \XX_p\}^*$.\\
For any types of spine-subtrees $\XX_1,\dots,\XX_p$, $p\ge 1$,
a sequence of spine-subtrees $S_i,\dots,S_{i+pj}$, 
$0\le i\le k-pj$, $0\le j\le \lfloor\frac{k-i}{p}\rfloor$, is of type $\{\XX_1\dots \XX_p\}^+$,
if every spine-subtree $S_\ell$, $i\le \ell \le i+pj$ is of type $\XX_{\ell-i+1\pmod p}$,
and none of the sequences $S_{i-p},\dots,S_i,\dots,S_{i+pj}$ and $S_i,\dots,S_{i+pj},\dots,S_{i+pj+p}$
is of type $\{\XX_1\dots \XX_p\}^+$ (the sequence is thus maximal).
Moreover, we will denote by $\{\XX_1\dots \XX_p\}^*$ the type $\varnothing|\{\XX_1\dots \XX_p\}^+$.
\end{itemize}
}\end{notation}

\medskip

Our aim now is twofold. We will first construct, for any locally uniform $2$-lobster $L$,
an independent broadcast $f^*$ on $L$ with $\cost(f^*)=\beta^*(L)$,
for some value $\beta^*(L)$,
and then prove that the value $\beta^*(L)$ is
the optimal cost of an independent broadcast on $L$.

The independent broadcast $f^*$ will be constructed in four steps,
that is, we will construct a sequence of independent broadcasts
$f_1,\dots,f_4$, with $\cost(f_i)\le\cost(f_{i+1})$ for
every $i$, $1\le i\le 3$, and then set $f^*=f_4$.
Each step will consist in modifying the broadcast values of some
vertices, according to the type of the spine-subtree,
or of the sequence of spine-subtrees, they belong to.

\medskip

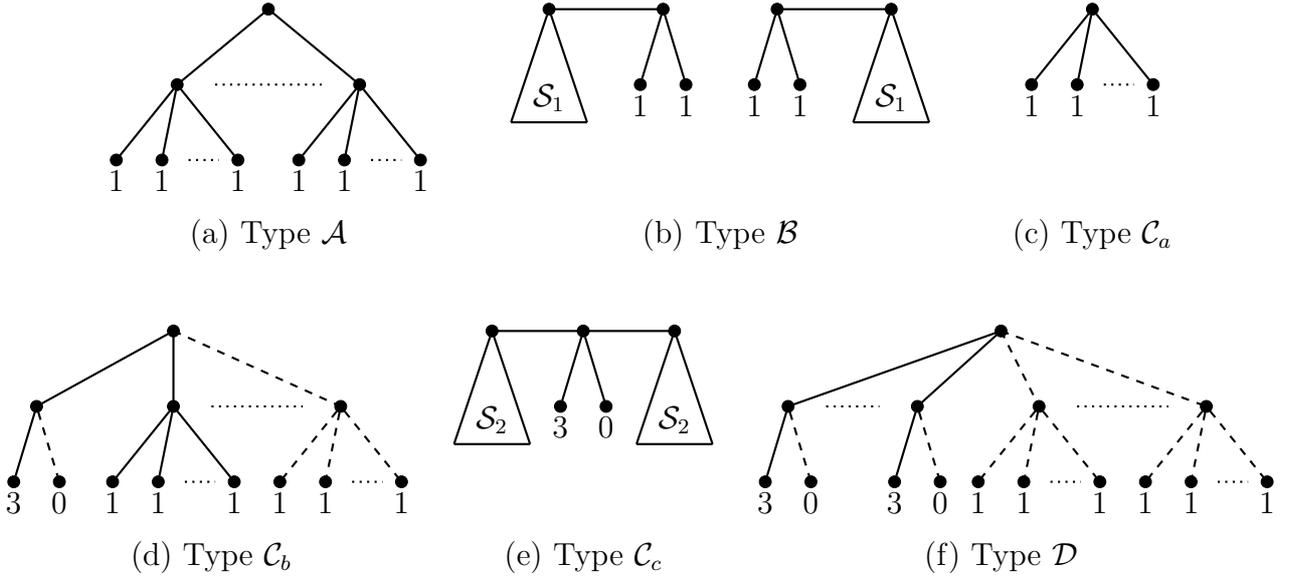
\begin{figure}
\begin{center}
%
%
\begin{tikzpicture}
\CTzero{0}{0}  \CTtroisplusL{-1.2}{-1}{1}{1}{1} \CTtroisplusL{1.2}{-1}{1}{1}{1}
\draw[thick] (0,0) to (-1.2,-1);
\draw[thick] (0,0) to (1.2,-1);
\draw[thick,dotted] (-0.7,-1) to (0.7,-1);
\node at (0,-3) {(a) Type $\FFFA$};
\end{tikzpicture}
\hskip 0.8cm
\begin{tikzpicture}
\CTtriangleL{0}{0}{$\SSS_1$}
\CTdeuxL{1.5}{0}{1}{1}
\draw[thick] (0,0) to (1.5,0);
\CTdeuxL{3}{0}{1}{1}
\CTtriangleL{4.5}{0}{$\SSS_1$}
\draw[thick] (3,0) to (4.5,0);
\node at (2.25,-3) {(b) Type $\GGG$};
\end{tikzpicture}
\hskip 0.8cm
\begin{tikzpicture}
\CTtroisplusL{0}{0}{1}{1}{1}
\node at (0,-3) {(c) Type $\XXX_a$};
\end{tikzpicture}
\vskip 0.8cm
\begin{tikzpicture}
\CTzero{0}{0}
\CTdeuxmoinsL{-1.8}{-1}{3}{0}
\CTtroisplusL{0}{-1}{1}{1}{1}
\CTtroisplusDL{2.2}{-1}{1}{1}{1}
\draw[thick] (0,0) to (-1.8,-1);
\draw[thick] (0,0) to (0,-1);
\draw[thick,dashed] (0,0) to (2.2,-1);
\draw[thick,dotted] (0.5,-1) to (1.7,-1);
\node at (0.5,-3) {(d) Type $\XXX_b$};
\end{tikzpicture}
\hskip 0.3cm
\begin{tikzpicture}
\CTtriangleL{0.3}{0}{$\SSS_2$}
\CTdeuxL{1.5}{0}{3}{0}
\CTtriangleL{2.7}{0}{$\SSS_2$}
\draw[thick] (0.3,0) to (2.7,0);
\node at (1.5,-3) {(e) Type $\XXX_c$};
\end{tikzpicture}
\hskip 0.3cm
\begin{tikzpicture}
\CTzero{0.5}{0}
\CTdeuxmoinsL{-2.3}{-1}{3}{0}
\CTdeuxmoinsL{-0.6}{-1}{3}{0}
\CTtroisplusDL{1}{-1}{1}{1}{1}
\CTtroisplusDL{3.2}{-1}{1}{1}{1}
\draw[thick] (0.5,0) to (-2.3,-1);
\draw[thick] (0.5,0) to (-0.6,-1);
\draw[thick,dashed] (0.5,0) to (1,-1);
\draw[thick,dashed] (0.5,0) to (3.2,-1);
\draw[thick,dotted] (1.5,-1) to (2.7,-1);
\draw[thick,dotted] (-1.8,-1) to (-1.1,-1);
\node at (0.5,-3) {(f) Type $\YYYC$};
\end{tikzpicture}
\caption{\label{fig:types}Spine-subtrees of given special types.}
\end{center}
\end{figure}

We now introduce the specific types of spine-subtrees,
or types of sequences of spine-subtrees, that will be used.
All these types are illustrated in Figure~\ref{fig:types}
(do not consider the depicted broadcast values yet, they will be discussed later,
in Claim~\ref{cl:f3}).

\begin{definition}[$\FFFA$, $\GGG$, $\XXX_a$, $\XXX_b$, $\XXX_c$, $\XXX$, $\YYYC$]\mbox{}\\
{\rm We define the following types of spine-subtrees.
\begin{itemize}
%
%
\item $\FFFA=\SSS_2[\alpha_2^*=0, \alpha_2\ge 2]$.\\
A spine-subtree of type $\FFFA$ is a spine-subtree of type $\SSS_2$
with at least two branches of type $\BBB_2$, all of them having at least three leaves.

\item $\GGG=\langle\SSS_1 \rangle\ \SSS_1[\lambda_1=2]\ |\ \SSS_1[\lambda_1=2]\ \langle\SSS_1 \rangle$.\\
A spine-subtree of type $\GGG$ is 
a spine-subtree of type $\SSS_1$ with two leaves,
having at least one neighbouring spine-subtree of type $\SSS_1$.

\item $\XXX_a=\SSS_1[\lambda_1\ge 3]$.\\
A spine-subtree of type $\XXX_a$ is 
a spine-subtree of type $\SSS_1$ with at least three leaves.

\item $\XXX_b=\SSS_2[\alpha_2^*=1, \alpha_2\ge 2]$.\\
A spine-subtree of type $\XXX_b$ is a spine-subtree of type $\SSS_2$
having at least two branches of type $\BBB_2$
with exactly one of them having at most two leaves.

\item $\XXX_c = \langle\SSS_2 \rangle\ \SSS_1[\lambda_1=2]\ \langle\SSS_2 \rangle$.\\
A spine-subtree of type $\XXX_c$ is a spine-subtree of type $\SSS_1$
with two leaves having two neighbouring spine-subtrees of type $\SSS_2$.

\item $\XXX = \XXX_a\ |\ \XXX_b\ |\ \XXX_c$.


\item $\YYYC = \SSS_2[\alpha_2^*\ge 2]$.\\
A spine-subtree of type $\YYYC$ is a spine-subtree of type $\SSS_2$
with at least two branches having at most two leaves.

\end{itemize}
}\end{definition}

The following observation directly follows from the previous definition,
considering the neighbouring requirements,
and will be useful later.

\begin{observation}
A spine-subtree of type $\GGG$ or $\XXX_a$ cannot have a spine-subtree of type $\XXX_c$
as a neighbouring spine-subtree.
\label{obs:GouXa-pasXc}
\end{observation}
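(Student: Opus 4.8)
The plan is to reduce the observation to the mutual exclusivity of the two base types $\SSS_1$ and $\SSS_2$. First I would note that both $\GGG$ and $\XXX_a$ are refinements of $\SSS_1$: by definition, a spine-subtree of type $\GGG$ is of type $\SSS_1[\lambda_1=2]$ and a spine-subtree of type $\XXX_a$ is of type $\SSS_1[\lambda_1\ge 3]$, so in either case the spine-subtree is of type $\SSS_1$.

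Next I would record that $\SSS_1$ and $\SSS_2$ are incompatible. Since $L$ is a $2$-lobster, every spine-subtree has at least two branches, hence at least one leaf, so it is nonempty and its type is determined by the common distance from its root to its leaves. A nonempty tree cannot have this distance equal to both $1$ and $2$ at once, so no spine-subtree is simultaneously of type $\SSS_1$ and of type $\SSS_2$.

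With these two facts in hand, I would argue by contradiction. Suppose a spine-subtree $S$ of type $\GGG$ or $\XXX_a$ had a neighbouring spine-subtree $S'$ of type $\XXX_c$. By the definition $\XXX_c = \langle\SSS_2 \rangle\ \SSS_1[\lambda_1=2]\ \langle\SSS_2 \rangle$, both spine-subtrees neighbouring $S'$ are of type $\SSS_2$; in particular $S$, being one of them, is of type $\SSS_2$. But by the first step $S$ is of type $\SSS_1$, contradicting the incompatibility established in the second step.

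I do not anticipate a genuine obstacle: the statement follows purely from unwinding the definitions of the involved types and the neighbouring conventions. The only point deserving a word of care is the observation that each spine-subtree is nonempty, which is what makes the base types $\SSS_1$ and $\SSS_2$ genuinely exclude one another; this is guaranteed by the $2$-lobster hypothesis.
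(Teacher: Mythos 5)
Your proof is correct and follows exactly the route the paper intends: the observation is stated as following directly from the definitions, since $\GGG$ and $\XXX_a$ are refinements of $\SSS_1$ while the neighbouring requirement in $\XXX_c = \langle\SSS_2\rangle\,\SSS_1[\lambda_1=2]\,\langle\SSS_2\rangle$ forces both neighbours to be of type $\SSS_2$. Your extra remark that the $2$-lobster hypothesis guarantees each spine-subtree has leaves, making $\SSS_1$ and $\SSS_2$ genuinely exclusive, is a sound (if implicit in the paper) point of care.
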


We now claim that the set of types
$\{\FFFA,\GGG,\XXX_a,\XXX_b,\XXX_c,\YYYC\}$
induces a partition of the spine-subtrees of any locally uniform $2$-lobster
(with possibly empty parts).

\begin{proposition}
Let $\TTT=\{\FFFA,\GGG,\XXX_a,\XXX_b,\XXX_c,\YYYC\}$,
and $L$ be any locally uniform $2$-lobster.
Every spine-subtree of $L$ belongs to exactly one type in $\TTT$.
\label{prop:partition}
\end{proposition}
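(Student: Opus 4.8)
The plan is to verify that the six types are pairwise disjoint and jointly exhaustive by a clean case analysis driven by the two primary structural dichotomies of a locally uniform $2$-lobster: whether a spine-subtree $S$ is of type $\SSS_1$ or $\SSS_2$, and, within each case, a count of leaves or of small branches. Since $L$ is locally uniform, every spine-subtree is of type exactly one of $\SSS_1$ or $\SSS_2$; and since $L$ is a $2$-lobster, every spine-subtree has at least two branches. These two facts, already recorded in the excerpt (the remark that $\alpha_1(S_i)\ge 2$ or $\alpha_2(S_i)\ge 2$ accordingly), are the backbone of the argument.

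First I would dispose of the $\SSS_2$ case. If $S$ is of type $\SSS_2$, then by local uniformity all its branches are of type $\BBB_2$, so $\alpha_2(S)\ge 2$ and the only remaining parameter is $\alpha_2^*(S)$, the number of branches with at most two leaves. The values $\alpha_2^*(S)=0$, $\alpha_2^*(S)=1$, and $\alpha_2^*(S)\ge 2$ are mutually exclusive and exhaustive, and they correspond precisely to types $\FFFA$, $\XXX_b$, and $\YYYC$ respectively (each carrying the standing hypothesis $\alpha_2\ge 2$). Hence every $\SSS_2$-subtree lies in exactly one of $\{\FFFA,\XXX_b,\YYYC\}$, and none of these overlaps an $\SSS_1$-type.

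Next I would treat the $\SSS_1$ case. Here all branches are of type $\BBB_1$ and each contributes exactly one $1$-leaf, so $S$ is entirely described by $\lambda_1(S)=b_i\ge 2$. If $\lambda_1(S)\ge 3$, then $S$ is of type $\XXX_a$ and of no other type. If $\lambda_1(S)=2$, then $S$ must be classified by its neighbours; here I would invoke Observation~\ref{obs:ends}, which forces $S_0$ and $S_k$ to be of type $\SSS_2$, so an $\SSS_1$-subtree with $\lambda_1=2$ is internal and has two genuine neighbouring spine-subtrees, each of type $\SSS_1$ or $\SSS_2$. The three possibilities --- at least one $\SSS_1$-neighbour, versus both neighbours of type $\SSS_2$ --- split as $\GGG$ (at least one $\SSS_1$-neighbour) and $\XXX_c$ (both $\SSS_2$-neighbours), which are plainly exhaustive and disjoint for a $\lambda_1=2$ subtree. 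Combined with $\XXX_a$, this shows every $\SSS_1$-subtree lies in exactly one of $\{\GGG,\XXX_a,\XXX_c\}$.

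The main obstacle, and the only place requiring care, is the boundary between $\GGG$ and $\XXX_c$: both concern $\SSS_1[\lambda_1=2]$ subtrees and are distinguished purely by neighbour types, so I must confirm the ``at least one $\SSS_1$-neighbour'' condition of $\GGG$ and the ``both neighbours $\SSS_2$'' condition of $\XXX_c$ genuinely partition the internal $\lambda_1=2$ subtrees with no gap and no overlap --- which they do, being logically complementary once the end-subtree cases are excluded by Observation~\ref{obs:ends}. Assembling the $\SSS_1$ and $\SSS_2$ conclusions yields that every spine-subtree belongs to exactly one type in $\TTT$, completing the proof.
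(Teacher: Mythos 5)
Your proof is correct and follows essentially the same route as the paper's: split on whether $S$ is of type $\SSS_1$ or $\SSS_2$, then use the trichotomy on $\alpha_2^*$ for $\SSS_2$-subtrees and the leaf-count plus neighbour-type analysis for $\SSS_1$-subtrees. Your explicit appeal to Observation~\ref{obs:ends} to guarantee that an $\SSS_1[\lambda_1=2]$ subtree is internal (and hence that ``no $\SSS_1$-neighbour'' is genuinely equivalent to ``both neighbours of type $\SSS_2$'') is a point the paper leaves implicit, and is a welcome touch of extra rigour rather than a different approach.
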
 

\begin{proof}
Clearly, the types in $\TTT$
are pairwise disjoint, that is, no spine-subtree of $L$ can belong to 
two types from this set (see Figure~\ref{fig:types}).

We now prove that every spine-subtree of a locally uniform $2$-lobster belongs to exactly one type in this set.
Indeed, consider any such spine-subtree $S$.
\begin{enumerate}
\item
If $S$ is of type $\SSS_1$, then either $S$ has at least three leaves (type~$\XXX_a$),
or two leaves and a neighbouring spine-subtree of type $\SSS_1$ (type~$\GGG$),
or two leaves and no neighbouring spine-subtree of type $\SSS_1$ (type~$\XXX_c$).
\item
Suppose now that $S$ is of type $\SSS_2$.
Since $S$ has at least two branches, we get that
either each of these branches have at least three leaves (type~$\FFFA$),
or $S$ has exactly one branch with at most two leaves (type~$\XXX_b$),
or $S$ has at least two branches with at most two leaves (type~$\YYYC$).
\end{enumerate}
This completes the proof.
\end{proof}

\subsection{Definition of $\beta^*(L)$}
\label{sub:beta-star}

We are now able to define the value $\beta^*(L)$ for any locally uniform $2$-lobster $L$, 
which will be proven to be the optimal cost of an independent broadcast on $L$.
The value $\beta^*(L)$ will be expressed as a formula involving
the number of 1-leaves, 2-leaves and 2-only-leaves,
and the number of spine-subtrees, or sequences of spine-subtrees,
of types defined in the previous subsection, appearing in $L$.

Finally, recall that $\lambda_1(L)$, $\lambda_2(L)$ and $\lambda_2^*(L)$ denote the number
of 1-leaves, of 2-leaves and of 2-only-leaves in $L$, respectively.
We are now able to define $\beta^*(L)$.

\begin{definition}[$\beta^*(L)$]{\rm 
Let $L$ be a locally uniform $2$-lobster. We then let
$$\beta^*(L) = \nu_1(L) + \nu_2(L) + \nu_3(L) + \nu_4(L),$$
where
\begin{itemize}
\item $\nu_1(L) = \lambda_1(L) + \lambda_2(L) + \lambda_2^*(L)$
is the total number of leaves in $L$, where each 2-only-leaf is counted twice.
%
\item 
$\nu_2(L)$ is the number of branches in $L$ with at most two 2-leaves, 
that belong to a spine-subtree of type $\SSS_2$ 
(that is, of depth~2).
%
\item 
$\nu_3(L)$ is the number of spine-subtrees of type $\XXX_c$  in $L$.
%
\item 
$\nu_4(L)$ is the sum, taken over all sequences of spine-subtrees $\mathbf{S}$ in $L$ of type 
$$\langle\overline{\XXX_c}.(\varnothing|\FFFA|\GGG|\XXX_a)\rangle\ \FFFA.\{(\XXX|\FFFA).\FFFA\}^*\ \langle(\varnothing|\FFFA|\GGG|\XXX_a).\overline{\XXX_c}\rangle,$$
of the value 
$$\frac{\ell(\mathbf{S})+1}{2} - \#_{\XXX_b,\XXX_c}(\mathbf{S}),$$
where $\ell(\mathbf{S})$ denotes the number of spine-subtrees in $\mathbf{S}$,
and $\#_{\XXX_b,\XXX_c}(\mathbf{S})$ the number of spine-subtrees 
of type $\XXX_b$ or $\XXX_c$ in $\mathbf{S}$.

\end{itemize}

} 
\label{def:nu}
\end{definition}

\subsection{Lower bound}
\label{sub:lower-bound}

We will now prove that every locally uniform $2$-lobster
admits an independent broadcast $f$ with $\cost(f)=\beta^*(L)$.
We consider the case of locally uniform $2$-lobsters of length~0 separately.

\begin{lemma}
If $L$ is a locally uniform $2$-lobster of length~$k=0$, then there exists an 
independent broadcast $f$ on $L$ with $\cost(f)=\beta^*(L)$,
thus implying $\beta_b(L)\ge\beta^*(L)$.
\label{lem:beta-star-zero}
\end{lemma}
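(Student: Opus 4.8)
The plan is to directly exhibit an explicit independent broadcast on a length-$0$ locally uniform $2$-lobster $L$ and verify that its cost equals $\beta^*(L)$. Since $L$ has length~$0$, it consists of a single spine-vertex $v_0$ together with its spine-subtree $S_0$, which by Observation~\ref{obs:ends} must be of type $\SSS_2$. First I would unwind the definition of $\beta^*(L)=\nu_1(L)+\nu_2(L)+\nu_3(L)+\nu_4(L)$ in this degenerate case. Here $\nu_3(L)=0$, since a spine-subtree of type $\XXX_c$ requires neighbouring spine-subtrees and there are none. Likewise $\nu_4(L)=0$, because the sequences defining $\nu_4$ involve neighbouring-type conditions ($\langle\overline{\XXX_c}\cdots\rangle$, etc.) that cannot be satisfied by the single spine-subtree $S_0$; in fact a type-$\FFFA$ spine-subtree at position $0$ is of type $\langle\varnothing\rangle\FFFA$, but the surrounding requirements and the trailing $\langle\cdots.\overline{\XXX_c}\rangle$ need to be checked, and I expect a short case analysis to show the $\nu_4$-sum is empty. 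Thus $\beta^*(L)=\nu_1(L)+\nu_2(L)$, where $\nu_1(L)=\lambda_2(L)+\lambda_2^*(L)$ is the number of leaves with each $2$-only-leaf counted twice (there are no $1$-leaves since $S_0$ is of type $\SSS_2$), and $\nu_2(L)$ counts the branches of $S_0$ having at most two $2$-leaves.

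Next I would construct the broadcast $f$ achieving this cost. The natural choice is to place all broadcast value on the leaves at distance $2$ from $v_0$, setting $f(v_0)=0$ and $f$ to vanish on the intermediate vertices, in accordance with Corollary~\ref{cor:lobster}. Concretely, in each branch $B_0^j$ I would assign value $3$ to one chosen $2$-leaf and value $1$ to every other $2$-leaf of that branch, except that in branches with at most two leaves I would boost the values to collect the $\nu_2$-contribution. The design principle is that $\diam(L)=4$ (a length-$0$ lobster has diameter $k+4=4$), so two $2$-leaves in distinct branches are at distance $4$, allowing $f$-value up to $3$ on such leaves, while two sister-leaves are at distance $2$, forcing their $f$-values below $2$. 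I would verify the independence condition $d_L(u,v)>\max\{f(u),f(v)\}$ for every pair of broadcast vertices: cross-branch leaf pairs are at distance $4$ and same-branch sister-leaves at distance $2$, so the chosen values respect all constraints.

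The cost bookkeeping is the main thing to get right. Each leaf contributes its base value $1$ (giving $\lambda_2(L)$), each $2$-only-leaf contributes an extra $1$ because its branch has at most one $2$-leaf and can safely carry more value (this realizes the ``counted twice'' clause of $\nu_1$), and each branch with at most two $2$-leaves contributes one further unit, captured by $\nu_2$. I would tabulate the contribution branch by branch, distinguishing branches with one leaf, with two leaves, and with at least three leaves, and check that summing matches $\nu_1(L)+\nu_2(L)$ exactly; the delicate point is ensuring that the extra units assigned to small branches remain simultaneously compatible with independence across the whole subtree, since several small branches share the common vertex $v_0$ at pairwise distance $4$ between their leaves.

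The hard part will be the careful accounting of the $\nu_2$-term together with the $2$-only-leaf doubling in $\nu_1$: these two mechanisms both reward ``small'' branches, and I must make sure I am not double-counting or over-assigning value in a way that violates independence (for instance, a branch with a single $2$-only-leaf is counted once by the doubling in $\nu_1$ and once by $\nu_2$, so such a leaf should ultimately carry $f$-value $3$, and I need $d_L=4$ to allow this). Once the per-branch contributions are pinned down and shown to be independently realizable, the equality $\cost(f)=\beta^*(L)$ follows by summation, and $\beta_b(L)\ge\beta^*(L)$ is immediate since $f$ is a feasible independent broadcast.
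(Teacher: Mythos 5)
Your overall strategy (exhibit an explicit branch-by-branch broadcast and match it against the unwound formula for $\beta^*(L)$) is the same as the paper's, but there is a genuine gap in the value of $\nu_4(L)$. You assert that $\nu_4(L)=0$ because the neighbouring-type conditions ``cannot be satisfied by the single spine-subtree $S_0$,'' and you build your construction and cost accounting on that expectation. This is false when $S_0$ is of type $\FFFA$ (every branch has at least three $2$-leaves): the empty subtree $\varnothing$ belongs to $\overline{\XXX_c}$ and to $\varnothing|\FFFA|\GGG|\XXX_a$, and the Kleene-star part of the $\nu_4$-type admits the empty sequence, so the single spine-subtree $S_0$ itself forms a sequence $\mathbf{S}$ of the required type with $\ell(\mathbf{S})=1$ and $\#_{\XXX_b,\XXX_c}(\mathbf{S})=0$, contributing $\frac{1+1}{2}-0=1$. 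Hence $\beta^*(L)=\nu_1(L)+\nu_2(L)+1$ in that case, and your broadcast, which places value only on leaves, reaches at most $\nu_1(L)+\nu_2(L)=\lambda_2(L)$ there and falls short by one. The paper closes exactly this case by additionally setting $f(v_0)=1$ when $S_0$ is of type $\FFFA$ (legal since every leaf then carries value $1$ and is at distance $2>1$ from $v_0$).

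A secondary problem is the assignment inside a single branch: you propose to give value $3$ to one chosen $2$-leaf and value $1$ to every other $2$-leaf of the \emph{same} branch. Two sister-leaves are at distance $2$, so a pair with values $3$ and $1$ violates $d_L(u,v)>\max\{f(u),f(v)\}$; your own later remark that sister-leaves force values below $2$ contradicts this assignment. The consistent choice (and the paper's) is: in a branch with at most two leaves, one leaf gets $3$ and its sister gets $0$; in a branch with at least three leaves, every leaf gets $1$ and nothing more. With that correction and the $f(v_0)=1$ adjustment in the $\FFFA$ case, your per-branch bookkeeping does sum to $\nu_1(L)+\nu_2(L)+\nu_4(L)=\beta^*(L)$.
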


\begin{proof}
Recall that since $k=0$, $L=S_0$ is necessarily of type $\SSS_2$ (Observation~\ref{obs:ends})
and has at least two branches, so that $\diam(L)=4$.
We construct an independent broadcast $f$ on $L$ as follows,
by considering each branch separately. 
Let $B$ be any branch of $L$.
If $B$ has at most two leaves, then we set $f(\ell)=3$ for one leaf $\ell$ of $B$,
and $f(\ell')=0$ for every sister-leaf $\ell'$ of $\ell$, if any.
If $B$ has at least three leaves, then we set $f(\ell)=1$ for every leaf $\ell$ of $B$.
Finally, if $S_0$ is of type $\FFFA$, then we set $f(v_0)=1$ (in that case, $\nu_4(L)=1$).

We then have $\cost(f) = \nu_1(L) + \nu_2(L) + \nu_4(L)$,
and thus, since $\nu_3(L)=0$, $\cost(f) = \beta^*(L)$.
\end{proof}

\begin{figure}
\begin{center}
%
%
\begin{tikzpicture}
\draw[thick] (0.1,0) to (11,0);
\CTzero{0.1}{0} \CTunL{-0.3}{-1}{2}  \CTdeuxL{0.5}{-1}{1}{1}
\draw[thick] (0.1,0) to (-0.3,-1);
\draw[thick] (0.1,0) to (0.5,-1);
\CTtroisL{2}{0}{1}{1}{1}
\CTunL{3.5}{-1}{2}  \CTunL{4.1}{-1}{2} 
\CTunL{4.7}{-1}{2}  \CTdeuxL{5.6}{-1}{1}{1}   \CTtroisL{7.1}{-1}{1}{1}{1}
\CTzero{5.3}{0}
\draw[thick] (5.3,0) to (3.5,-1);
\draw[thick] (5.3,0) to (4.1,-1);
\draw[thick] (5.3,0) to (4.7,-1);
\draw[thick] (5.3,0) to (5.6,-1);
\draw[thick] (5.3,0) to (7.1,-1);
\CTdeuxL{8.6}{0}{1}{1}
\CTdeuxL{9.8}{0}{1}{1}
\CTdeux{11}{0}  \CTunL{10.7}{-1}{2}  \CTunL{11.3}{-1}{2}
\node at (5,-3) {(a) The independent broadcast $f_1$ on a sample locally uniform $2$-lobster};
\end{tikzpicture}
\vskip 0.8cm
%
%
\begin{tikzpicture}
\CTunL{4.7}{-1}{2}  \CTdeuxL{5.6}{-1}{1}{1}   \CTtroisL{7.1}{-1}{1}{1}{1}
\CTzero{5.9}{0}
\draw[thick] (5.9,0) to (4.7,-1);
\draw[thick] (5.9,0) to (5.6,-1);
\draw[thick] (5.9,0) to (7.1,-1);
\node at (9,-1) {$\longrightarrow$};
\CTunL{5.9+4.7}{-1}{3}  \CTdeuxL{5.9+5.6}{-1}{3}{0}   \CTtroisL{5.9+7.1}{-1}{1}{1}{1}
\CTzero{5.9+5.9}{0}
\draw[thick] (5.9+5.9,0) to (5.9+4.7,-1);
\draw[thick] (5.9+5.9,0) to (5.9+5.6,-1);
\draw[thick] (5.9+5.9,0) to (5.9+7.1,-1);
\node at (9,-3) {(b) From $f_1$ to $f_2$};
\end{tikzpicture}
\vskip 0.8cm
\begin{tikzpicture}
\CTdeuxL{0}{0}{1}{1}
\CTtriangleL{-1.5}{0}{$\SSS_2$} \CTtriangleL{1.5}{0}{$\SSS_2$}
\draw[thick] (-1.5,0) to (1.5,0);
\node at (3.5,-1) {$\longrightarrow$};
\CTdeuxL{7+0}{0}{3}{0}
\CTtriangleL{7-1.5}{0}{$\SSS_2$} \CTtriangleL{7+1.5}{0}{$\SSS_2$}
\draw[thick] (7-1.5,0) to (7+1.5,0);
\node at (3.5,-2.2) {(c) From $f_2$ to $f_3$};
\end{tikzpicture}
\vskip 0.8cm
\begin{tikzpicture}
\CTzero{0}{0}  \CTtroisL{-0.8}{-1}{1}{1}{1} \CTtroisL{0.8}{-1}{1}{1}{1}
\draw[thick] (0,0) to (-0.8,-1);
\draw[thick] (0,0) to (0.8,-1);
\CTdeuxL{1.9}{0}{3}{0}
\CTzero{3.8+0}{0}  \CTtroisL{3.8-0.8}{-1}{1}{1}{1} \CTtroisL{3.8+0.8}{-1}{1}{1}{1}
\draw[thick] (3.8+0,0) to (3.8-0.8,-1);
\draw[thick] (3.8+0,0) to (3.8+0.8,-1);
\CTzero{0.5+6.6}{0}  \CTtroisL{0.5+5.8}{-1}{1}{1}{1} \CTtroisL{0.5+7.4}{-1}{1}{1}{1}
\draw[thick] (0.5+6.6,0) to (0.5+5.8,-1);
\draw[thick] (0.5+6.6,0) to (0.5+7.4,-1);
\CTzero{10.4+0}{0}  \CTtroisL{10.4-0.8}{-1}{1}{1}{1} \CTtroisL{10.4+0.8}{-1}{1}{1}{1}
\draw[thick] (10.4+0,0) to (10.4-0.8,-1);
\draw[thick] (10.4+0,0) to (10.4+0.8,-1);
\CTdeuxL{12.3}{0}{1}{1}
\CTdeuxL{13.3}{0}{1}{1}
\draw[thick] (0,0) to (13.3,0);
\end{tikzpicture}
\vskip 0.5cm
\begin{tikzpicture}
\node at (-1.4,-0.6) {$\longrightarrow$};
\CTzeroL{0}{0}{1}  \CTtroisL{-0.8}{-1}{1}{1}{1} \CTtroisL{0.8}{-1}{1}{1}{1}
\draw[thick] (0,0) to (-0.8,-1);
\draw[thick] (0,0) to (0.8,-1);
\CTdeuxL{1.9}{0}{1}{1}
\CTzeroL{3.8+0}{0}{1}  \CTtroisL{3.8-0.8}{-1}{1}{1}{1} \CTtroisL{3.8+0.8}{-1}{1}{1}{1}
\draw[thick] (3.8+0,0) to (3.8-0.8,-1);
\draw[thick] (3.8+0,0) to (3.8+0.8,-1);
\CTzero{0.5+6.6}{0}  \CTtroisL{0.5+5.8}{-1}{1}{1}{1} \CTtroisL{0.5+7.4}{-1}{1}{1}{1}
\draw[thick] (0.5+6.6,0) to (0.5+5.8,-1);
\draw[thick] (0.5+6.6,0) to (0.5+7.4,-1);
\CTzeroL{10.4+0}{0}{1}  \CTtroisL{10.4-0.8}{-1}{1}{1}{1} \CTtroisL{10.4+0.8}{-1}{1}{1}{1}
\draw[thick] (10.4+0,0) to (10.4-0.8,-1);
\draw[thick] (10.4+0,0) to (10.4+0.8,-1);
\CTdeuxL{12.3}{0}{1}{1}
\CTdeuxL{13.3}{0}{1}{1}
\draw[thick] (0,0) to (13.3,0);
\node at (6.3,-3) {(d) From $f_3$ to $f_4$, for a sequence of type
$\langle\varnothing\rangle\FFFA.\XXX_c.\FFFA.\FFFA.\FFFA\langle\GGG.\GGG\rangle$
(cost increases by~2)};
\end{tikzpicture}
\caption{\label{fig:beta-star-a}Proof of Lemma~\ref{lem:beta-star-general}: from $f_1$ to $f_4$.}
\end{center}
\end{figure}

\begin{lemma}
Every locally uniform $2$-lobster $L$ of length $k\ge 1$
admits an independent broadcast $f$ with $\cost(f)=\beta^*(L)$,
thus implying $\beta_b(L)\ge\beta^*(L)$.
\label{lem:beta-star-general}
\end{lemma}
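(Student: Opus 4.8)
The plan is to construct the broadcast $f^*$ explicitly in four successive stages $f_1,f_2,f_3,f_4$, as announced in the preamble to this subsection, and to track the cost increments so that the total lands exactly on $\beta^*(L)=\nu_1(L)+\nu_2(L)+\nu_3(L)+\nu_4(L)$. The guiding principle, justified by Corollary~\ref{cor:lobster}, is that broadcast values should live on the leaves (not on spine-vertices with pendent neighbours), with the exception of spine-vertices of type $\SSS_2$, which may carry value~$1$. Accordingly, I would first define $f_1$ as the naive independent broadcast that puts value~$1$ on every leaf of $L$, giving $\cost(f_1)=\lambda_1(L)+\lambda_2(L)$; this is independent because any two distinct leaves are at distance at least~$2$. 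The four stages then redistribute and augment these values so as to account, in turn, for each of the four terms $\nu_1,\nu_2,\nu_3,\nu_4$.

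First I would pass from $f_1$ to $f_2$ to capture the ``doubling'' of $2$-only-leaves recorded in $\nu_1$ and the branch bonus $\nu_2$. Concretely, in each branch of a type-$\SSS_2$ spine-subtree that has at most two $2$-leaves, I would concentrate the broadcast on a single $2$-leaf $\ell$, raising $f(\ell)$ to~$3$ and zeroing its sister-leaf (if present); this keeps the broadcast independent since a $2$-leaf has eccentricity $\geq\diam(L)\ge 5$ and any vertex within distance~$3$ of $\ell$ in such a branch is dominated only through $\ell$ itself. One checks that each such branch contributes exactly the increment bookkept by $\nu_2$, and that a $2$-only-leaf (whose neighbour carries no sister) gets counted twice as demanded by $\nu_1$. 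The transition $f_2\to f_3$ handles $\nu_3$: for each spine-subtree of type $\XXX_c$ (an $\SSS_1$ with exactly two leaves flanked by two $\SSS_2$ neighbours), one of its two $1$-leaves is boosted from~$1$ to~$3$, which remains independent precisely because the neighbouring subtrees are of type $\SSS_2$ and hence sit at distance large enough; this adds one per $\XXX_c$-subtree, matching $\nu_3$. The sample figures~(b),(c) are exactly these two local moves.

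The step $f_3\to f_4$ is where $\nu_4$ is produced and is the delicate one. For each maximal sequence $\mathbf{S}$ of the long type appearing in Definition~\ref{def:nu}, I would place value~$1$ on the spine-vertices $v_i$ of alternate $\FFFA$-subtrees inside $\mathbf{S}$, choosing the alternation so that no two broadcasting spine-vertices are at distance $\le 1$ and so that each such $v_i$ is not co-dominated by the already-boosted leaves on either side. The count of spine-vertices one can legally activate in a sequence of $\ell(\mathbf{S})$ subtrees, after forbidding the $\XXX_b$- and $\XXX_c$-positions, is exactly $\tfrac{\ell(\mathbf{S})+1}{2}-\#_{\XXX_b,\XXX_c}(\mathbf{S})$, which is the per-sequence summand of $\nu_4$. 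Summing over all sequences yields $\cost(f_4)=\beta^*(L)$, and the bound $\beta_b(L)\ge\beta^*(L)$ follows since $f_4$ is independent.

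The main obstacle will be verifying independence of $f_4$ at the boundaries of each sequence $\mathbf{S}$ and at its internal $\XXX$-subtrees: one must confirm that activating a spine-vertex $v_i$ of value~$1$ never lets it be $f$-dominated by a neighbouring boosted leaf (value~$3$ in an adjacent $\SSS_2$-branch) and, conversely, that such a leaf is not dominated by $v_i$. This is exactly the purpose of the elaborate neighbouring conditions $\langle\overline{\XXX_c}\dots\rangle$ and of Observation~\ref{obs:GouXa-pasXc}, which guarantee the flanking subtrees are of a compatible type. I expect the proof to proceed by a careful but routine case analysis over the types at the two ends of $\mathbf{S}$ and over the internal pattern $\{(\XXX|\FFFA).\FFFA\}^*$, checking in each case the distance inequality $d_L(u,v)>\max\{f(u),f(v)\}$ for every pair of broadcast vertices that could conflict; the $\tfrac{\ell(\mathbf{S})+1}{2}$ count is precisely the number of $\FFFA$-positions surviving the alternation once the $\XXX_b,\XXX_c$ positions are excluded from activation.
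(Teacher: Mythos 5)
Your steps 1--3 match the paper's construction: the paper starts with $f_1(u)=2$ on only-leaves and $1$ on other leaves (so $\cost(f_1)=\nu_1(L)$ directly), whereas you start with $1$ everywhere and recover the $\lambda_2^*$ term during the branch-concentration step; the totals agree, so this reorganization of the bookkeeping is harmless. The genuine gap is in your Step~4. You produce the term $\nu_4$ by activating only $\frac{\ell(\mathbf{S})+1}{2}-\#_{\XXX_b,\XXX_c}(\mathbf{S})$ spine-vertices, ``forbidding'' the positions near $\XXX_b$/$\XXX_c$-subtrees, and you never modify the values inside those subtrees. That count is not achievable: an undisturbed $\XXX_b$-subtree carries a $2$-leaf of value~$3$ at distance exactly~$3$ from the spine-vertices of \emph{both} neighbouring subtrees, and an undisturbed $\XXX_c$-subtree carries a $1$-leaf of value~$3$ at distance~$2$ from both; so each such internal position blocks \emph{two} $\FFFA$-positions, not one. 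Concretely, for a sequence of type $\langle\varnothing\rangle\FFFA.\XXX_c.\FFFA.\FFFA.\FFFA\langle\GGG.\GGG\rangle$ (Figure~\ref{fig:beta-star-a}(d)) your scheme can activate only $v_4$, gaining $+1$, while $\nu_4$ demands $+2$; with two internal $\XXX_c$-positions in $A_0X_1A_1X_2A_2$ you could activate nothing at all, while $\nu_4=1$.

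The missing idea is the \emph{trade}: inside each sequence of type $\TTT_4$ one must first \emph{decrease} every internal $\XXX_b$- or $\XXX_c$-subtree by one (replace the value-$3$ leaf by value~$2$, or by $1+1$ on the two leaves of an $\XXX_c$-subtree), so that these subtrees no longer dominate beyond themselves, and only then activate the spine-vertices of \emph{all} $p+1=\frac{\ell(\mathbf{S})+1}{2}$ of the $\FFFA$-positions $A_0,\dots,A_p$. The term $-\#_{\XXX_b,\XXX_c}(\mathbf{S})$ in $\nu_4$ is the price paid for these decreases, not a reduction in the number of activated spine-vertices. Without this exchange your $f_4$ is a valid independent broadcast but its cost falls short of $\beta^*(L)$ whenever a $\TTT_4$-sequence contains an internal $\XXX_b$ or $\XXX_c$, so the lemma is not established.
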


\begin{proof}
We will construct a sequence of four independent broadcasts $f_1,\dots,f_4$ on $L$,
step by step, such that $\cost(f_4)=\beta^*(L)$.
Each independent broadcast $f_i$, $2\le i\le 4$, is obtained by possibly modifying 
the independent broadcast $f_{i-1}$, and is such that $\cost(f_i)\ge\cost(f_{i-1})$.
Moreover, for each independent broadcast $f_i$, $1\le i\le 3$, we will have $f_i(v_j)=0$
for every spine-vertex $v_j$, $0\le j\le k$, while we may have $f_4(v_j)=1$.

These modifications are illustrated in Figures~\ref{fig:beta-star-a},
where dashed edges represent optional edges.
These figures should help the reader to see that each mapping $f_i$ is a valid independent broadcast
on $L$.

\bigskip

\noindent{\em Step 1.}
Let $f_1$ be the mapping defined by
$f_1(u)=2$ if $u$ is an only-leaf, 
$f_1(u)=1$ if $u$ is a leaf which is not an only-leaf
and $f_1(u)=0$ otherwise
(see Figure~\ref{fig:beta-star-a}(a)).

Clearly, $f_1$ is an independent broadcast on $L$ with 
$$\cost(f_1)= \lambda_1(L) + \lambda_2(L) + \lambda_2^*(L) =\nu_1(L).$$

\bigskip

\noindent{\em Step 2.}
We modify $f_1$ as follows, to obtain $f_2$.
For every branch $B_i^j$ of type $\BBB_2[\lambda_2\le 2]$, $0\le i\le k$, $1\le j\le s_i$,
such that $S_i$ is a spine-subtree of type $\SSS_2$,
we let $f_2(\ell)=3$ for one leaf $\ell$ of $B_i^j$, and $f_2(\ell')=0$ for the sister-leaf $\ell'$
of $\ell$, if any
(see Figure~\ref{fig:beta-star-a}(b)).

Again, $f_2$ is an independent broadcast on $L$ with 
$$\cost(f_2)=\cost(f_1) + \nu_2(L).$$

\bigskip

\noindent{\em Step 3.}
We modify $f_2$ as follows, to obtain $f_3$.
For every spine-subtree $S$ of type $\XXX_c$,
we let
$f_3(\ell)=3$ for one leaf $\ell$ of $S$, and $f_3(\ell')=0$ for the sister-leaf $\ell'$ of $\ell$, if any
(see Figure~\ref{fig:beta-star-a}(c)).
Note here that setting $f_3(\ell)=3$ is allowed, since $L$ does not contain any
vertex $x$ with $f_3(x)>0$ at distance at most~3 from the leaves of $S$,
and $f_3(\ell'')\le 3$ for every 2-leaf $\ell''$ of $L$.

Again, $f_3$ is an independent broadcast on $L$, and,
since the broadcast value of every considered spine-subtree $S$ has been increased by~1, we have 
$$\cost(f_3)=\cost(f_2) + \nu_3(L).$$

\bigskip

Before describing the last step, we prove a claim on the $f_3$-values
and introduce some terminology.

\begin{claim}
\label{cl:f3}
After step~3, the $f_3$-values of the vertices of $L$, depending on the
type of the spine-subtree they belong to, are those values depicted in Figure~\ref{fig:types}.
\end{claim}

\begin{proof}
Let $S$ be any spine-subtree of $L$.
If $S$ is of type $\SSS_1$, then either $S$ has at least three leaves (type~$\XXX_a$), 
or two leaves and a neighbouring spine-subtree of type $\SSS_1$ (type~$\GGG$),
and has thus not been modified in steps 2 or~3 in both cases,
or two leaves and no neighbouring spine-subtree of type $\SSS_1$ (type~$\XXX_c$),
and has thus been modified in step~3.
In all these cases, the $f_3$-values of the vertices in $S$ are those values depicted
in Figure~\ref{fig:types}(b), (c) or (e).

Suppose now that $S$ is of type $\SSS_2$.
In that case, either each of these branches have at least three leaves (type~$\FFFA$),
and has thus not been modified in steps 2 or~3,
or $S$ has exactly one branch with at most two leaves (type~$\XXX_b$),
or $S$ has at least two branches with at most two leaves (type~$\YYYC$),
and has thus been modified in step~2.
In all these cases, the $f_3$-values of the vertices in $S$ are those values depicted
in Figure~\ref{fig:types}(a), (d) or (f).

This completes the proof of the claim.
\end{proof}

A spine-subtree $S$ \emph{exceeds by $e$}, for some integer $e\ge 1$,
if $S$ contains a 1-leaf with broadcast value $e+1$, or a 2-leaf with broadcast value $e+2$.
Therefore, if a spine-subtree $S_i$, $0\le i\le k$, of a locally
uniform $2$-lobster $L$ exceeds by $e\ge 1$, then none of the spine vertices
$v_{i-e},\dots,v_{i},\dots,v_{i+e}$ can be a broadcast vertex.

We can then partition the set 
$\TTT=\{\FFFA,\GGG,\XXX_a,\XXX_b,\XXX_c,\YYYC\}$
in three parts $\EEE_0$, $\EEE_1$ and $\EEE_2$, corresponding to the types of spine-subtrees exceeding by~0,
1 and 2, respectively, after Step~3 (as given in Claim~\ref{cl:f3}). 
In order to be complete, we will also say that the ``empty subtree'', of type $\varnothing$,
does not exceed. Therefore, we have
$$\EEE_0 = \{\varnothing, \FFFA, \GGG, \XXX_a \},\ \ 
\EEE_1 = \{ \XXX_b, \YYYC \}\ \ \mbox{and}\ \ 
\EEE_2 = \{ \XXX_c \}.$$
Moreover, we denote by $\overline{\EEE_i}$ the complement of $\EEE_i$ for every $i$,
$0\le i\le 2$, that is, $\overline{\EEE_i} = (\TTT \cup \{\varnothing\}) \setminus \EEE_i$.

Let $S$ be a spine-subtree of type $\FFFA$.
By \emph{increasing $S$ by one}, we mean giving the broadcast value~$1$ to the root
of $S$ (observe that only leaves of $S$ are $f_3$-broadcast vertices, and that
$f_3(\ell)=1$ for every such leaf~$\ell$).

Let now $S$ be a spine-subtree of type $\XXX_b$ or $\XXX_c$.
By \emph{decreasing $S$ by one}, we mean the following:
\begin{itemize}
\item If $S$ is of type $\XXX_b$, then we give the broadcast value~$2$ to one
leaf of the (unique) branch of type $\BBB_2[\lambda_2\le 2]$, and the broadcast value~$0$ to its
sister-leaf, if any (by Claim~\ref{cl:f3}, $f_3(\ell)=3$ for one leaf $\ell$ of $S$,
and $f_3(\ell')=0$ for the sister-leaf $\ell'$ of $\ell$, if any).
\item If $S$ is of type $\XXX_c$, then we give the broadcast value~$1$ to each
of the two leaves of $S$ (by Claim~\ref{cl:f3}, $f_3(\ell)=3$ for one leaf $\ell$ of $S$,
and $f_3(\ell')=0$ for the sister-leaf $\ell'$ of $\ell$).
\end{itemize}

Observe that after having being decreased by one, a spine-subtree of type
$\XXX_b$ or~$\XXX_c$ does no longer exceed.

We are now able to describe the fourth step of the proof.
The key idea of this last step is to increase by one
some spine-subtrees of type $\FFFA$, and decrease by one
some spine-subtrees of type $\XXX_b$ or~$\XXX_c$,
provided that this results in a strict increasing of
the cost of the current independent broadcast on $L$.

\bigskip

\noindent{\em Step 4.}   
We modify $f_3$ as follows, to obtain $f_4$.
For every sequence of spine-subtrees $A_0X_1A_1\dots X_pA_p$, $p\ge 0$,
of type
$$\TTT_4 = \langle\overline{\EEE_2}.\EEE_0\rangle\ \FFFA.\{(\XXX|\FFFA).\FFFA\}^*\ \langle\EEE_0.\overline{\EEE_2}\rangle,$$ 
we decrease by one each spine-subtree $X_i$ of type $\XXX_b$ or $\XXX_c$,
$1\le i\le p$, and increase by one each spine-subtree $A_j$, $0\le j\le p$
(see Figure~\ref{fig:beta-star-a}(d)).
Note that this can be done since none of the spine-subtrees $X_i$, $1\le i\le p$, exceeds
and no spine-subtree outside the sequence can prevent us from doing so on
the extremal spine-subtrees $A_0$ and $A_p$.

The broadcast value of the whole sequence is thus increased by~$p+1$, 
minus the number of spine-subtrees of type $\XXX_b$ or $\XXX_c$.
Since the number of spine-subtrees of type $\XXX_b$ or $\XXX_c$
is at most~$p$, this broadcast value always increases.
Therefore, doing the above modification for every sequence of spine-subtrees
of type $\TTT_4$, the so-obtained independent broadcast $f_4$ satisfies 
$$\cost(f_4) = \cost(f_3) + \nu_4(L).$$

We finally get $\cost(f_4)=\beta^*(L)$, as required. This completes the proof.
\end{proof}

\subsection{Upper bound}
\label{sub:upper-bound}

We first prove that, for every locally uniform $2$-lobster $L$,
we can choose an optimal independent broadcast on $L$ that satisfies
some given properties.


The next lemma shows that if $f$ is an optimal independent broadcast
on a locally uniform $2$-lobster $L$ of length $k\ge 1$,
then 
there exists an optimal independent broadcast $\tf$ on $L$
such that the $\tf$-values of the vertices in the spine-subtrees $S_0$ and $S_k$
are at most~$3$.

\begin{lemma}
If $L$ is a locally uniform $2$-lobster of length $k\ge 1$, 
and $f$ is an optimal independent broadcast on $L$, 
then 
there exists an optimal independent broadcast $\tf$ on $L$
such that $\tf(\ell)\leq 3$ for every leaf $\ell$ of $S_0$ and $S_k$.
\label{lem:end-spine-subtrees}
\end{lemma}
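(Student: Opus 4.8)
The plan is to transform the given optimal broadcast $f$ into the desired $\tf$ by repeatedly lowering any end-leaf value that exceeds~$3$, each time preserving independence and total cost. By the symmetry of the two ends it suffices to describe how to handle a leaf of $S_0$, and to iterate. Concretely, I would argue by induction on the potential $\Phi(f)=\sum_{\ell}\max\{0,f(\ell)-3\}$, the sum taken over the leaves $\ell$ of $S_0$ and $S_k$; if $\Phi(f)=0$ we are done, so assume there is a leaf $\ell$, say in $S_0$, with $f(\ell)=t\ge 4$. The first step is structural. Since $S_0$ is of type $\SSS_2$ (Observation~\ref{obs:ends}) its diameter is $4$, so every vertex $x\neq\ell$ of $S_0$ satisfies $d_L(\ell,x)\le 4\le t$; by independence $\ell$ is therefore the \emph{unique} $f$-broadcast vertex of $S_0$. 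Let $S_{j^*}$ be the spine-subtree of smallest index $j^*\ge 1$ containing an $f$-broadcast vertex $u$, so that $S_1,\dots,S_{j^*-1}$ are broadcast-free. Writing $d_L(\ell,u)=2+j^*+d_L(v_{j^*},u)$ and using $d_L(\ell,u)>t$ together with $d_L(v_{j^*},u)\le 2$, we obtain $j^*\ge t-3$.

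The second step is the exchange. I would define $\tf$ by setting $\tf(\ell)=3$ and compensating the lost value $t-3$ by activating new broadcast vertices in the broadcast-free region lying strictly to the right of $\ell$'s reduced range and to the left of $u$. Two kinds of placements are available and cheap to verify: (i) a value-$3$ leaf on one $2$-leaf in some \emph{other} branch of $S_0$ (here $b_0\ge 2$ since $L$ is a $2$-lobster), which is at distance $4>3$ from $\ell$ and, crucially, at distance $d_L(\ell,u)>\max\{t,f(u)\}$ from every broadcast vertex outside $S_0$, hence safe; and (ii) value-$1$ leaves, one in each of the broadcast-free subtrees $S_1,\dots,S_{j^*-1}$ nearest to $S_0$. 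For the latter, a new leaf $m\in S_i$ lies at distance $\ge i+3\ge 4>3$ from $\ell$, at distance $\ge 3>1$ from any other new value-$1$ leaf, and at distance $d_L(\ell,u)$ or less from $u$; the only delicate inequality is $d_L(m,u)>f(u)$. I would choose the number and position of these placements so that the total added value is exactly $t-3$, giving an independent broadcast $\tf$ with $\cost(\tf)=\cost(f)$ and $\tf(\ell)=3$; since all newly created broadcast vertices have value $1$ or are value-$3$ leaves of $S_0$, and none lie in $S_k$, the potential $\Phi$ strictly decreases, and the induction closes.

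To justify that enough capacity exists, note that the value-$3$ placement in an extra branch of $S_0$ supplies $3$ units with no interaction with $u$, while the subtrees $S_1,\dots,S_{j^*-1}$ supply up to $j^*-1\ge t-4$ further units; together this comfortably exceeds $t-3$ in the typical situation, and optimality of $f$ (Observation~\ref{obs:2(d-1)} guarantees $f$ cannot be beaten) forces the equality $\cost(\tf)=\cost(f)$, so $\tf$ is again optimal. The main obstacle is precisely the inequality $d_L(m,u)>f(u)$ for the value-$1$ leaves placed in the subtrees closest to $u$: when $u$ is both near (the tight case $j^*=t-3$, forcing $u$ to be a $2$-leaf) and itself of large value, its domination range eats into the fillable interval and the naive count of safe slots can fall short by roughly $f(u)-3$. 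I expect this to be resolved by an amortized argument along the spine: a large value at $u$ is itself an ``oversized'' broadcast creating its own recoverable slack on its right, so the deficits chain together and cancel, the extra value-$3$ branches of $S_0$ absorbing the boundary discrepancy. Making this bookkeeping precise — tracking how the reach of each consecutive high-value leaf overlaps that of the next — is the part that will require the most care.
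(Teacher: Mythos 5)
Your overall strategy (an exchange argument driven by an extremal parameter measuring how much the end-leaf values exceed~$3$) is the same in spirit as the paper's, and your preliminary steps are sound: the uniqueness of the broadcast leaf $\ell$ of $S_0$ when $f(\ell)=t\ge 4$, the bound $j^*\ge t-3$ on the first non-trivial spine-subtree, and the safety of placing a value-$3$ leaf in a second branch of $S_0$ are all correct. But the proof is not complete, and you say so yourself: the case where the obstructing broadcast vertex $u$ is as close as possible ($j^*=t-3$) \emph{and} $f(u)$ is large is exactly the hard part, and ``I expect this to be resolved by an amortized argument along the spine'' is not an argument. Two concrete problems remain. First, your count of ``fillable'' subtrees only accounts for the reach of $u$; a value-$1$ leaf $m$ placed in $S_i$ must also avoid domination by \emph{every} broadcast vertex, and a far-away vertex $w$ with large $f(w)$ (which need not be the minimal-index one) can cover all of $S_1,\dots,S_{j^*-1}$, so the naive capacity bound $j^*-1\ge t-4$ is not justified. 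Second, the claimed chain of cancelling deficits is never shown to terminate or to balance.

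For comparison, the paper closes this case without any unbounded amortization. It first forces $\alpha=f(\ell)\ge 7$ (for $4\le\alpha\le 6$ two value-$3$ leaves in distinct branches of $S_0$ already recover the cost), then takes $f$ optimal \emph{minimizing} $\alpha$, and observes that some vertex at distance exactly $\alpha+1$ from $\ell$ must be a broadcast vertex; it is a $2$-leaf of $S_{\alpha-3}$, a $1$-leaf of $S_{\alpha-2}$, or $v_{\alpha-1}$. If its value $\beta$ is small, a local exchange contradicts optimality or minimality of $\alpha$. If $\beta$ is within $1$ of $\alpha$, the paper shows that the leaves of the next spine-subtree must be dominated by a \emph{unique} further vertex $x$ with $f(x)\ge\beta-1$, and then performs one global redistribution over $S_0,\dots,S_{2\alpha-5}$ (value $3$ at $\ell$, at $x$'s subtree and at the intermediate obstruction, value $2$ elsewhere), whose net gain is $3\alpha-\beta-f(x)-6$ or $3\alpha-\beta-f(x)-7$, which is nonnegative precisely because $\alpha\ge 7$ and $f(x)\le\beta\le\alpha$. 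So the ``chain'' stops after exactly two levels, and the bound $\alpha\ge 7$ is what makes the bookkeeping close --- a reduction your proposal does not make and would need. As written, your argument establishes the easy cases but leaves the decisive one open.
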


\begin{proof} 
Recall first that, by Observation~\ref{obs:ends}, both spine-subtrees $S_0$ and $S_k$ must be of type $\SSS_2$.
Also note that, by symmetry, it is enough to prove the result for~$S_0$.
If $S_0$ has at least two broadcast leaves, then the broadcast value of each of them is
at most~3, since every two such leaves are at distance at most~4 from each other.
We thus only need to consider the case when $S_0$ has a unique broadcast leaf.
Moreover, we can assume that the broadcast value of this leaf is at least~7,
since otherwise, by setting the broadcast value of any two leaves at distance~4 from
each other to~3, we would get either a broadcast satisfying the requirement of the lemma,
or a contradiction with the optimality of the broadcast.
Therefore, we get that the result holds if $k\le 3$ since, in that case,
$\diam(L)\le 7$, which implies $f(v)\le 6$ for every vertex $v$ of $L$
since $f$ is maximal.  

The proof now is by contradiction.
Let $L$ be a counter-example to the lemma, of length $k\ge 4$, 
and $f$ be an optimal independent broadcast on $L$ which minimizes
the value of $f(\ell)=\alpha$,
where $\ell$ is the (unique) $f$-broadcast leaf of $S_0$. We thus have $\alpha\geq 7$.

Observe that at least one vertex at distance $\alpha+1$ from $\ell$
must be an $f$-broadcast vertex, since otherwise we could increase
the value of $f(\ell)$ by~1, contradicting the optimality of $f$.
Let $x$ denote any such vertex.
The spine-subtrees $S_1,\dots,S_{\alpha-4}$ do not contain any
$f$-broadcast vertex (since every such vertex is $f$-dominated by $\ell$), and 
$x$ is either a 2-leaf of $S_{\alpha-3}$, 
a 1-leaf of $S_{\alpha-2}$,
or the spine-vertex $v_{\alpha-1}$.

We consider four cases, depending on whether these vertices are $f$-broadcast vertices or not.
For each of these cases, we assume that none of the previous cases occurs.

\begin{enumerate}

\item {\it $v_{\alpha-1}$ is an $f$-broadcast vertex.}\\
In this case, by Corollary~\ref{cor:lobster}, we know that $f(x)=1$.
Consider the spine-subtree $S_{\alpha-3}$.
If $S_{\alpha-3}$ is of type $\SSS_1$, then all its vertices are
$f$-dominated by $\ell$, and the mapping $g$ defined by $g(\ell)=\alpha-1$,
$g(\ell_{\alpha-3})=2$ for one leaf $\ell_{\alpha-3}$ of $S_{\alpha-3}$
and $g(v)=f(v)$ for every other vertex $v$ of $L$ is clearly an independent
broadcast on $L$ with $\cost(g)=\cost(f)-1+2=\cost(f)+1$, which contradicts the optimality of $f$
(see Figure~\ref{fig:bords-cas-1-2}(a)).
Now, if $S_{\alpha-3}$ is of type $\SSS_2$, then
$f(\ell_{\alpha-3})\le 3$ for every leaf $\ell_{\alpha-3}$ of $S_{\alpha-3}$.
Therefore, the mapping $g$ defined by $g(\ell)=\alpha-2$,
$g(\ell_{\alpha-4})=2$ for one leaf $\ell_{\alpha-4}$ of $S_{\alpha-4}$
and $g(v)=f(v)$ for every other vertex $v$ of $L$ is clearly an independent
broadcast on $L$, with $\cost(g)=\cost(f)-2+2=\cost(f)$, which contradicts the minimality of $\alpha$
(see Figure~\ref{fig:bords-cas-1-2}(b), where $S_{\alpha-4}$ is supposed to be of type $\SSS_1$,
the case $S_{\alpha-4}$ of type $\SSS_2$ being similar).
Therefore, $v_{\alpha-1}$ cannot be an $f$-broadcast vertex.

\item {\it Both a 2-leaf $\ell_{\alpha-3}$ of $S_{\alpha-3}$ and
a 1-leaf $\ell_{\alpha-2}$ of $S_{\alpha-2}$ are $f$-broadcast vertices.}\\
In this case, we necessarily have $f(\ell_{\alpha-3})\leq 3$ and $f(\ell_{\alpha-2})\leq 3$.
Therefore, the mapping $g$ defined by $g(\ell)=\alpha-2$,
$g(\ell_{\alpha-4})=3$ for one leaf $\ell_{\alpha-4}$ of $S_{\alpha-4}$
and $g(v)=f(v)$ for every other vertex $v$ of $L$ is clearly an independent
broadcast on $L$ with $\cost(g)=\cost(f)-2+3=\cost(f)+1$, which contradicts the optimality of $f$
(see Figure~\ref{fig:bords-cas-1-2}(c), where again $S_{\alpha-4}$ is supposed to be of type $\SSS_1$,
the case $S_{\alpha-4}$ of type $\SSS_2$ being similar).

\begin{figure}
\begin{center}

\begin{tikzpicture}
\node at (0,0.4) {$S_{0}$};
\node at (2.2,0.4) {$S_{\alpha-3}$};
\node at (3.4,0.4) {$S_{\alpha-2}$};
%
\CTun{0}{0}  \CTunL{2.2}{0}{0}  \CTunL{3.4}{0}{$\le 1$}   \CTzeroL{4.6}{0}{1}
\CTunL{0}{-1}{$\alpha$}
\draw[thick,dotted] (0,0) to (2.2,0);
\draw[thick] (0,0) to (0.3,0);
\draw[thick] (1.9,0) to (4.6,0);
\node at (6,-1) {$\longrightarrow$};
%
\node at (7.4+0,0.4) {$S_{0}$};
\node at (7.4+2.2,0.4) {$S_{\alpha-3}$};
\node at (7.4+3.4,0.4) {$S_{\alpha-2}$};
%
\CTun{7.4+0}{0}  \CTunL{7.4+2.2}{0}{2}  \CTunL{7.4+3.4}{0}{$\le 1$}   \CTzeroL{7.4+4.6}{0}{1}
\CTunL{7.4+0}{-1}{$\alpha-1$}
\draw[thick,dotted] (7.4+0,0) to (7.4+2.2,0);
\draw[thick] (7.4+0,0) to (7.4+0.3,0);
\draw[thick] (7.4+1.9,0) to (7.4+4.6,0);
\node at (6,-3) {(a) Case 1, $S_{\alpha-2}$ is of type $\SSS_1$};
\end{tikzpicture}
\vskip 0.5cm
\begin{tikzpicture}
\node at (0,0.4) {$S_{0}$};
\node at (2.2,0.4) {$S_{\alpha-4}$};
\node at (3.4,0.4) {$S_{\alpha-3}$};
\node at (4.6,0.4) {$S_{\alpha-2}$};
%
\CTun{0}{0}  \CTunL{2.2}{0}{0} \CTun{3.4}{0}  \CTunL{4.6}{0}{$\le 1$}   \CTzeroL{5.8}{0}{1}
\CTunL{0}{-1}{$\alpha$}   \CTunL{3.4}{-1}{$\le 3$}      
\draw[thick,dotted] (0,0) to (2.2,0);
\draw[thick] (0,0) to (0.3,0);
\draw[thick] (1.9,0) to (5.8,0);
\node at (7.2,-1) {$\longrightarrow$};
%
\node at (8.6+0,0.4) {$S_{0}$};
\node at (8.6+2.2,0.4) {$S_{\alpha-4}$};
\node at (8.6+3.4,0.4) {$S_{\alpha-3}$};
\node at (8.6+4.6,0.4) {$S_{\alpha-2}$};
%
\CTun{8.6+0}{0}  \CTunL{8.6+2.2}{0}{2} \CTun{8.6+3.4}{0}  \CTunL{8.6+4.6}{0}{$\le 1$}   \CTzeroL{8.6+5.8}{0}{1}
\CTunL{8.6+0}{-1}{$\alpha-2$}   \CTunL{8.6+3.4}{-1}{$\le 3$}      
\draw[thick,dotted] (8.6+0,0) to (8.6+2.2,0);
\draw[thick] (8.6+0,0) to (8.6+0.3,0);
\draw[thick] (8.6+1.9,0) to (8.6+5.8,0);
\node at (7.2,-3) {(b) Case 1, $S_{\alpha-2}$ is of type $\SSS_2$};
\end{tikzpicture}
\vskip 0.5cm
\begin{tikzpicture}
\node at (0,0.4) {$S_{0}$};
\node at (2.2,0.4) {$S_{\alpha-4}$};
\node at (3.4,0.4) {$S_{\alpha-3}$};
\node at (4.6,0.4) {$S_{\alpha-2}$};
\CTun{0}{0}  \CTunL{2.2}{0}{0}  \CTun{3.4}{0}   \CTunL{4.6}{0}{$\le 3$}
\CTunL{0}{-1}{$\alpha$}    \CTunL{3.4}{-1}{$\le 3$}
\draw[thick,dotted] (0,0) to (2.2,0);
\draw[thick] (0,0) to (0.3,0);
\draw[thick] (1.9,0) to (4.6,0);
\node at (6,-1) {$\longrightarrow$};
%
\node at (7.4+0,0.4) {$S_{0}$};
\node at (7.4+2.2,0.4) {$S_{\alpha-4}$};
\node at (7.4+3.4,0.4) {$S_{\alpha-3}$};
\node at (7.4+4.6,0.4) {$S_{\alpha-2}$};
\CTun{7.4+0}{0}  \CTunL{7.4+2.2}{0}{3}  \CTun{7.4+3.4}{0}   \CTunL{7.4+4.6}{0}{$\le 3$}
\CTunL{7.4+0}{-1}{$\alpha-2$}    \CTunL{7.4+3.4}{-1}{$\le 3$}
\draw[thick,dotted] (7.4+0,0) to (7.4+2.2,0);
\draw[thick] (7.4+0,0) to (7.4+0.3,0);
\draw[thick] (7.4+1.9,0) to (7.4+4.6,0);
\node at (6,-3) {(c) Case 2};
\end{tikzpicture}
\caption{\label{fig:bords-cas-1-2}Independent broadcasts for the proof of Lemma~\ref{lem:end-spine-subtrees}, case 1 and case 2 (only one branch per spine-subtree is depicted).
}
\end{center}

\end{figure}

\begin{figure}
\begin{center}
\begin{tikzpicture}
\node at (0,0.4) {$S_{0}$};
\node at (2.2,0.4) {$S_{\alpha-3}$};
\node at (3.4,0.4) {$S_{\alpha-2}$};
%
\CTun{0}{0}  \CTunL{2.2}{0}{0}  \CTunL{3.4}{0}{1}   
\CTunL{0}{-1}{$\alpha$}
\draw[thick,dotted] (0,0) to (2.2,0);
\draw[thick] (0,0) to (0.3,0);
\draw[thick] (1.9,0) to (3.4,0);
\node at (4.8,-1) {$\longrightarrow$};
%
\node at (6.2+0,0.4) {$S_{0}$};
\node at (6.2+2.2,0.4) {$S_{\alpha-3}$};
\node at (6.2+3.4,0.4) {$S_{\alpha-2}$};
%
\CTun{6.2+0}{0}  \CTunL{6.2+2.2}{0}{2}  \CTunL{6.2+3.4}{0}{1}   
\CTunL{6.2+0}{-1}{$\alpha-1$}
\draw[thick,dotted] (6.2+0,0) to (6.2+2.2,0);
\draw[thick] (6.2+0,0) to (6.2+0.3,0);
\draw[thick] (6.2+1.9,0) to (6.2+3.4,0);
\node at (4.8,-3) {(a) Case 3, $\beta=1$};
\end{tikzpicture}
\vskip 0.5cm
\begin{tikzpicture}
\node at (-1.2,0.4) {$S_{0}$};
\node at (0,0.4) {$S_{1}$};
\node at (2.2,0.4) {$S_{\alpha-3}$};
\node at (3.4,0.4) {$S_{\alpha-2}$};
%
\CTun{-1.2}{0} \CTunL{0}{0}{0}  \CTunL{2.2}{0}{0}  \CTunL{3.4}{0}{$2\le\beta<\alpha$}   
\CTunL{-1.2}{-1}{$\alpha$}
\draw[thick,dotted] (-1.2,0) to (2.2,0);
\draw[thick] (-1.2,0) to (0.3,0);
\draw[thick] (1.9,0) to (3.4,0);
\node at (4.8,-1) {$\longrightarrow$};
%
\node at (7.4-1.2,0.4) {$S_{0}$};
\node at (7.4+0,0.4) {$S_{1}$};
\node at (7.4+2.2,0.4) {$S_{\alpha-3}$};
\node at (7.4+3.4,0.4) {$S_{\alpha-2}$};
%
\CTun{7.4-1.2}{0} \CTunL{7.4+0}{0}{2}  \CTunL{7.4+2.2}{0}{2}  \CTunL{7.4+3.4}{0}{2}   
\CTunL{7.4-1.2}{-1}{$3$}
\draw[thick,dotted] (7.4-1.2,0) to (7.4+2.2,0);
\draw[thick] (7.4-1.2,0) to (7.4+0.3,0);
\draw[thick] (7.4+1.9,0) to (7.4+3.4,0);
\node at (4.8,-3) {(b) Case 3, $2\le \beta<\alpha$};
\end{tikzpicture}
\vskip 0.5cm
\begin{tikzpicture}
\node at (-1.2,0.4) {$S_{0}$};
\node at (0,0.4) {$S_{1}$};
\node at (2.2,0.4) {$S_{\alpha-3}$};
\node at (3.4,0.4) {$S_{\alpha-2}$};
\node at (4.6,0.4) {$S_{\alpha-1}$};
\CTun{-1.2}{0} \CTunL{0}{0}{0}  \CTunL{2.2}{0}{0}  \CTunL{3.4}{0}{$\alpha$}  \CTunL{4.6}{0}{0}    
\CTunL{-1.2}{-1}{$\alpha$}
\draw[thick,dotted] (-1.2,0) to (2.2,0);
\draw[thick] (-1.2,0) to (0.3,0);
\draw[thick] (1.9,0) to (4.6,0);
\node at (6,-1) {$\longrightarrow$};
%
\node at (8.6-1.2,0.4) {$S_{0}$};
\node at (8.6+0,0.4) {$S_{1}$};
\node at (8.6+2.2,0.4) {$S_{\alpha-3}$};
\node at (8.6+3.4,0.4) {$S_{\alpha-2}$};
\node at (8.6+4.6,0.4) {$S_{\alpha-1}$};
\CTun{8.6-1.2}{0} \CTunL{8.6+0}{0}{2}  \CTunL{8.6+2.2}{0}{2}  \CTunL{8.6+3.4}{0}{2}  \CTunL{8.6+4.6}{0}{2}    
\CTunL{8.6-1.2}{-1}{$3$}
\draw[thick,dotted] (8.6-1.2,0) to (8.6+2.2,0);
\draw[thick] (8.6-1.2,0) to (8.6+0.3,0);
\draw[thick] (8.6+1.9,0) to (8.6+4.6,0);
\node at (6,-3) {(c) Case 3, $\beta=\alpha$ and $\ell_{\alpha-1}$ is $f$-dominated only by $\ell_{\alpha-2}$};
\end{tikzpicture}
\vskip 0.5cm
\begin{tikzpicture}
\node at (-1.2,0.4) {$S_{0}$};
\node at (0,0.4) {$S_{1}$};
\node at (1.8,0.4) {$S_{\alpha-2}$};
\node at (3,0.4) {$S_{\alpha-1}$};
\node at (4.8,0.4) {$S_{2\alpha-3}$};
\CTun{-1.2}{0} \CTunL{0}{0}{0}  \CTunL{1.8}{0}{$\alpha$}  \CTunL{3}{0}{0}  \CTunL{4.8}{0}{$\alpha$}    
\CTunL{-1.2}{-1}{$\alpha$}
\draw[thick,dotted] (-1.2,0) to (1.8,0);
\draw[thick,dotted] (3,0) to (4.8,0);
\draw[thick] (-1.2,0) to (0.3,0);
\draw[thick] (1.5,0) to (3.3,0);
\draw[thick] (4.5,0) to (4.8,0);
\node at (6.2,-1) {$\longrightarrow$};
\node at (8.8-1.2,0.4) {$S_{0}$};
\node at (8.8+0,0.4) {$S_{1}$};
\node at (8.8+1.8,0.4) {$S_{\alpha-3}$};
\node at (8.8+3,0.4) {$S_{\alpha-2}$};
\node at (8.8+4.8,0.4) {$S_{2\alpha-3}$};
\CTun{8.8-1.2}{0} \CTunL{8.8+0}{0}{2}  \CTunL{8.8+1.8}{0}{$2$}  \CTunL{8.8+3}{0}{2}  \CTunL{8.8+4.8}{0}{$\alpha-1$}    
\CTunL{8.8-1.2}{-1}{$3$}
\draw[thick,dotted] (8.8-1.2,0) to (8.8+1.8,0);
\draw[thick,dotted] (8.8+3,0) to (8.8+4.8,0);
\draw[thick] (8.8-1.2,0) to (8.8+0.3,0);
\draw[thick] (8.8+1.5,0) to (8.8+3.3,0);
\draw[thick] (8.8+4.5,0) to (8.8+4.8,0);
\node at (6.2,-3) {(d) Case 3, $\beta=\alpha$ and $\ell_{\alpha-1}$ is $f$-dominated by $\ell_{\alpha-2}$ and $x=\ell_{2\alpha-3}$};
\end{tikzpicture}

\caption{\label{fig:bords-cas-3}Independent broadcasts for the proof of Lemma~\ref{lem:end-spine-subtrees}, case 3 (only one branch per spine-subtree is depicted).
}
\end{center}

\end{figure}

\item {\it A 1-leaf $\ell_{\alpha-2}$ of $S_{\alpha-2}$ is an $f$-broadcast vertex.}\\
We let $\beta=f(\ell_{\alpha-2})$. We necessarily have $\beta\leq\alpha$.
If $\beta=1$, the mapping $g$ defined by $g(\ell)=\alpha-1$,
$g(\ell_{\alpha-3})=2$ for one leaf $\ell_{\alpha-3}$ of $S_{\alpha-3}$
and $g(v)=f(v)$ for every other vertex $v$ of $L$, is clearly an independent
broadcast on $L$ with $\cost(g)=\cost(f)+1$, which contradicts the optimality of $f$
(see Figure~\ref{fig:bords-cas-3}(a), where $S_{\alpha-3}$ is supposed to be of type $\SSS_1$,
the case $S_{\alpha-3}$ of type $\SSS_2$ being similar).

Suppose now $\beta\geq 2$ and let $g_{\alpha-2}$ be the mapping defined 
by $g_{\alpha-2}(\ell)=3$, $g_{\alpha-2}(\ell_j)=2$ for one leaf $\ell_j$
of each spine-subtree $S_j$, $1\le j\le \alpha-3$, $g_{\alpha-2}(\ell_{\alpha-2})=2$
and $g(v)=f(v)$ for every other vertex $v$ of $L$.
The mapping $g_{\alpha-2}$ is clearly an independent
broadcast on $L$, with 
$$\cost(g_{\alpha-2})=\cost(f)-\alpha-\beta+3+2(\alpha-2)=\cost(f)+(\alpha-\beta) -1$$
(see Figure~\ref{fig:bords-cas-3}(b), where $S_1,\dots,S_{\alpha-3}$ are supposed to be of type $\SSS_1$,
all other cases being similar).
Therefore, the mapping $g_{\alpha-2}$ contradicts either the optimality of $f$ or
the minimality of $\alpha$ if $\beta<\alpha$.

We can thus assume $\beta=\alpha$.
Suppose first that the spine-subtree $S_{\alpha-1}$ contains a leaf $\ell_{\alpha-1}$
which is $f$-dominated only by $\ell_{\alpha-2}$.
(Observe that this is in particular the case if $S_{\alpha-1}$ is of type $\SSS_2$.)
In that case, we consider the mapping $g_{\alpha-1}$ whose definition is similar
to the above definition of $g_{\alpha-2}$, by simply replacing $\alpha-2$ by $\alpha-1$.
The mapping $g_{\alpha-1}$ is clearly an independent
broadcast on $L$, with 
$$\cost(g_{\alpha-1})=\cost(f)-2\alpha +3 +2(\alpha-1)=\cost(f)+1,$$
which contradicts the optimality of $f$
(see Figure~\ref{fig:bords-cas-3}(c), where $S_1,\dots,S_{\alpha-3}$ and $S_{\alpha-1}$ 
are supposed to be of type $\SSS_1$, all other cases being similar). 

Therefore, $S_{\alpha-1}$ is of type $\SSS_1$ and each of its 1-leaves 
is $f$-dominated by $\ell_{\alpha-2}$ and (at least) one other vertex $x$.
Moreover, we necessarily have $f(x)=f(\ell_{\alpha-2})=\alpha\geq 7$, which
implies the uniqueness of~$x$,
since a 2-leaf of $S_{2\alpha-4}$ and a 1-leaf of $S_{2\alpha-3}$ are at distance~4 from each other,
and $v_{2\alpha-2}$ cannot be an $f$-broadcast vertex by Corollary~\ref{cor:lobster}.
We then consider the mapping $g'_{\alpha-1}$ 
defined by $g'_{\alpha-1}(x)=\alpha-1$
and $g'_{\alpha-1}(v)=g_{\alpha-1}(v)$ for every other vertex $v$ of $L$.
Again, the mapping $g'_{\alpha-1}$ is clearly an independent
broadcast on $L$, with 
$$\cost(g'_{\alpha-1})=\cost(g_{\alpha-1})-1=\cost(f),$$
which contradicts the minimality of $\alpha$
(see Figure~\ref{fig:bords-cas-3}(d), where $x$ is a 1-leaf of $S_{2\alpha-3}$, the case
when $x$ is a 2-leaf of $S_{2\alpha-4}$ being similar).

\item {\it A 2-leaf $\ell_{\alpha-3}$ of $S_{\alpha-3}$ is an $f$-broadcast vertex.}\\
Note first that if $\alpha-3=k$, then the optimality of $f$ implies 
$f(\ell_{\alpha-3})=\alpha$, so that $\cost(f)=2\alpha=2(\diam(L)-1)$, in
contradiction with Observation~\ref{obs:lower-bound-2-lul}. We thus have $\alpha-3<k$.

We let $\beta=f(\ell_{\alpha-3})$. We necessarily have $\beta\leq\alpha$.

If $\beta\leq 2$, the mapping $g$ defined by $g(\ell)=\alpha-1$,
$g(\ell_{\alpha-3})=3$,
and $g(v)=f(v)$ for every other vertex $v$ of $L$, is clearly an independent
broadcast on $L$ with $\cost(g)=\cost(f)-1-\beta+3=\cost(f)-\beta+2\ge\cost(f)$, 
which contradicts the minimality of $\alpha$ or the optimality of $f$
(see Figure~\ref{fig:bords-cas-4}(a)).

Suppose now $\beta\geq 3$ and let $g_{\alpha-3}$ be the mapping defined 
by $g_{\alpha-3}(\ell)=3$, $g_{\alpha-3}(\ell_j)=2$ for one leaf $\ell_j$
of each spine-subtree $S_j$, $1\le j\le \alpha-4$, $g_{\alpha-3}(\ell_{\alpha-3})=3$
and $g(v)=f(v)$ for every other vertex $v$ of $L$.
The mapping $g_{\alpha-3}$ is clearly an independent
broadcast on $L$, with 
$$\cost(g_{\alpha-3})=\cost(f)-\alpha-\beta+3+2(\alpha-4)+3=\cost(f)+(\alpha-\beta) -2$$
(see Figure~\ref{fig:bords-cas-4}(b), where $S_1,\dots,S_{\alpha-4}$ are supposed to be of type $\SSS_1$,
all other cases being similar). 
Therefore, the mapping $g_{\alpha-3}$ contradicts either the optimality of $f$ or
the minimality of $\alpha$ if $\beta<\alpha-1$.

We can thus assume $\alpha-1 \leq \beta \leq \alpha$, so that $\beta\ge 6$.
Suppose that the spine-subtree $S_{\alpha-2}$ contains a leaf $\ell_{\alpha-2}$
which is $f$-dominated only by $\ell_{\alpha-3}$.
In that case, we consider the mapping $g_{\alpha-2}$ defined by 
$g_{\alpha-2}(\ell)=3$, $g_{\alpha-2}(\ell_j)=2$ for one leaf $\ell_j$
of each spine-subtree $S_j$, $1\le j\le \alpha-4$, 
$g_{\alpha-2}(\ell_{\alpha-3})=g_{\alpha-2}(\ell_{\alpha-2})=3$,
and $g(v)=f(v)$ for every other vertex $v$ of $L$
(see Figure~\ref{fig:bords-cas-4}(c), where $S_1,\dots,S_{\alpha-4}$ and $S_{\alpha-2}$ are supposed to be of type $\SSS_1$,
all other cases being similar).
The mapping $g_{\alpha-2}$ is clearly an independent
broadcast on $L$, with 
$$\cost(g_{\alpha-2})=\cost(f)-\alpha-\beta+3+2(\alpha-4)+3+3=\cost(f)+(\alpha-\beta)+1>\cost(f),$$
which contradicts the optimality of $f$. 

Therefore, each leaf of $S_{\alpha-2}$ 
is $f$-dominated by $\ell_{\alpha-3}$ and (at least) one other vertex $x$.
Moreover, we necessarily have 
$f(\ell_{\alpha-3})-1\leq f(x)\leq f(\ell_{\alpha-3})$ if $S_{\alpha-2}$ is of type $\SSS_1$, or
$f(x)=f(\ell_{\alpha-3})$ if $S_{\alpha-2}$ is of type $\SSS_2$.
Hence, $f(x)\geq f(\ell_{\alpha-3})-1=\beta-1\ge 5$,
which implies the uniqueness of~$x$,
since a 2-leaf of $S_{2\alpha-6}$ and a 1-leaf of $S_{2\alpha-5}$ are at distance~4 from each other,
and $v_{2\alpha-4}$ cannot be an $f$-broadcast vertex by Corollary~\ref{cor:lobster}.

\begin{figure}
\begin{center}
\begin{tikzpicture}
\node at (0,0.4) {$S_{0}$};
\node at (2.2,0.4) {$S_{\alpha-3}$};
%
\CTun{0}{0}  \CTun{2.2}{0}  
\CTunL{0}{-1}{$\alpha$}  \CTunL{2.2}{-1}{$\beta\le 2$}
\draw[thick,dotted] (0,0) to (2.2,0);
\draw[thick] (0,0) to (0.3,0);
\draw[thick] (1.9,0) to (2.2,0);
\node at (4.8,-1) {$\longrightarrow$};
%
\node at (6.2+0,0.4) {$S_{0}$};
\node at (6.2+2.2,0.4) {$S_{\alpha-3}$};
%
\CTun{6.2+0}{0}  \CTun{6.2+2.2}{0}  
\CTunL{6.2+0}{-1}{$\alpha-1$}  \CTunL{6.2+2.2}{-1}{3}
\draw[thick,dotted] (6.2+0,0) to (6.2+2.2,0);
\draw[thick] (6.2+0,0) to (6.2+0.3,0);
\draw[thick] (6.2+1.9,0) to (6.2+2.2,0);
\node at (4.8,-3) {(a) Case 4, $\beta\leq 2$};
\end{tikzpicture}
\vskip 0.5cm
\begin{tikzpicture}
\node at (-1.2,0.4) {$S_{0}$};
\node at (0,0.4) {$S_{1}$};
\node at (2.2,0.4) {$S_{\alpha-4}$};
\node at (3.4,0.4) {$S_{\alpha-3}$};
%
\CTun{-1.2}{0} \CTunL{0}{0}{0}  \CTunL{2.2}{0}{0}  \CTun{3.4}{0}  
\CTunL{-1.2}{-1}{$\alpha$}  \CTunL{3.4}{-1}{$3\le\beta<\alpha-1$}
\draw[thick,dotted] (-1.2,0) to (2.2,0);
\draw[thick] (-1.2,0) to (0.3,0);
\draw[thick] (1.9,0) to (3.4,0);
\node at (4.8,-1) {$\longrightarrow$};
%
\node at (7.4-1.2,0.4) {$S_{0}$};
\node at (7.4+0,0.4) {$S_{1}$};
\node at (7.4+2.2,0.4) {$S_{\alpha-4}$};
\node at (7.4+3.4,0.4) {$S_{\alpha-3}$};
%
\CTun{7.4-1.2}{0} \CTunL{7.4+0}{0}{2}  \CTunL{7.4+2.2}{0}{2}  \CTun{7.4+3.4}{0}
\CTunL{7.4-1.2}{-1}{$3$}  \CTunL{7.4+3.4}{-1}{$3$}
\draw[thick,dotted] (7.4-1.2,0) to (7.4+2.2,0);
\draw[thick] (7.4-1.2,0) to (7.4+0.3,0);
\draw[thick] (7.4+1.9,0) to (7.4+3.4,0);
\node at (4.8,-3) {(b) Case 4, $3\le \beta < \alpha-1$};
\end{tikzpicture}
\vskip 0.5cm
\begin{tikzpicture}
\node at (-1.2,0.4) {$S_{0}$};
\node at (0,0.4) {$S_{1}$};
\node at (2.2,0.4) {$S_{\alpha-3}$};
\node at (3.4,0.4) {$S_{\alpha-2}$};
%
\CTun{-1.2}{0} \CTunL{0}{0}{0}  \CTun{2.2}{0}  \CTunL{3.4}{0}{0}   
\CTunL{-1.2}{-1}{$\alpha$}   \CTunL{2.2}{-1}{$\alpha-1\le\beta\le\alpha$}     
\draw[thick,dotted] (-1.2,0) to (2.2,0);
\draw[thick] (-1.2,0) to (0.3,0);
\draw[thick] (1.9,0) to (3.4,0);
\node at (4.8,-1) {$\longrightarrow$};
%
\node at (7.4-1.2,0.4) {$S_{0}$};
\node at (7.4+0,0.4) {$S_{1}$};
\node at (7.4+2.2,0.4) {$S_{\alpha-3}$};
\node at (7.4+3.4,0.4) {$S_{\alpha-2}$};
%
\CTun{7.4-1.2}{0} \CTunL{7.4+0}{0}{2}  \CTun{7.4+2.2}{0}  \CTunL{7.4+3.4}{0}{3}   
\CTunL{7.4-1.2}{-1}{$3$}   \CTunL{7.4+2.2}{-1}{$3$}
\draw[thick,dotted] (7.4-1.2,0) to (7.4+2.2,0);
\draw[thick] (7.4-1.2,0) to (7.4+0.3,0);
\draw[thick] (7.4+1.9,0) to (7.4+3.4,0);
\node at (4.8,-3) {(c) Case 4, $6\le \alpha-1\le \beta \le \alpha$ and $\ell_{\alpha-2}$ is only $f$-dominated
by $\ell_{\alpha-3}$};
%
\end{tikzpicture}
\caption{\label{fig:bords-cas-4}Independent broadcasts for the proof of Lemma~\ref{lem:end-spine-subtrees}, case 4 (only one branch per spine-subtree is depicted).
}
\end{center}

\end{figure}
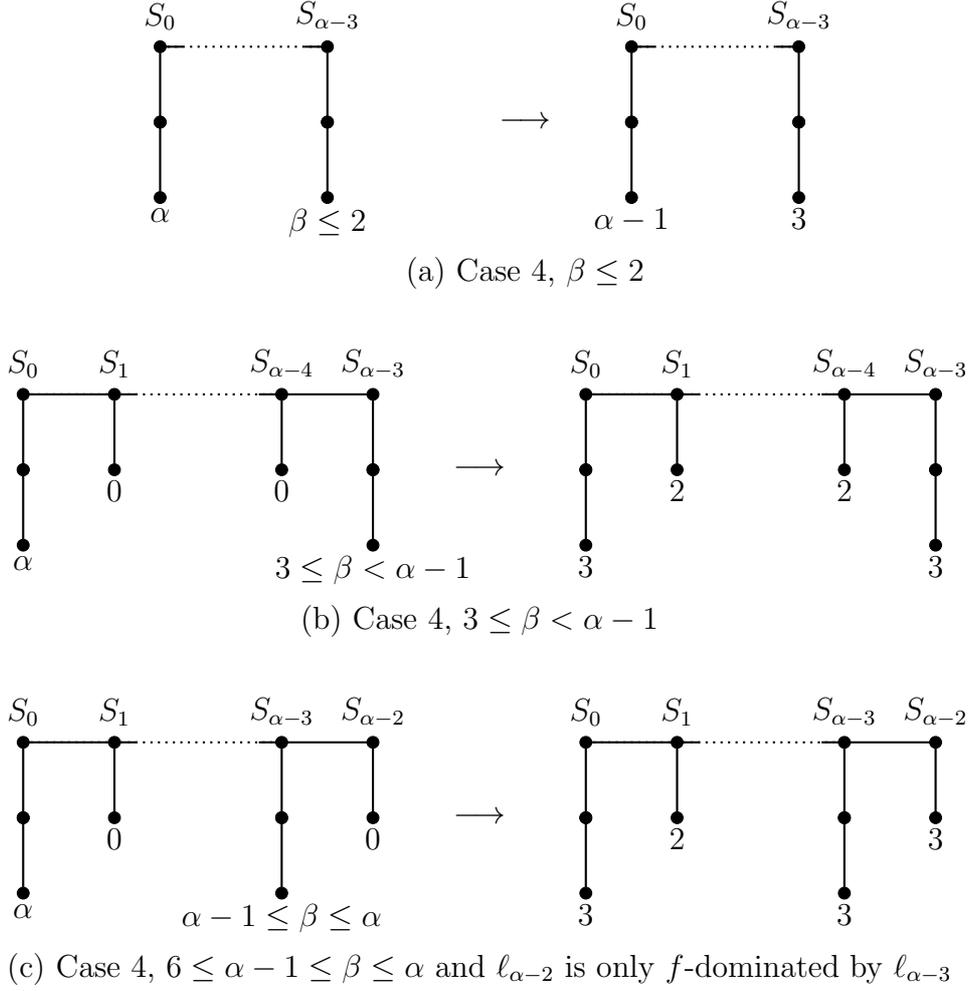

We consider two subcases, depending on whether $x$ is a 
1-leaf of $S_{2\alpha-5}$ or a 2-leaf of $S_{2\alpha-6}$.

\begin{enumerate}
\item {\it $x$ is a 1-leaf of $S_{2\alpha-5}$}.\\
In this case, we consider the mapping $g_{x}$ 
defined by $g_x(\ell)=g_x(\ell_{\alpha-3})=3$, $g_{x}(x)=2$,
$g_x(\ell_j)=2$ for one leaf $\ell_j$ for each of the spine-subtrees
$S_j$, $j\in\{1,\dots,2\alpha-6\}\setminus\{\alpha-3\}$,
and $g_{x}(v)=f(v)$ for every other vertex $v$ of $L$
(see Figure~\ref{fig:bords-cas-4-bis}(a)).
Again, the mapping $g_{x}$ is clearly an independent
broadcast on $L$, with 
$$\begin{array}{rcl}
\cost(g_{x}) & = & \cost(f)-\alpha+3 + 2(\alpha-4) -\beta+3 + 2(\alpha-3)-f(x)+2\\
 & = & \cost(f) + 3\alpha - \beta - f(x) - 6.
\end{array}$$
We thus get a contradiction on the minimality of $\alpha$ or the optimality of $f$ 
since $\alpha\ge 7$ and $f(x)\leq\beta\leq\alpha$.

\item {\it $x$ is a 2-leaf of $S_{2\alpha-6}$}.\\
In this case, we consider the mapping $g_{x}$ 
defined by $g_x(\ell)=g_x(\ell_{\alpha-3})=g_{x}(x)=3$,
$g_x(\ell_j)=2$ for one leaf $\ell_j$ for each of the spine-subtrees
$S_j$, $j\in\{1,\dots,2\alpha-7\}\setminus\{\alpha-3\}$,
and $g_{x}(v)=f(v)$ for every other vertex $v$ of $L$
(see Figure~\ref{fig:bords-cas-4-bis}(b)).
Again, the mapping $g_{x}$ is clearly an independent
broadcast on $L$, with 
$$\begin{array}{rcl}
\cost(g_{x}) & = & \cost(f)-\alpha+3 + 2(\alpha-4) -\beta+3 + 2(\alpha-4)-f(x)+3\\
 & = & \cost(f) + 3\alpha - \beta - f(x) - 7.
\end{array}$$
We thus get a contradiction on the minimality of $\alpha$ or the optimality of $f$ 
since $\alpha\ge 7$ and $f(x)\leq\beta\leq\alpha$.

\end{enumerate}
\end{enumerate}

We thus obtain a contradiction in each case,
which implies that no counter-example to the lemma exists.
This completes the proof.
\end{proof}

\begin{figure}
\begin{center}
\begin{tikzpicture}
\node at (-1.2,0.4) {$S_{0}$};
\node at (0,0.4) {$S_{1}$};
\node at (1.8,0.4) {$S_{\alpha-3}$};
\node at (3,0.4) {$S_{\alpha-2}$};
\node at (4.8,0.4) {$S_{2\alpha-5}$};
\CTun{-1.2}{0} \CTunL{0}{0}{0}  \CTun{1.8}{0}  \CTunL{3}{0}{0}  \CTunL{4.8}{0}{$f(x)$}
\CTunL{-1.2}{-1}{$\alpha$}  \CTunL{1.8}{-1}{$\beta$}
\draw[thick,dotted] (-1.2,0) to (1.8,0);
\draw[thick,dotted] (3,0) to (4.8,0);
\draw[thick] (-1.2,0) to (0.3,0);
\draw[thick] (1.5,0) to (3.3,0);
\draw[thick] (4.5,0) to (4.8,0);
\node at (6.2,-1) {$\longrightarrow$};
\node at (8.8-1.2,0.4) {$S_{0}$};
\node at (8.8+0,0.4) {$S_{1}$};
\node at (8.8+1.8,0.4) {$S_{\alpha-3}$};
\node at (8.8+3,0.4) {$S_{\alpha-2}$};
\node at (8.8+4.8,0.4) {$S_{2\alpha-5}$};
\CTun{8.8-1.2}{0} \CTunL{8.8+0}{0}{2}  \CTun{8.8+1.8}{0}  \CTunL{8.8+3}{0}{2}  \CTunL{8.8+4.8}{0}{2}    
\CTunL{8.8-1.2}{-1}{$3$}  \CTunL{8.8+1.8}{-1}{$3$}
\draw[thick,dotted] (8.8-1.2,0) to (8.8+1.8,0);
\draw[thick,dotted] (8.8+3,0) to (8.8+4.8,0);
\draw[thick] (8.8-1.2,0) to (8.8+0.3,0);
\draw[thick] (8.8+1.5,0) to (8.8+3.3,0);
\draw[thick] (8.8+4.5,0) to (8.8+4.8,0);
\node at (6.2,-3) {(a) Case 4(a), $\beta\ge 7$, $\beta-1\le f(x)\le\beta$ 
and $\ell_{\alpha-2}$ is $f$-dominated by $\ell_{\alpha-3}$ and $x=\ell_{2\alpha-5}$};
\end{tikzpicture}
\vskip 0.5cm
\begin{tikzpicture}
\node at (-1.2,0.4) {$S_{0}$};
\node at (0,0.4) {$S_{1}$};
\node at (1.8,0.4) {$S_{\alpha-3}$};
\node at (3,0.4) {$S_{\alpha-2}$};
\node at (4.8,0.4) {$S_{2\alpha-6}$};
\CTun{-1.2}{0} \CTunL{0}{0}{0}  \CTun{1.8}{0}  \CTunL{3}{0}{0}  \CTun{4.8}{0}
\CTunL{-1.2}{-1}{$\alpha$}  \CTunL{1.8}{-1}{$\beta$}   \CTunL{4.8}{-1}{$f(x)$}
\draw[thick,dotted] (-1.2,0) to (1.8,0);
\draw[thick,dotted] (3,0) to (4.8,0);
\draw[thick] (-1.2,0) to (0.3,0);
\draw[thick] (1.5,0) to (3.3,0);
\draw[thick] (4.5,0) to (4.8,0);
\node at (6.2,-1) {$\longrightarrow$};
\node at (8.8-1.2,0.4) {$S_{0}$};
\node at (8.8+0,0.4) {$S_{1}$};
\node at (8.8+1.8,0.4) {$S_{\alpha-3}$};
\node at (8.8+3,0.4) {$S_{\alpha-2}$};
\node at (8.8+4.8,0.4) {$S_{2\alpha-6}$};
\CTun{8.8-1.2}{0} \CTunL{8.8+0}{0}{2}  \CTun{8.8+1.8}{0}  \CTunL{8.8+3}{0}{2}  \CTun{8.8+4.8}{0}  
\CTunL{8.8-1.2}{-1}{$3$}  \CTunL{8.8+1.8}{-1}{$3$}   \CTunL{8.8+4.8}{-1}{$3$}
\draw[thick,dotted] (8.8-1.2,0) to (8.8+1.8,0);
\draw[thick,dotted] (8.8+3,0) to (8.8+4.8,0);
\draw[thick] (8.8-1.2,0) to (8.8+0.3,0);
\draw[thick] (8.8+1.5,0) to (8.8+3.3,0);
\draw[thick] (8.8+4.5,0) to (8.8+4.8,0);
\node at (6.2,-3) {(b) Case 4(b), $\beta\ge 7$, $\beta-1\le f(x)\le\beta$ 
and $\ell_{\alpha-2}$ is $f$-dominated by $\ell_{\alpha-3}$ and $x=\ell_{2\alpha-6}$};
\end{tikzpicture}
\caption{\label{fig:bords-cas-4-bis}Independent broadcasts for the proof of Lemma~\ref{lem:end-spine-subtrees}, case 4 continued (only one branch per spine-subtree is depicted).
}
\end{center}

\end{figure}


Let $f$ be any independent broadcast on a locally uniform $2$-lobster $L$ and 
$S_i$ be any spine-subtree  of~$L$.
Recall that $f^*(S_i)$ denotes the broadcast value of~$S_i$,
that is, the sum of the broadcast values of the vertices of $S_i$.
The next lemma shows that if $f$ is an optimal independent broadcast
on a locally uniform $2$-lobster $L$ of length $k\ge 1$, 
then there exists an optimal independent broadcast $\tf$ on $L$
such that 
every spine-subtree of $L$ contains an $\tf$-broadcast vertex.

\begin{lemma}
If $L$ is a locally uniform $2$-lobster of length $k\ge 1$, 
and $f$ is an optimal independent broadcast on $L$, 
then 
there exists an optimal independent broadcast $\tf$ on~$L$
such that
$\tf^*(S_i)>0$ for every $i$, $0\le i\le k$.
\label{lem:no-null-spine-subtree}
\end{lemma}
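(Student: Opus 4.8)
The plan is to repair the empty spine-subtrees of $f$ one maximal run at a time, each repair producing an independent broadcast of the same cost but with strictly fewer empty spine-subtrees, so that after finitely many steps none remains. First, by Lemma~\ref{lem:broadcast-vertex-in-S0-Sk} we already have $f^*(S_0)>0$ and $f^*(S_k)>0$, so every empty spine-subtree lies among $S_1,\dots,S_{k-1}$; moreover, by Corollary~\ref{cor:lobster}, $f(v_i)=0$ whenever $S_i$ is empty, so $S_i$ is empty precisely when none of its leaves is a broadcast vertex.

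The crucial structural remark I would exploit is that, since $L$ is locally uniform, all leaves of a given $S_i$ lie at the same depth ($1$ for type $\SSS_1$, $2$ for type $\SSS_2$); hence for any vertex $x\notin S_i$ one has $d_L(x,\ell)=d_L(x,v_i)+\mathrm{depth}$ for every leaf $\ell$ of $S_i$, so $x$ dominates either all leaves of $S_i$ or none of them. Fixing a maximal run $S_a,\dots,S_b$ of empty subtrees (with $1\le a\le b\le k-1$ and $S_{a-1},S_{b+1}$ non-empty), a short independence argument shows that the only broadcast vertices able to dominate the leaves of this run are the nearest one on each side, namely a leaf $u_L\in S_{a-1}$ and a leaf $u_R\in S_{b+1}$: a spine-vertex, having value at most~$1$ by Corollary~\ref{cor:lobster}, cannot reach the run, and any broadcast vertex farther out would dominate $u_L$ or $u_R$, contradicting independence. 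Since reachability is monotone in the spine distance, $u_L$ dominates a prefix $S_a,\dots,S_{m_L}$ and $u_R$ a suffix $S_{m_R},\dots,S_b$ of the run, and together these must cover it.

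My repair is a spreading move: set the value~$2$ on one leaf $w_m$ of each empty $S_m$, $a\le m\le b$, and lower $\tf(u_L)$ to just below $d_L(u_L,w_a)$ and $\tf(u_R)$ to just below $d_L(u_R,w_b)$, keeping each of them a positive value so that $S_{a-1}$ and $S_{b+1}$ (and every previously non-empty spine-subtree) stay non-empty. The seeds $w_m$ are pairwise at distance at least~$3$ and therefore mutually independent, they lie outside the reduced ranges of $u_L$ and $u_R$, and value~$2$ is too small for a seed to dominate $u_L$, $u_R$, or anything outside the run; so the result is an independent broadcast filling the whole run. In the case where $u_L$ and $u_R$ cover disjoint parts of the run (that is, $m_L<m_R$, or only one side reaches it), the independence inequality $d_L(u_L,u_R)>\max\{\tf(u_L),\tf(u_R)\}$ bounds $\tf(u_L)$ by $p_L+r+p_R$, where $r=b-a+1$ is the length of the run and $p_L,p_R\in\{1,2\}$ are the depths of $u_L,u_R$; a direct computation then shows that the value added, $2r$, is at least the value removed, so the cost does not decrease, and, $f$ being optimal, it is unchanged.

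The main obstacle I anticipate is the overlapping case $m_L\ge m_R$, where some middle subtree is dominated from both sides at once: then both $u_L$ and $u_R$ are forced to be large, their reductions are costly, and the crude independence bound on $\tf(u_L)+\tf(u_R)$ no longer guarantees that the spread pays for itself. Resolving this will require using the redundancy of the double coverage more finely, combining the two reachability lower bounds for the overlap subtree with the exceedance constraints (a subtree exceeding by~$e$ forbidding broadcast vertices at $v_{i-e},\dots,v_{i+e}$), and, as in the proof of Lemma~\ref{lem:end-spine-subtrees}, splitting into cases according to the types ($\SSS_1$ or $\SSS_2$) and the depths of $u_L$, $u_R$ and of the subtrees of the run. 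I expect this cost bookkeeping, rather than the verification of independence, to be the delicate part.
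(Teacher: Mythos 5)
Your run-based surgery is a genuinely different strategy from the paper's, but as it stands it has a real gap, and it is exactly the one you flag yourself: the overlapping case $m_L\ge m_R$ is not a delicate bookkeeping afterthought, it is where the whole difficulty of the lemma lives, and your spreading move cannot handle it. In that case both $u_L$ and $u_R$ may carry values as large as $p_L+r+p_R$ (the independence bound you quote), and your repair cuts each of them down to $p_L+q_a$ resp.\ $p_R+q_b$; for instance with a run of $\SSS_1$-subtrees flanked by two $\SSS_2$-subtrees whose 2-leaves both dominate the run, the amount removed is about $2r+2(p_L+p_R)-(q_a+q_b)-(p_L+p_R)$, which exceeds the $2r$ you add whenever $p_L+p_R>q_a+q_b$. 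So for an optimal $f$ exhibiting this configuration your construction produces a \emph{cheaper} broadcast and you can conclude nothing; you would first have to prove that such configurations never occur in an optimal broadcast, which is precisely the missing argument. There is also a smaller slip in the independence check: a seed of value $2$ placed on a 1-leaf of $S_a$ is at distance exactly $2$ from $v_{a-1}$, which may be a broadcast vertex of value $1$ when $S_{a-1}$ is of type $\SSS_2$, so ``value $2$ is too small to dominate anything outside the run'' is not quite true and that subcase needs to be excluded or repaired.

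For comparison, the paper avoids both problems by never over-reducing the large broadcast vertices. It chooses an optimal $f$ that \emph{maximises the number of $f$-broadcast leaves}, takes the \emph{smallest} index $i$ with $f^*(S_i)=0$, and performs a minimal exchange: transfer a single unit from a vertex $y$ dominating a leaf $\ell_i$ of $S_i$ onto $\ell_i$ itself (setting $g(y)=f(y)-1$, $g(\ell_i)=1$), which keeps the cost equal but increases the number of broadcast leaves, contradicting the extremal choice. The doubly-dominated situation (your overlap case) is then handled by a dedicated case analysis on the two dominators $y_1,y_2$, whose values are pinned down to $3$ or $4$ by the structure of locally uniform $2$-lobsters. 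If you want to salvage your approach, you will need an analogous extremal device (or an explicit proof that the overlap configuration forces non-optimality, e.g.\ via Observation~\ref{obs:lower-bound-2-lul}) rather than the uniform $2$-per-subtree spread.
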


\begin{proof}
Since the result directly follows from Lemma~\ref{lem:broadcast-vertex-in-S0-Sk} if $k=1$,
we only need to consider the case $k>1$.
Assume to the contrary that there does not exist any such independent broadcast $\tf$,
and let $f$ be an optimal independent broadcast on $L$
that maximises the number of $f$-broadcast leaves.
Let~$i$ be the smallest index such that $f^*(S_i)=0$, and $\ell_i$ be any leaf of $S_i$.

Suppose first that $S_i$ is of type $\SSS_2$, so that $\ell_i$ is a 2-leaf.
From the choice of $S_i$, we know that $f^*(S_{i-1})>0$.
Moreover, we claim that 
$S_{i-1}$ is of type $\SSS_2$ and that,
for every leaf $\ell_{i-1}$ of $S_{i-1}$,
$f(\ell_{i-1})\leq 4$.
Indeed, this follows from Lemma~\ref{lem:end-spine-subtrees} if $i=1$.
If $i\geq 2$, $S_{i-1}$ cannot be of type $\SSS_1$ since otherwise
any vertex that $f$-dominates $\ell_i$ would also $f$-dominate all
the leaves of $S_{i-1}$ (such a dominating vertex must belong
to some $S_j$ with $j>i$). 
Thus $S_{i-1}$ is of type $\SSS_2$
and the fact that $f^*(S_{i-2})>0$ implies $f(\ell_{i-1})\leq 4$
for every leaf $\ell_{i-1}$ of $S_{i-1}$.
Therefore, $\ell_i$ is necessarily $f$-dominated by a
vertex $y\in S_j$, for some $j>i$.
Consider the mapping~$g$ defined by $g(y)=f(y)-1$,
$g(\ell_i)=1$, and $g(v)=f(v)$ for every other vertex $v$ of~$L$.
The mapping~$g$ is
an independent broadcast on $L$ with $\cost(g)=\cost(f)$,
which contradicts the maximality of the number of $f$-broadcast leaves.

Suppose now that $S_i$ is of type $\SSS_1$, so that $\ell_i$ is a 1-leaf,
and that $\ell_i$ is $f$-dominated by a unique vertex $y$.
If $y\in S_{i-1}$ then the mapping~$g$ defined by $g(y)=f(y)-1$,
$g(\ell_i)=1$, and $g(v)=f(v)$ for every other vertex $v$ of~$L$, is
an independent broadcast on $L$, with $\cost(g)=\cost(f)$,
which contradicts the maximality of the number of $f$-broadcast leaves.
If $y\in S_j$ for some $j>i$, then we necessarily have either
$f(y)=d_L(y,\ell_i)$, if $S_{i-1}$ is of type $\SSS_1$, or
$d_L(y,\ell_i)\leq f(y)\leq d_L(y,\ell_i)+1$, if $S_{i-1}$ is of type $\SSS_2$.
Therefore,
the mapping~$g$ defined by $g(y)=d_L(y,\ell_i)-1$,
$g(\ell_i)=1+f(y)-d_L(y,\ell_i)$, and $g(v)=f(v)$ for every other vertex $v$ of~$L$, is
an independent broadcast on $L$, with $\cost(g)=\cost(f)$,
which again contradicts the maximality of the number of $f$-broadcast leaves.

Suppose finally that $S_i$ is of type $\SSS_1$
and that $\ell_i$ is $f$-dominated by 
two distinct vertices $y_1$ and $y_2$,
with $y_1\in S_{i_1}$ and $y_2\in S_{i_2}$. Note that we necessarily have, without loss
of generality, $i_1=i-1$ and $i<i_2$.
We claim that $f(y_1)\ge 3$ and $f(y_2)\ge 3$.
Indeed, if, say, $f(y_1)=2$, then $y_1=v_{i-1}$,
which contradicts Corollary~\ref{cor:lobster}.
The case $f(y_2)=2$ is similar.
Moreover, 
we clearly have either $f(y_1)=3$ and $f(y_2)=d_L(y_2,\ell_i)$, if $S_{i-1}$ is of type $\SSS_1$, or
$f(y_1)=4$ and 
$d_L(y_2,\ell_i)\leq f(y_2)\leq d_L(y_2,\ell_i)+1$, if $S_{i-1}$ is of type $\SSS_2$.
We consider two cases, depending on the value of $f(y_2)$.

\begin{enumerate}
\item {\em $f(y_2)=d_L(y_2,\ell_i)$}.\\
In that case, the mapping~$g$ defined by $g(y_1)=f(y_1)-1$, $g(y_2)=f(y_2)-1$,
$g(\ell_i)=2$, and $g(v)=f(v)$ for every other vertex $v$ of~$L$, is
an independent broadcast on $L$, with $\cost(g)=\cost(f)$, 
which contradicts the maximality of the number of $f$-broadcast leaves. 

\item {\em $f(y_2)=d_L(y_2,\ell_i)+1$}.\\
In that case, we necessarily have that
$S_{i-1}$ is of type $\SSS_2$ and $f(y_1)=4$ on one hand, 
and $f(y_2)\geq 4$ on the other hand,
which implies that $f^*(S_{i_2+1})=0$, if $i_2 < k$.

If $y_2$ is not a 1-leaf of $S_{i+1}$,
then the mapping~$g$ defined by $g(y_1)=3$, $g(y_2)=d_L(y_2,\ell_i)-1$,
$g(\ell_i)=2+f(y_2)-d_L(y_2,\ell_i)$, and $g(v)=f(v)$ for every other vertex $v$ of~$L$, is
an independent broadcast on $L$, with $\cost(g)=\cost(f)$,
which contradicts the maximality of the number of $f$-broadcast leaves.

Otherwise, that is, $y_2$ is a 1-leaf of $S_{i+1}$ and $f(y_2)=4$,
we cannot give to $\ell_i$ the broadcast value $2+f(y_2)-d_L(y_2,\ell_i)=2+4-3=3$,
since $\ell_i$ would then dominate $y_2$.
Observe that since $S_{i+1}$ is of type $\SSS_1$, we have $i+1<k$, so that $S_{i+2}$ exists.

If $S_{i+2}$ is of type $\SSS_2$, then the leaves of $S_{i+2}$ are necessarily
$f$-dominated only by $y_2$. Let $\ell_{i+2}$ be any leaf of $S_{i+2}$.
In that case, the mapping $g$  defined by $g(y_1)=3$, $g(\ell_i)=2$, $g(y_2)=2$,
$g(\ell_{i+2})=1$, and $g(v)=f(v)$ for every other vertex $v$ of~$L$, is
an independent broadcast on $L$, with $\cost(g)=\cost(f)$,
which contradicts the maximality of the number of $f$-broadcast leaves.

Suppose finally that $S_{i+2}$ is of type $\SSS_1$, and let $\ell_{i+2}$
denote any 1-leaf of $S_{i+2}$. Note that $y_2$ $f$-dominates $\ell_{i+2}$.
If $\ell_{i+2}$ is $f$-dominated only by $y_2$,
then the mapping $g$  defined by $g(y_1)=3$, $g(\ell_i)=2$, $g(y_2)=2$,
$g(\ell_{i+2})=2$, and $g(v)=f(v)$ for every other vertex $v$ of~$L$, is
an independent broadcast on $L$, with $\cost(g) > \cost(f)$,
which contradicts the optimality of $f$.
Otherwise, let $z$ be the other vertex of $L$ which $f$-dominates $\ell_{i+2}$.
Then,
the mapping $g$  defined by $g(y_1)=3$, $g(\ell_i)=2$, $g(y_2)=2$,
$g(\ell_{i+2})=2$, $g(z)=g(z)-1$, and $g(v)=f(v)$ for every other vertex $v$ of~$L$, is
an independent broadcast on $L$, with $\cost(g)=\cost(f)$,
which contradicts the maximality of the number of $f$-broadcast leaves.
\end{enumerate}

This completes the proof.
\end{proof}

From Lemma~\ref{lem:no-null-spine-subtree} and Corollary~\ref{cor:lobster}, we get the following corollary.

\begin{corollary}
If $L$ is a locally uniform $2$-lobster of length $k\ge 1$, 
and $f$ is an optimal independent broadcast on $L$, 
then 
there exists an optimal independent broadcast $\tf$ on~$L$
such that,
for every spine-subtree $S_i$ of $L$, $0\le i\le k$, and every vertex $x$ of $S_i$,
$\tf(x)\le 1$ if $x=v_i$,
$\tf(x)\le 3$ if $x$ is a 1-leaf of $S_i$,
and $\tf(x)\le 4$ if $x$ is a 2-leaf of $S_i$.
\label{cor:no-null-spine-subtree}
\end{corollary}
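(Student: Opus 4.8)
The plan is to apply Lemma~\ref{lem:no-null-spine-subtree} to obtain an optimal independent broadcast $\tf$ with $\tf^*(S_i)>0$ for every $i$, $0\le i\le k$, and then to verify the three bounds directly for this single broadcast $\tf$. The spine-vertex bound is immediate: since $\tf$ is optimal and $L$ is locally uniform, every spine-subtree $S_i$ is of type $\SSS_1$ or $\SSS_2$, so part~2 of Corollary~\ref{cor:lobster} yields $\tf(v_i)\le 1$ in either case.

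For the two leaf bounds I would argue by independence. Fix a leaf $x$ of $S_i$ with $\tf(x)>0$ (if $\tf(x)=0$ there is nothing to prove, since all three right-hand sides are nonnegative). Because $k\ge 1$, the subtree $S_i$ has at least one neighbouring spine-subtree $S_j$; concretely, take $j=i+1$ when $i<k$ and $j=i-1$ when $i=k$. Lemma~\ref{lem:no-null-spine-subtree} guarantees a $\tf$-broadcast vertex $w\in S_j$. As distinct spine-subtrees are vertex-disjoint we have $x\ne w$, so the independence of $\tf$ forces $\tf(x)<d_L(x,w)$.

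It then suffices to bound $d_L(x,w)$ from above using the local structure of $L$. Every vertex of $S_j$ lies within distance~$2$ of its root $v_j$, and $d_L(v_i,v_j)=1$. Hence, if $x$ is a 1-leaf then $d_L(x,v_i)=1$ and $d_L(x,w)\le 1+1+2=4$, giving $\tf(x)<4$, that is, $\tf(x)\le 3$; if $x$ is a 2-leaf then $d_L(x,v_i)=2$ and $d_L(x,w)\le 2+1+2=5$, giving $\tf(x)<5$, that is, $\tf(x)\le 4$. These are exactly the two claimed inequalities, and together with the spine-vertex bound they establish the corollary.

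I do not anticipate a genuine obstacle here: the statement is a direct synthesis of the two cited results with elementary distance estimates, and the passage from $\tf(x)<d_L(x,w)\le 4$ (resp.\ $\le 5$) to an integer bound is trivial. The only points that require a moment's care are the existence of a neighbouring spine-subtree---which is precisely why the length-$0$ case must be excluded here, being instead treated separately in Lemma~\ref{lem:beta-star-zero}---and the observation that the chosen broadcast vertex $w$ is distinct from $x$, so that the independence condition may legitimately be applied to the pair $\{x,w\}$.
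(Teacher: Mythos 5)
Your proof is correct and follows essentially the same route as the paper: choose $\tf$ via Lemma~\ref{lem:no-null-spine-subtree}, take the spine-vertex bound from Corollary~\ref{cor:lobster}, and derive the two leaf bounds from the existence of a $\tf$-broadcast vertex in a neighbouring spine-subtree. The paper phrases this last step contrapositively (a larger value at a leaf would force $\tf^*(S)=0$ for a neighbouring spine-subtree $S$), whereas you make the distance computation explicit; the content is the same.
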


\begin{proof}
Let $\tf$ be an optimal independent broadcast on $L$ such that
$\tf^*(S_i)>0$ for every~$i$, $0\le i\le k$.
The existence of~$\tf$ is guaranteed by Lemma~\ref{lem:no-null-spine-subtree}.
If $x=v_i$, then $\tf(x)\le 1$ follows from Corollary~\ref{cor:lobster}.
Otherwise, assuming that the claimed bound on~$\tf(x)$ is not satisfied would imply
$\tf^*(S)=0$ for a neighbouring spine-subtree $S$ of~$S_i$, in contradiction with
our assumption on~$\tf$.
\end{proof}

The next two lemmas show that if $f$ is an optimal independent broadcast
on a locally uniform $2$-lobster $L$ of length $k\ge 1$, 
then there exists an optimal independent broadcast $\tf$ on $L$
such that the $\tf$-value of every spine-subtree $S$ of $L$
is bounded from above by a value depending on the type of $S$.

Recall that $\TTT_4$ denotes the type of sequence used 
in step 4 in the proof of Lemma~\ref{lem:beta-star-general},
that is
$$\TTT_4 = \langle\overline{\EEE_2}.\EEE_0\rangle\ \FFFA.\{(\XXX|\FFFA).\FFFA\}^*\ \langle\EEE_0.\overline{\EEE_2}\rangle.$$

In the following, when we say that a spine-subtree $S_i$ {\it appears as an $\FFFA$-spine-subtree}
(resp. {\it as an $\XXX$-spine-subtree}) {\it in a sequence of type $\TTT_4$}, 
we mean that $S_i=A_j$ (resp. $S_i=X_j$) for some $j$, $0\le j\le p$ (resp. $1\le j\le p$), 
in the corresponding sequence $A_0X_1A_1\dots X_pA_p$.

\begin{lemma}
If $L$ is a locally uniform $2$-lobster of length $k\ge 1$, 
and $f$ is an optimal independent broadcast on $L$, 
then 
there exists an optimal independent broadcast $\tf$ on~$L$
such that, for every spine-subtree $S_i$ of $L$, $0\le i\le k$, $\tf$ satisfies the following properties.
\begin{enumerate}
\item \label{item:lem} $\tf^*(S_i)>0$.
\item \label{item:1-partout} If $\tf^*(S_i)=\lambda_1(S_i)$, or $\tf^*(S_i)=\lambda_2(S_i)$,
then $\tf(\ell)=1$ for every leaf of $S_i$.
\item If $S_i$ is of type $\SSS_1$, then
  \begin{enumerate}
  \item \label{item:GXa} $\tf^*(S_i)\leq \lambda_1(S_i)$ if $S_i$ is of type $\GGG$ or $\XXX_a$,
  \item \label{item:Xc} $\tf^*(S_i)\leq 3$ if $S_i$ is of type $\XXX_c$,
  \item \label{item:XcSEQ} $\tf^*(S_i)\leq 2$ if $S_i$ is of type $\XXX_c$ 
  and $S_i$ belongs to a sequence of type $\TTT_4$, 
  \end{enumerate}
\item If $S_i$ is of type $\SSS_2$, then 
  \begin{enumerate}
  \item \label{item:FaSEQ} $\tf^*(S_i)\leq \lambda_2(S_i) + 1$ if $S_i$ is of type
      $\FFFA$, 
  \item \label{item:XbYc} $\tf^*(S_i)\leq \lambda_2(S_i)+\lambda_2^*(S_i)+\alpha_2^*(S_i)$ if $S_i$ is of type $\XXX_b$ 
      or $\YYYC$,
  \item \label{item:XbSEQ} $\tf^*(S_i)\leq \lambda_2(S_i)+\lambda_2^*(S_i)+\alpha_2^*(S_i) - 1$ if $S_i$ is of type $\XXX_b$
  and $S_i$ belongs to a sequence of type $\TTT_4$.
  \end{enumerate}
\item \label{item:racinePas1}
  If $S_i$ is not of type $\FFFA$, then $\tf(v_i)=0$.
\end{enumerate}
\label{lem:not-more-than-beta-star}
\end{lemma}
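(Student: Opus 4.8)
The plan is to start from an optimal independent broadcast $\tf$ furnished by Corollary~\ref{cor:no-null-spine-subtree}, so that property~1 (every $\tf^*(S_i)>0$) holds automatically and every vertex already obeys the per-vertex caps $\tf(v_i)\le 1$, $\tf\le 3$ on 1-leaves and $\tf\le 4$ on 2-leaves. Among all optimal broadcasts enjoying these caps I would fix one that is \emph{extremal}, namely one maximizing the number of broadcast vertices and, subject to that, minimizing $\sum_{v}\tf(v)^2$ (so the positive values are as spread out and as low as possible). Each remaining property is then derived by a local exchange: if it fails, I modify $\tf$ so as to either strictly increase the cost, contradicting optimality, or preserve the cost while improving the extremal quantity, contradicting the choice of $\tf$. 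The guiding principle for all the per-type bounds is property~1: since every neighbouring spine-subtree already contains a broadcast vertex, no leaf of $S_i$ may carry a value large enough to reach into a neighbour, for that would empty it.

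I would treat property~5 first. For an $\SSS_1$ subtree the root $v_i$ has a pendent neighbour, so $\tf(v_i)=0$ by Corollary~\ref{cor:lobster}. For an $\SSS_2$ subtree of type $\XXX_b$ or $\YYYC$ there is a branch with at most two 2-leaves, short enough that any unit sitting on $v_i$ can be rerouted onto a leaf of that branch without decreasing the cost and while keeping independence; extremality then forces $\tf(v_i)=0$. Type $\FFFA$ is exactly the case where all branches are long and no such reroute exists, which is why the extra $+1$ for the root survives only there. The per-type caps (items 3(a)--(b) and 4(a)--(b)) follow the same template: a 1-leaf carrying value $\ge 2$, or a 2-leaf carrying value $\ge 3$, can be pushed down and its surplus redistributed inside $S_i$ to the all-$1$ pattern on the leaves (plus the single admissible root unit for $\FFFA$), each such move preserving cost and independence while raising the number of broadcast vertices. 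This yields $\tf^*(S_i)\le\lambda_1(S_i)$ for $\GGG,\XXX_a$; $\tf^*(S_i)\le 3$ for $\XXX_c$ (whose isolated $\SSS_1$ structure, flanked by $\SSS_2$ subtrees by Observation~\ref{obs:GouXa-pasXc}, may keep one value-$3$ leaf); $\tf^*(S_i)\le\lambda_2(S_i)+1$ for $\FFFA$; and $\tf^*(S_i)\le\lambda_2(S_i)+\lambda_2^*(S_i)+\alpha_2^*(S_i)$ for $\XXX_b,\YYYC$, where the short branches contribute their value-$3$ leaf. Property~2 is then this spreading move read in reverse: whenever the total on $S_i$ already equals its leaf count, redistributing to all ones is legal and cost-preserving, since lowering values can never destroy independence.

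The hard part will be the two sequence-dependent caps, items 3(c) and 4(c). Here I must show that inside a maximal sequence $\mathbf{S}$ of type $\TTT_4$ one cannot simultaneously realize the full per-type value on every $\XXX_b$ or $\XXX_c$ member and value~$1$ on every flanking $\FFFA$ root. The key structural fact is that in the pattern $\langle\overline{\EEE_2}.\EEE_0\rangle\ \FFFA.\{(\XXX|\FFFA).\FFFA\}^*\ \langle\EEE_0.\overline{\EEE_2}\rangle$ each $X$-member is sandwiched between two $\FFFA$ subtrees $A_{j-1}$ and $A_j$. An $\XXX_b$ (resp.\ $\XXX_c$) member at its full value carries a leaf that exceeds by~$1$ (resp.\ by~$2$), and such a leaf blocks the roots of both adjacent $\FFFA$ subtrees from taking value~$1$. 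I would argue that in an extremal $\tf$ this conflict must resolve in favour of the $\FFFA$ roots: given a $\tf$ in which some $X$-member sits at full value, the surplus unit on that member can be transferred to a flanking $\FFFA$ root, strictly improving the extremal quantity without lowering the cost; iterating drives every $\XXX_b,\XXX_c$ member in every $\TTT_4$ sequence down by one, which is precisely the ``$-1$'' in the two bounds.

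I expect the principal obstacle to be making these last exchanges \emph{globally consistent} rather than merely local. A single transfer is routine, but the lemma asserts the existence of one broadcast $\tf$ in which all $X$-members of all $\TTT_4$ sequences are reduced \emph{while} all $\FFFA$ subtrees still attain their bound, so the transfers must not interfere across neighbouring sequences. I would control this by comparing $\tf$ directly against the construction $f_4$ of Lemma~\ref{lem:beta-star-general} on each maximal $\TTT_4$ sequence and propagating the reductions along the spine from one boundary to the other; the delicate bookkeeping lies in the end-markers $\langle\overline{\EEE_2}.\EEE_0\rangle$ and $\langle\EEE_0.\overline{\EEE_2}\rangle$, which determine whether the two extremal $\FFFA$ subtrees of a sequence may safely absorb a unit, and in ruling out an $\XXX_c$ member abutting a $\GGG$ or $\XXX_a$ subtree, excluded by Observation~\ref{obs:GouXa-pasXc}. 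This is where the bulk of the case analysis will concentrate.
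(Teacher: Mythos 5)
Your overall strategy coincides with the paper's: start from the broadcast supplied by Lemma~\ref{lem:no-null-spine-subtree} and Corollary~\ref{cor:no-null-spine-subtree} (which already gives item~1 and the per-vertex caps, hence item~3(b) for free), derive each per-type bound by a local cost-preserving or cost-increasing exchange whose failure would either empty a neighbouring spine-subtree or contradict optimality, and treat the two sequence-dependent bounds via the conflict between a full-value $\XXX_b$ or $\XXX_c$ member (whose value-$3$ leaf blocks the neighbouring spine-vertices) and the roots of the flanking $\FFFA$-subtrees. Your arguments for items 1, 2, 3(a), 3(b), 4(a), 4(b) and 5 are essentially those of the paper and are sound; the extremal choice you make (maximizing the number of broadcast vertices, then minimizing the sum of squares) plays the same role as the paper's implicit ``modify without decreasing the cost'' reductions.

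The genuine gap is in items 3(c) and 4(c), which you leave as a plan, and the obstacle you anticipate --- propagating reductions along the spine and controlling interference across neighbouring $\TTT_4$ sequences --- is dissolved in the paper by a device you do not name. In a sequence $A_0X_1A_1\dots X_pA_p$ of type $\TTT_4$ there are $p+1$ spine-subtrees in the $\FFFA$-positions for at most $p$ spine-subtrees of type $\XXX$, so each $X_j$ of type $\XXX_b$ or $\XXX_c$ can be \emph{injectively} associated with one of its two flanking $\FFFA$-subtrees. The global statement then splits into pairwise-disjoint local conflicts: for each pair $(X_j,A)$, either $\tfs(X_j)$ already meets the reduced bound, or $X_j$ attains its maximum, in which case the root of $A$ cannot carry value~$1$, and one unit is shifted from the exceeding leaf of $X_j$ (giving value $2$ to a $2$-only-leaf, or value $1$ to each leaf of the relevant branch) onto that root, preserving the cost. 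Since the modification only ever \emph{lowers} values on the $\XXX$-members, independence is never endangered there, and the single root that is raised is controlled by the boundary markers $\langle\overline{\EEE_2}.\EEE_0\rangle$ and $\langle\EEE_0.\overline{\EEE_2}\rangle$ together with the maximality of the sequence; no comparison with the construction of Lemma~\ref{lem:beta-star-general} and no spine-long propagation is required. Without this matching argument (or an equivalent one) your treatment of 3(c) and 4(c) does not yet constitute a proof; with it, it becomes the paper's.
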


\begin{proof}
Thanks to Lemma~\ref{lem:no-null-spine-subtree}, we know that we can choose
an independent broadcast $\tf$ on $L$ which satisfies Item~\ref{item:lem}.
By Corollary~\ref{cor:no-null-spine-subtree}, we get
that the $\tf$-value of every 1-leaf is at most~3,
and that the $\tf$-value of every 2-leaf is at most~4.
This observation will be implicitly used all along the proof.

Note also that if $\tf^*(S_i)=\lambda_1(S_i)$, or $\tf^*(S_i)=\lambda_2(S_i)$,
then we can obviously modify $\tf$, in order to satisfy Item~\ref{item:1-partout},
without modifying its cost.

We now prove that $\tf$ can be chosen in such a way that
it satisfies all the other items of the lemma.
Let $S_i$ be any spine-subtree of $L$.

Note first that the above observation already proves Item~\ref{item:Xc}.
Moreover, observe that
we necessarily have
$\tfs(S_i)\le \lambda_1(S_i)$ if $S_i$ is of type $\GGG$ or $\XXX_a$,
since in each of these cases, the only way to attain this value
is to have one leaf $\ell$ of $S_i$ with $\tf(\ell)=\tfs(S_i)$,
which would imply that a neighbouring spine-subtree of $S_i$
has no $\tf$-broadcast vertex, in contradiction with Item~\ref{item:lem}.
This proves Item~\ref{item:GXa}.

Suppose now that $S_i$ is of type $\FFFA$.
If the broadcast value of a 2-leaf of $S_i$ is $2$,
then its at least two sister-leaves cannot
be $\tf$-broadcast vertices since this would contradict the optimality of $\tf$
(by giving the broadcast value $1$ to each 2-leaf of a branch $B$ of $S_i$, we get
$\tfs(B)\geq\lambda_2(B)$).
Therefore, the greatest possible value of $\tfs(S_i)$ is obtained when
the spine-vertex $v_i$ and all the 2-leaves of $S_i$ have $\tf$-value $1$.
This gives $\tfs(S_i)\leq \lambda_2(S_i)+1$, 
which proves Item~\ref{item:FaSEQ}.

Suppose now that $S_i$ is of type $\XXX_b$ or $\YYYC$.
Observe first that, if $\tf(v_i)=1$, then 
the broadcast value of every leaf of $S_i$ is $1$.
The optimality of $\tf$ then implies that $S_i$ has a unique branch
$B$ with two leaves and no branch with a unique leaf, since otherwise we could set to~$0$ the broadcast
value of $v_i$ and to~$3$ the broadcast value of one leaf of every such branch,
and thus increase the cost of $\tf$.
(Note also that, for the same cost, we can set $\tf(v_i)=0$ and set to 0 and~3 the broadcast
value of the two leaves of this branch. This remark will be useful in the next paragraph.)
In that case, we thus have 
$\tf^*(S_i) = \lambda_2(S_i)+1 \leq \lambda_2(S_i)+\lambda_2^*(S_i)+\alpha_2^*(S_i)$,
which proves Item~\ref{item:XbYc}.
Suppose now $\tf(v_i)=0$ and let $B$ be any branch of $S_i$. 
The optimality of $\tf$ then implies the following.
If $B$ has one or two 2-leaves, the $\tf$-value of one of theses leaves is $3$
(otherwise, we would have $\tfs(B)\le 2$).
If $B$ has at least three leaves, the largest possible value of $\tfs(B)$ is $\lambda_2(B)$,
since as soon as a 2-leaf has a broadcast value at least~2, none of its sister-leaves can be a broadcast
vertex. (Note that if $B$ has three 2-leaves, then either one of them has $\tf$-value~$3$,
or, for the same cost, each of them has $\tf$-value~$1$.)
Therefore, $\tf^*(S_i)\leq \lambda_2(S_i)+\lambda_2^*(S_i)+\alpha_2^*(S_i)$,
which proves Item~\ref{item:XbYc}.

\medskip

Suppose now that $S_i$ is of type $\XXX_b$ or $\XXX_c$, and belongs
to some sequence of type $\TTT_4$. 
In such a sequence, each spine-subtree of type $\XXX$
is associated with one of its neighbouring spine-subtrees of type $\FFFA$, in such
a way that no spine-subtree of type $\FFFA$ is associated with two distinct
spine-subtrees of type $\XXX$.
Let $S'_i$ denote the spine-subtree associated with $S_i$ (we have $S'_i\in\{S_{i-1},S_{i+1}\}$).
On the one hand, from the above discussion about spine-subtrees of type $\FFFA$, we know that their
largest possible  broadcast value can be attained only if their spine-vertex has
broadcast value~$1$.
On the other hand, from the above discussion about spine-subtrees of type 
$\XXX_b$ or $\XXX_c$, we know that their
largest possible broadcast value can be attained only if one leaf $\ell$ of 
the unique branch $B$ of $S_i$ having at most two, or exactly two leaves, 
has a broadcast value of $3$.
Therefore, $S_i$ and $S'_i$ cannot get their largest possible broadcast value
both at the same time. We thus need either to remove the broadcast value of the spine-vertex
of $S'_i$, or to give the broadcast value
$2$ to $\ell$ if $\ell$ is a 2-only-leaf,
or $1$ to each 2-leaf of $B$ otherwise.
This second choice proves that the optimal independent broadcast $\tf$ can be chosen
in order to satisfy Items \ref{item:XbSEQ} and~\ref{item:XcSEQ}.

\medskip

It remains to prove Item~\ref{item:racinePas1}.
If $S_i$ is of type $\SSS_1$, the result follows from Lemma~\ref{lem:branch-0}.
We thus only need to consider the case when $S_i$ is of type $\XXX_b$ or~$\YYYC$.
Suppose that $\tf(v_i)=1$ (we cannot have $\tf(v_i)>1$ by Corollary~\ref{cor:lobster}).
If $S_i$ is of type~$\XXX_b$, then we can set $\tf(v_i)=0$,
$\tf(\ell)=3$ for a 2-leaf $\ell$ of~$S_i$ belonging to the unique branch of $S_i$
having at most two leaves, and $\tf(\ell')=0$ for the sister-leaf $\ell'$ of
$\ell$, if any. Such a modification does not decrease the cost of $\tf$ and we are done.
If $S_i$ is of type~$\YYYC$, then we cannot have $\tf(v_i)=1$,
since this would contradict the optimality of $\tf$,
as the previous modification can be done on the at least two branches of $S_i$
having at most two 2-leaves.

This completes the proof.
\end{proof}

\begin{lemma}
If $L$ is a locally uniform $2$-lobster of length $k\ge 1$, 
and $f$ is an optimal independent broadcast on $L$, 
then 
there exists an optimal independent broadcast $\tf$ on~$L$
such that, for every spine-subtree $S_i$ of $L$, $0\le i\le k$, $\tf$ satisfies the following properties.
\begin{enumerate}
\item \label{item:previous-lem} $\tf$ satisfies all the items of Lemma~\ref{lem:not-more-than-beta-star},
\item \label{item:FaFbPASSEQ} $\tf^*(S_i)\leq \lambda_2(S_i)$ if $S_i$ is of type
$\FFFA$ and $S_i$ does 
not appear as an $\FFFA$-spine-subtree in a sequence of type $\TTT_4$.    
\end{enumerate}
\label{lem:not-more-than-beta-star-total}
\end{lemma}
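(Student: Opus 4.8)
The plan is to take the broadcast $\tf$ provided by Lemma~\ref{lem:not-more-than-beta-star} and refine it so that Item~\ref{item:FaFbPASSEQ} holds as well, while preserving all the properties of Lemma~\ref{lem:not-more-than-beta-star} and optimality. First I would note that Item~\ref{item:FaFbPASSEQ} can only fail at a spine-subtree $S_i$ of type $\FFFA$ with $\tf^*(S_i)=\lambda_2(S_i)+1$: by Item~\ref{item:FaSEQ} of Lemma~\ref{lem:not-more-than-beta-star} we always have $\tf^*(S_i)\le\lambda_2(S_i)+1$, and the analysis in its proof shows that equality forces $\tf(v_i)=1$ together with $\tf(\ell)=1$ for every $2$-leaf $\ell$ of $S_i$. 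Thus the whole task reduces to controlling the spine-vertices $v_i$ of those $\FFFA$-subtrees that do \emph{not} occur as an $\FFFA$-spine-subtree of a sequence of type $\TTT_4$.

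To carry this out I would fix, among all optimal independent broadcasts on $L$ satisfying Lemma~\ref{lem:not-more-than-beta-star}, one broadcast $\tf$ minimizing the number of $\tf$-broadcast spine-vertices $\sum_{j=0}^{k}\tf(v_j)$ (recall this sum equals their number, since $\tf(v_j)\le 1$ for every $j$). I would then assume, for contradiction, that some $\FFFA$-subtree $S_i$ with $\tf(v_i)=1$ does not appear as an $\FFFA$-spine-subtree in any sequence of type $\TTT_4$, and produce a strictly cheaper (in the above sense) optimal broadcast. The key remark is that membership in a $\TTT_4$-sequence is a purely structural, type-based property, independent of $\tf$; hence $S_i$ lying outside every such sequence means that the $\TTT_4$-framing fails on at least one side of $S_i$.

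The heart of the argument is a local analysis around $v_i$. Since the value $\tf(v_i)=1$ is independent, neither $v_{i-1}$ nor $v_{i+1}$ is a broadcast vertex and no leaf of a neighbouring spine-subtree may $\tf$-dominate $v_i$. Combining this with the type constraints (in particular that every neighbour of an $\XXX_c$-subtree is of type $\SSS_2$, and that $\GGG$- and $\YYYC$-subtrees cannot lie in the interior of a $\TTT_4$-sequence), I would show that on the side where the framing fails there is a neighbouring spine-subtree --- of type $\XXX_b$, $\XXX_c$ or $\YYYC$ --- whose broadcast value has been strictly suppressed below its bound in Lemma~\ref{lem:not-more-than-beta-star} precisely so that $\tf(v_i)=1$ remains admissible, and that this subtree itself belongs to no $\TTT_4$-sequence. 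I would then set $\tf(v_i)=0$ and raise that neighbour back to its (loose) maximal value, checking case by case that the result is again an independent broadcast satisfying Lemma~\ref{lem:not-more-than-beta-star}. This exchange trades a $-1$ for a $+1$, so it keeps the cost equal to the optimum, yet it strictly decreases $\sum_j\tf(v_j)$, contradicting the minimality of our choice and thus completing the proof.

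The step I expect to be the main obstacle is exactly this last local analysis: one must run through the possible types of the two neighbours of $S_i$ (and, when the obstruction comes from a second-neighbour $\XXX_c$ forced by the $\overline{\XXX_c}$ condition in the framing, the configuration at distance two on the spine), and in each case verify both that a suppressed, non-$\TTT_4$ neighbour exists and that restoring it to its maximal value neither violates independence with the surrounding broadcast vertices nor breaks any per-subtree bound. The sharper bounds of Items~\ref{item:XcSEQ} and~\ref{item:XbSEQ}, valid only for subtrees inside $\TTT_4$-sequences, are automatically respected because the subtrees we restore are never of that kind, the property of lying in a sequence being structural and hence unchanged by the modification.
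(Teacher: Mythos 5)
Your reduction of the problem to spine-subtrees of type $\FFFA$ with $\tf(v_i)=1$ matches the paper's (these are exactly its ``bad spine-subtrees''), but the modification you propose has a genuine gap. You describe a purely \emph{local} exchange: set $\tf(v_i)=0$ and restore a single suppressed neighbouring (or second-neighbouring) spine-subtree of type $\XXX_b$, $\XXX_c$ or $\YYYC$ to its maximal value. This exchange is not always available, because the suppressed subtree can be wedged between \emph{two} bad $\FFFA$-subtrees. For instance, if $S_{i+1}$ is of type $\XXX_b$ and $S_{i+2}$ is again bad with $\tf(v_{i+2})=1$, then giving a $2$-leaf of $S_{i+1}$ the value $3$ violates independence with $v_{i+2}$ (they are at distance $3$), so restoring $S_{i+1}$ after zeroing only $v_i$ is impossible; the same happens with an $\XXX_c$-subtree and the spine-vertex two steps away. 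This obstruction propagates: $S_{i+2}$ may in turn be pinned by $S_{i+4}$, and so on, along arbitrarily long configurations whose type is a prefix of $(\FFFA.(\XXX_b|\XXX_c))^*$ or $(\FFFA.\XXX_c.\FFFA)^*$ with bad subtrees at periodic positions. The paper's proof is organised around exactly this phenomenon: it extracts the \emph{maximal} such subsequence starting at the leftmost bad subtree and applies a \emph{simultaneous} ``standard modification'' to all of its members (several $-1$'s at spine-vertices compensated by several $+1$'s at suppressed $\XXX$-subtrees), with a separate analysis of how the chain terminates. A single $(-1,+1)$ swap cannot realise this. Relatedly, the paper's progress measure is lexicographic (number of bad subtrees, then the index of the leftmost one) rather than $\sum_j\tf(v_j)$, because some of the needed modifications move a spine-value $1$ from one $\FFFA$-subtree to another without reducing the total.

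A second, smaller imprecision: from ``$S_i$ does not appear as an $\FFFA$-spine-subtree in any $\TTT_4$-sequence'' you infer that ``the $\TTT_4$-framing fails on at least one side of $S_i$''. Since the middle block of $\TTT_4$ is $\{(\XXX|\FFFA).\FFFA\}^*$, a subtree of type $\FFFA$ may occur in a $\TTT_4$-sequence in an $\XXX$-position; then no framing fails, yet Item~\ref{item:FaFbPASSEQ} must still be enforced for it. The paper treats this as a separate (easy) first case, using that such an $S_i$ is surrounded by two $\FFFA$-spine-subtrees of the sequence; your case analysis as described would miss it.
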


\begin{proof}
Thanks to Lemma~\ref{lem:not-more-than-beta-star}, we know that we can choose
an independent broadcast on~$L$ which satisfies Item~\ref{item:previous-lem}.
So consider such a broadcast $\tf$ on~$L$.
Recall that by Item~\ref{item:racinePas1} of Lemma~\ref{lem:not-more-than-beta-star},
$\tf(v_i)=0$ for every spine-subtree $S_i$ which is not of type $\FFFA$. 
Moreover, by Item~\ref{item:FaSEQ} of Lemma~\ref{lem:not-more-than-beta-star},
if $S_i$ is a spine-subtree of $L$ of type $\FFFA$,
then $\tfs(S_i)\leq \lambda_2(S_i)+1$, and, as observed in the proof of that lemma, the only way
to attain this value is to give a broadcast value of~$1$ to the spine-vertex
and to all the 2-leaves of $S_i$.

Suppose that there exists in $L$ a spine-subtree $S_i$ of type $\FFFA$,
that does not appear as an $\FFFA$-spine-subtree in any sequence of type $\TTT_4$,
and such that $\tf(v_i)=1$, which implies $\tf(\ell) = 1$ for every leaf $\ell$ of~$S_i$
since $\tf$ is optimal. 
Such a spine-subtree will be called a {\it bad spine-subtree}.
Moreover, suppose that $S_i$ is the leftmost such bad spine-subtree of~$L$.
We claim that the broadcast $\tf$ can be modified, without decreasing its cost,
in such a way that either 
the number of bad spine-subtrees in $L$ strictly decreases,
or this number is still the same but the index of the leftmost bad spine-subtree in $L$ strictly increases,
which will prove Item~\ref{item:FaFbPASSEQ}.

All along this proof, we will modify the independent broadcast $\tf$ on some
spine-subtrees, according to their type.
By applying the \emph{standard modification of $\tf$} on a spine-subtree $S_j$ of $L$,
we mean the following.
\begin{itemize}
\item If $S_j$ is a bad spine-subtree, then we set $\tf(v_j)=0$.
\item If $S_j$ is of type $\FFFA$ and is not a bad spine-subtree, then we set $\tf(v_j)=1$.
\item If $S_j$ is of type $\XXX_b$, then we set $\tf(\ell)=3$ for a 2-leaf $\ell$ of the unique branch of $S_j$
having at most two leaves.
\item If $S_j$ is of type $\XXX_c$, then we set $\tf(\ell)=3$ for a 1-leaf $\ell$ of $S_j$,
and $\tf(\ell')=0$ for the sister-leaf of~$\ell$.
\item If $S_j$ is of type $\YYYC$, then we set $\tf(\ell)=3$ for one 2-leaf~$\ell$ of each branch of~$S_j$ 
having at most two 2-leaves, and $\tf(\ell')=0$ for its sister-leaf $\ell'$, if any.
\end{itemize}

We first consider the case when $S_i$ belongs to some sequence of type $\TTT_4$,
but not as an $\FFFA$-spine-subtree, which implies that the length of this sequence is at least~3.
Let $A_0X_1A_1\dots X_pA_p$, $p\ge 1$, denote the corresponding sequence.
Every spine-subtree corresponding to some $X_j$, $1\le j\le p$, is surrounded
by two spine-subtrees of type $\FFFA$.
If any such spine-subtree $S_{\alpha}$ (corresponding to some $X_j$) is of type $\FFFA$, then
having $\tf(s_{\alpha})=1$ would imply that none of $s_{\alpha-1}$ and $s_{\alpha+1}$ is an $\tf$-broadcast
vertex.
We can thus apply the standard modification of $\tf$ on all the spine-subtrees
$A_0,X_1,A_1,\dots, X_p,A_p$,
without decreasing the cost of $\tf$.

\medskip

We suppose now that $S_i$ does not belong to
any sequence of type $\TTT_4$. The following claims will be useful in the sequel.

\begin{claim}
If a bad spine-subtree $S_j$, not belonging to any sequence of type $\TTT_4$, 
has a neighbouring spine-subtree of type~$\YYYC$,
then we can modify~$\tf$, without decreasing its cost, in such
a way that the number of bad spine-subtrees strictly decreases.
\label{cl:pas-Fa-Yc}
\end{claim}

\begin{proof}
Suppose that $S_{i+1}$ exists and is of type $\YYYC$ 
(the case when $S_{i-1}$ exists and is of type $\YYYC$ is similar).
If $S_{i+2}$ does not exist, or if $S_{i+2}$ exists and $\tf(v_{i+2})=0$,
then we can apply the standard modification 
of $\tf$ on $S_i$ and $S_{i+1}$, without decreasing the cost of $\tf$.
Finally, if $S_{i+2}$ exists and $\tf(v_{i+2})=1$, 
we get that $S_{i+2}$ is also a bad spine-subtree of $L$.
In that case, we can apply the standard modification of $\tf$ on $S_i$, $S_{i+1}$
and~$S_{i+2}$, without decreasing the cost of $\tf$ since $S_{i+1}$ has at least 
two branches with at most two leaves.
\end{proof}

\begin{claim}
Let $S_i$ be the leftmost bad spine-subtree in $L$, that does not belong to any sequence of type~$\TTT_4$.
If $S_{i+1}$ is of type $\FFFA$ and $S_{i+2}$ is of type $\GGG$, $\XXX_a$ or~$\XXX_c$,
then we can modify~$\tf$, without decreasing its cost, in such
a way that the index of the leftmost bad spine-subtree in $L$ strictly increases.
\label{cl:pas-FaFa-GXaXc}
\end{claim}

\begin{proof}
Since $\tf(v_i)=1$, we get that $\tf^*(S_{i+2})=\lambda_1(S_{i+2})$
(the $\tf$-broadcast value of every leaf in $S_{i+2}$ is~1), so that we can apply
the standard modification of $\tf$ on $S_i$ and $S_{i+1}$
(recall that $\tf(\ell)=1$ for every leaf of $S_{i+2}$ by Item~\ref{item:1-partout}
of Lemma~\ref{lem:not-more-than-beta-star}), without decreasing the cost of $\tf$,
but strictly increasing the index of the leftmost bad spine-subtree in $L$.
\end{proof}

\begin{claim}
If $S_i$ is the leftmost bad spine-subtree in $L$, that does not belong to any sequence of type~$\TTT_4$,
then we can assume that $S_{i+1}$ is of type $\XXX_b$ or $\XXX_c$.
\label{cl:droite}
\end{claim}

\begin{proof}
If $S_{i-1}$ is of type $\XXX_c$, then,
since $\tf(v_i)=1$, we get that $\tf(\ell)=1$ for every
1-leaf of $S_{i-1}$.
Similarly, if $S_{i-1}$ is of type $\XXX_b$, then $\tf(\ell)\le 2$ for every
2-leaf of $S_{i-1}$.
In both cases, we can thus apply the standard modification 
of $\tf$ on $S_i$ and $S_{i-1}$, without decreasing the cost of $\tf$.

Thanks to Claim~\ref{cl:pas-Fa-Yc}, we can thus assume that $S_{i-1}$ either does not exist 
or is of type $\FFFA$, $\XXX_a$
or~$\GGG$ (these two latter cases imply that $S_{i-2}$ cannot be of type~$\XXX_c$,
and is thus necessarily of type $\GGG$ or~$\XXX_a$).
If $S_{i-1}$ is of type $\FFFA$ and $S_{i-2}$ is of type $\XXX_c$,
then, similarly as above, we can apply the standard modification 
of $\tf$ on $S_i$ and $S_{i-2}$, without decreasing the cost of $\tf$.

In all the remaining cases, $S_i$ is of type $\langle\overline{\EEE_2}.\EEE_0\rangle\FFFA$,
so that we necessarily have that 
either $S_{i+1}$ is of type $\XXX_b$ or $\XXX_c$,
or $S_{i+1}$ is of type $\FFFA$ and $S_{i+2}$ is of type $\XXX_c$, since otherwise
$S_i$ would be a sequence of type $\TTT_4$.
But the case when $S_{i+1}$ is of type $\FFFA$ and $S_{i+2}$ is of type $\XXX_c$
is covered by Claim~\ref{cl:pas-FaFa-GXaXc}.

This concludes the proof.
\end{proof}

Thanks to Claims \ref{cl:pas-Fa-Yc} and~\ref{cl:droite}, we can assume that 
$S_{i-1}$ is not of type~$\YYYC$, and that $S_{i+1}$ is of type either $\XXX_b$ or~$\XXX_c$.
We consider these two cases separately when $S_{i+2}$ is not a bad spine-subtree,
or together otherwise.

\begin{enumerate}
\item $S_{i+1}$ is of type $\XXX_b$ and $S_{i+2}$ is not a bad spine-subtree.\\
In that case, we can apply the standard modification of $\tf$ on $S_i$ and
$S_{i+1}$ and we are done.

\item $S_{i+1}$ is of type $\XXX_c$ and $S_{i+2}$ is not a bad spine-subtree.\\
In this case, $S_{i+2}$ necessarily exists and is of type~$S_2$.
If $S_{i+3}$ is not a bad spine-subtree, then 
we can apply the standard modification of $\tf$ on $S_i$ and
$S_{i+1}$ and we are done.
Otherwise, $S_{i+2}$ must be of type $\XXX_b$ or $\FFFA$, thanks to our assumption based 
on Claim~\ref{cl:pas-Fa-Yc}.
We thus have two cases to consider.
\begin{enumerate}
\item $S_{i+3}$ is a bad spine-subtree and $S_{i+2}$ is of type $\XXX_b$.\\
In that case, we get that $\tf(\ell)\le 2$ for every 2-leaf $\ell$ of $S_{i+2}$,
so that we can apply the standard modification of $\tf$ on $S_i$, 
$S_{i+1}$, $S_{i+2}$ and $S_{i+3}$, without decreasing the cost of $\tf$.
\item $S_{i+3}$ is a bad spine-subtree and $S_{i+2}$ is of type $\FFFA$.\\
In that case, $S_{i+4}$ must exist and must be of type $\XXX_c$,
since otherwise the sequence $S_iS_{i+1}S_{i+2}$ would be of type $\TTT_4$. 
Observe then that $S_{i+3}$ is somehow ``in the same situation as $S_i$''.

Let $L'=S'_1S'_2S'_3S'_4\dots S'_s$, $s\ge 5$, be the maximal subsequence of $L$, starting at $S_i$ 
(that is, $S'_1=S_i$), whose
type is a prefix of $(\FFFA.\XXX_c.\FFFA)^*$ (considered as a word),
and such that $S'_j$ is a bad spine-subtree if $j\equiv 1\pmod 3$.
We then have three cases, according to the value of $(s\mod 3)$.
\begin{enumerate}
\item If $s\equiv 1\pmod 3$, which means that $S_{i+s}$ is not of type $\XXX_c$, we get
that $S_i\dots S_{i+s-1}$ is a sequence of type $\TTT_4$, a contradiction.
\item If $s\equiv 0\pmod 3$, which means that $S_{i+s}$ is not a bad spine-subtree, we 
can apply the standard modification of $\tf$ on all the spine-subtrees 
$S'_j$ of $L'$ with $j\not\equiv 0\pmod 3$, without decreasing the cost of $\tf$.
\item If $s\equiv 2\pmod 3$, which means that $S_{i+s}$ is not of type $\FFFA$, 
we get that $S_{i+s}$ is thus of type $\XXX_b$.
If $S_{i+s+1}$ is not a bad spine-subtree, then we
can apply the standard modification of $\tf$ on all the spine-subtrees 
$S'_j$ of $L'$ with $j\not\equiv 0\pmod 3$, without decreasing the cost of $\tf$.
If $S_{i+s+1}$ is a bad spine-subtree, then we
can apply the standard modification of $\tf$ on $S_{i+s}$, $S_{i+s+1}$, and on all the spine-subtrees 
$S'_j$ of $L'$ with $j\not\equiv 0\pmod 3$, without decreasing the cost of $\tf$.
\end{enumerate}

\end{enumerate}

\item $S_{i+1}$ is of type $\XXX_b$ or $\XXX_c$, and $S_{i+2}$ is a bad spine-subtree.\\
%
%
In this case, we get that either $S_{i+3}$ is of type $\XXX_b$ or $\XXX_c$,
or $S_{i+3}$ is of type $\FFFA$ and $S_{i+4}$ is of type $\XXX_c$,
since otherwise $S_iS_{i+1}S_{i+2}$ would be a sequence of type $\TTT_4$.
Therefore, $S_{i+2}$ is ``in the same situation as $S_i$''.

Let $L'=S'_1S'_2S'_3\dots S'_s$, $s\ge 4$, be the maximal subsequence of $L$, starting at $S_i$ 
(that is, $S'_1=S_i$), whose
type is a prefix of $(\FFFA.(\XXX_b[\XXX_c))^*$ (considered as a word),
and such that $S'_j$ is a bad spine-subtree if $j\equiv 1\pmod 2$.
We then have two cases, according to the value of $(s\mod 2)$.
\begin{enumerate}
\item If $s\equiv 1\pmod 2$, which means that $S_{i+s}$ is not of type $\XXX_b$ nor $\XXX_c$,
we get that $S_{i+s}$ is of type $\FFFA$ and $S_{i+s+1}$ is of type $\XXX_c$,
since otherwise $L'$ would be a sequence of type $\TTT_4$.
In that case, since $S_{i+s-1}$ is a bad spine-subtree, we get that $\tf^*(S_{i+s+1}) = 2$,
so that we can apply the standard modification of $\tf$ on the spine-subtrees
$S_i,\dots,S_{i+s}$, without decreasing the cost of $\tf$.

\item If $s\equiv 0\pmod 2$, which means that $S_{i+s}$ is not a bad spine-subtree,
we consider two cases.

If $S_{i+s-1}$ is of type $\XXX_b$, then we can apply the standard modification 
of $\tf$ on the spine-subtrees $S_i,\dots,S_{i+s-1}$, without decreasing the cost of $\tf$.

If $S_{i+s-1}$ is of type $\XXX_c$,
we get that $S_{i+s}$ is of type either $\FFFA$ (but not a bad spine-subtree), 
$\XXX_b$ or $\YYYC$.
If $S_{i+s+1}$ is not a bad spine-subtree, then, again, we can apply the standard modification 
of $\tf$ on the spine-subtrees $S_i,\dots,S_{i+s-1}$, without decreasing the cost of $\tf$.
If $S_{i+s+1}$ is a bad spine-subtree and $S_{i+s}$ is not of type $\FFFA$,
then we necessarily have 
$\tf(\ell)\le 2$ for every 2-leaf $\ell$ of $S_{i+s}$,
so that
we can apply the standard modification of $\tf$ on the spine-subtrees 
$S_i,\dots,S_{i+s+1}$, without decreasing the cost of $\tf$.
If $S_{i+s+1}$ is a bad spine-subtree and $S_{i+s}$ is of type $\FFFA$,
then, by applying the standard modification of $\tf$ on $S_{i+s}$ and $S_{i+s+1}$,
we get a new subsequence $L'$, whose length has been increased by~1,
so that we now have $s\equiv 1\pmod 2$ and the previous case applies.

\end{enumerate}

\end{enumerate} 

This completes the proof.
\end{proof}

\subsection{Main result}
\label{sub:main-result}

We are now able to prove the main result of our paper.

\begin{theorem}
For every locally uniform $2$-lobster $L$ of length $k\ge 0$, 
$\beta_b(L)=\beta^*(L)$.
\label{th:main}
\end{theorem}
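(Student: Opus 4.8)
The plan is to combine the two construction lemmas for the lower bound with a global counting argument for the upper bound. For the lower bound there is nothing left to do: Lemma~\ref{lem:beta-star-zero} (for $k=0$) and Lemma~\ref{lem:beta-star-general} (for $k\ge 1$) already exhibit an independent broadcast of cost $\beta^*(L)$, so $\beta_b(L)\ge\beta^*(L)$ in every case. The whole work is therefore to prove $\beta_b(L)\le\beta^*(L)$, and here I would first dispose of the degenerate case $k=0$: then $L=S_0$ is a single spine-subtree of type $\SSS_2$ with $\diam(L)=4$, so in an optimal broadcast no $2$-leaf can carry value~$4$ (it would dominate all of $L$ and hence be the unique broadcast vertex, giving cost $4<6\le\beta^*(L)$). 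The purely local branch analysis used for Items~\ref{item:FaSEQ} and~\ref{item:XbYc} of Lemma~\ref{lem:not-more-than-beta-star} then applies verbatim and yields $\cost(f)\le\lambda_2(S_0)+1=\beta^*(L)$ when $S_0$ is of type $\FFFA$, and $\cost(f)\le\lambda_2(S_0)+\lambda_2^*(S_0)+\alpha_2^*(S_0)=\beta^*(L)$ otherwise.

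For $k\ge 1$, I would fix the optimal broadcast $\tf$ provided by Lemma~\ref{lem:not-more-than-beta-star-total}, which satisfies every bound of Lemma~\ref{lem:not-more-than-beta-star} together with Item~\ref{item:FaFbPASSEQ}. Since $\cost(\tf)=\sum_{i=0}^{k}\tf^*(S_i)$ and, by Proposition~\ref{prop:partition}, each $S_i$ has exactly one type in $\TTT=\{\FFFA,\GGG,\XXX_a,\XXX_b,\XXX_c,\YYYC\}$, I would assign to each spine-subtree the \emph{base budget}
\[
B(S_i)=\begin{cases}
\lambda_1(S_i) & \text{if }S_i\text{ is of type }\GGG\text{ or }\XXX_a,\\
\lambda_1(S_i)+1 & \text{if }S_i\text{ is of type }\XXX_c,\\
\lambda_2(S_i) & \text{if }S_i\text{ is of type }\FFFA,\\
\lambda_2(S_i)+\lambda_2^*(S_i)+\alpha_2^*(S_i) & \text{if }S_i\text{ is of type }\XXX_b\text{ or }\YYYC.
\end{cases}
\]
A direct summation, using $\lambda_2=\lambda_2^*=0$ on type-$\SSS_1$ subtrees and $\lambda_1=0$, $\alpha_2^*(\FFFA)=0$ on type-$\SSS_2$ subtrees, then gives $\sum_i B(S_i)=\nu_1(L)+\nu_2(L)+\nu_3(L)$: the leaf counts (with $2$-only-leaves doubled) reassemble $\nu_1$, the terms $\alpha_2^*$ reassemble $\nu_2$, and the $+1$ attached to each $\XXX_c$ reassembles $\nu_3$.

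It remains to compare $\tf^*(S_i)$ with $B(S_i)$. Outside any sequence of type $\TTT_4$, Items~\ref{item:GXa}, \ref{item:Xc}, \ref{item:XbYc} and~\ref{item:FaFbPASSEQ} give exactly $\tf^*(S_i)\le B(S_i)$. Inside a sequence $\mathbf{S}=A_0X_1A_1\cdots X_pA_p$ of type $\TTT_4$ the only discrepancies are: each $A_j$ (type $\FFFA$, appearing as an $\FFFA$-spine-subtree) satisfies $\tf^*(A_j)\le\lambda_2(A_j)+1=B(A_j)+1$ by Item~\ref{item:FaSEQ}; each $X_j$ of type $\XXX_b$ or $\XXX_c$ satisfies $\tf^*(X_j)\le B(X_j)-1$ by Items~\ref{item:XbSEQ} and~\ref{item:XcSEQ}; and every other $X_j$ (of type $\XXX_a$, or of type $\FFFA$ appearing as an $\XXX$-spine-subtree, charged only $\lambda_2$ by Item~\ref{item:FaFbPASSEQ}) still obeys $\tf^*(X_j)\le B(X_j)$. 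Hence the excess $\sum_{S\in\mathbf{S}}\bigl(\tf^*(S)-B(S)\bigr)$ is at most $(p+1)-\#_{\XXX_b,\XXX_c}(\mathbf{S})$, since there are $p+1$ subtrees $A_j$; as $\ell(\mathbf{S})=2p+1$, this is precisely the value $\frac{\ell(\mathbf{S})+1}{2}-\#_{\XXX_b,\XXX_c}(\mathbf{S})$ contributed by $\mathbf{S}$ to $\nu_4(L)$. Summing over all maximal $\TTT_4$-sequences and the remaining subtrees gives $\cost(\tf)\le\sum_iB(S_i)+\nu_4(L)=\beta^*(L)$, whence $\beta_b(L)=\cost(\tf)\le\beta^*(L)$.

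The genuine mathematical content lies in the preceding lemmas; the theorem itself is essentially a bookkeeping step. Accordingly, the point I expect to require the most care is checking that the excess accounting is exact, i.e.\ that the maximal sequences of type $\TTT_4$ are pairwise disjoint, so that each $A_j$ contributes its $+1$ to exactly one $\nu_4$-term, and that ``appears as an $\FFFA$-spine-subtree in a sequence of type $\TTT_4$'' (as used in Item~\ref{item:FaFbPASSEQ}) is unambiguous: a type-$\FFFA$ subtree occurring as some $X_j$ must be treated as an $\XXX$-spine-subtree and charged only $\lambda_2$, never both $\lambda_2$ and $\lambda_2+1$. This disjointness and unambiguity follow from the maximality built into the $\{\cdots\}^+$ notation and from the boundary markers $\langle\overline{\EEE_2}.\EEE_0\rangle$ and $\langle\EEE_0.\overline{\EEE_2}\rangle$, which prevent two distinct maximal sequences from abutting so as to share an extremal $\FFFA$-subtree; I would verify this explicitly before concluding.
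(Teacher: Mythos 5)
Your proposal is correct and follows essentially the same route as the paper: both proofs obtain the lower bound from Lemmas~\ref{lem:beta-star-zero} and~\ref{lem:beta-star-general}, and the upper bound by summing the per-spine-subtree bounds of Lemmas~\ref{lem:not-more-than-beta-star} and~\ref{lem:not-more-than-beta-star-total}. The only difference is presentational: you carry out the summation $\sum_i U(S_i)=\beta^*(L)$ explicitly via the base-budget bookkeeping (including the disjointness check for maximal $\TTT_4$-sequences), whereas the paper makes it implicit by observing that the broadcast constructed in Lemma~\ref{lem:beta-star-general} attains each upper bound spine-subtree by spine-subtree.
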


\begin{proof}
If $k=0$, the result follows from Lemma~\ref{lem:beta-star-zero}, observing that
the independent broadcast built in its proof reaches the upper bounds on the broadcast values
stated in Lemma~\ref{lem:not-more-than-beta-star}.
We can thus assume $k\ge 1$.
By Lemma~\ref{lem:beta-star-general}, we know that there exists an independent
broadcast $f$ on $L$ with $\cost(f)=\beta^*(L)$.
Let $f$ be the independent broadcast on $L$ constructed in the proof of Lemma~\ref{lem:beta-star-general}.
We claim that for every spine-subtree $S_i$ of $L$, $f^*(S_i)$ equals the upper
bound given in Lemmas \ref{lem:not-more-than-beta-star} or~\ref{lem:not-more-than-beta-star-total},
which will prove the theorem.
\begin{enumerate}
\item
If $S_i$ is of type $\GGG$ or $\XXX_a$, then $f^*(S_i)$ has been set to 
$\lambda_1(S_i)$ in Step~1, and never modified in the following steps.
\item
If $S_i$ is of type $\XXX_b$, then $f^*(S_i)$ has been set 
to $\lambda_2(S_i) + \lambda_2^*(S_i) + \alpha_2^*(S_i)$ in Steps 1 and~2.
Moreover, if $S_i$ belongs to some sequence of type $\TTT_4$, then $f^*(S_i)$ has been
decreased by~1 in Step~4.
\item
If $S_i$ is of type $\XXX_c$, then $f^*(S_i)$ has been set to $3$ in Steps 1 and~3.
Moreover, if $S_i$ belongs to some sequence of type $\TTT_4$, then $f^*(S_i)$ has been
decreased by~1 in Step~4.
\item
If $S_i$ is of type $\YYYC$, then $f^*(S_i)$ has been set to
$\lambda_2(S_i) + \lambda_2^*(S_i) + \alpha_2^*(S_i)$ in Steps 1 and~2,
and never modified in the following steps.
\item
Finally, if $S_i$ is of type $\FFFA$, then $f^*(S_i)$ has been set to 
$\lambda_1(S_i)$ in Step~1.
Moreover, if $S_i$ belongs to some sequence of type $\TTT_4$, then $f^*(S_i)$ has been
increased by~1 in Step~4.
\end{enumerate}
This concludes the proof.
\end{proof}

\section{Concluding remarks}
\label{sec:discussion}

In this paper, we have given an explicit formula
for the broadcast independence number of
a subclass of lobsters, called locally uniform 2-lobsters.
Moreover, it is easily seen that computing the value
$\beta^*(L)$ for a locally uniform 2-lobster
$L$ of length $k$
can be done in linear time (simply processing the spine-subtrees
$S_0,\dots,S_k$ in that order), which improves the result
of Bessy and Rautenbach~\cite{BR18a} for this particular subclass of trees.

A natural question, as a first step, would be to extend our result to
the whole subclass of locally uniform lobsters.
In fact, we were able to give an explicit formula for every such lobster
not containing any spine-subtree of type $\ZZZ$, that is,
having exactly one branch and three 2-leaves
(see~\cite{thesis}).
However, the proof is then quite involved and we thus decided to only consider in this paper
the restricted class of locally uniform 2-lobsters.
Determining when the optimal broadcast value of a spine-subtree of type $\ZZZ$
is $3$ or~$4$ appears to be not so easy.

The more general question of giving an explicit formula for the broadcast independence number of
the whole class of lobsters is certainly more challenging.



\begin{thebibliography}{99}

\bibitem{thesis}
M. Ahmane.
Sur le nombre de broadcast ind\'ependance de quelques classes d'arbres.
PhD thesis (in French), University of Sciences and Technology
Houari Boumediene (USTHB), in preparation.


\bibitem{ABS18}
M. Ahmane, I. Bouchemakh and E. Sopena.
On the broadcast independence number of caterpillars.
{\it Discrete Applied Math.} 244:20--35 (2018).

\bibitem{BB}
L. Beaudou and R.C. Brewster.
On the multipacking number of grid graphs.
ArXiv:1803.09639 [math.CO] (2018).

\bibitem{BBF18}
L. Beaudou, R.C. Brewster and F. Foucaud.
Broadcast domination and multipacking:
bounds and the integrality gap.
{\it Australas. J. Combin.}, accepted.
Also available as arXiv:1803.02550 [math.CO] (2018).

\bibitem{BR18a} S. Bessy and D. Rautenbach.
Algorithmic aspects of broadcast independence.
ArXiv:1809.07248 [math.CO] (2018).

\bibitem{BR18b} S. Bessy and D. Rautenbach.
Relating broadcast independence and independence.
ArXiv:1809.09288 [math.CO] (2018).

\bibitem{BR18c} S. Bessy and D. Rautenbach.
Girth, minimum degree, independence, and broadcast independence.
ArXiv:1809.09565 [math.CO] (2018).


\bibitem{BHHM04} 
J.R.S. Blair, P. Heggernes, S. Horton, and F. Manne.
Broadcast domination algorithms for interval graphs, series-parallel graphs and trees. 
{\em Congr. Num.} 169:55--77 (2004).


\bibitem{BF17} 
I. Bouchemakh and N. Fergani. 
On the upper broadcast domination number. 
{\it Ars Combin.} 130:151--161  (2017).


\bibitem{BS11}
I. Bouchemakh and R. Sahbi. 
On a conjecture of Erwin.
{\em Stud. Inform. Univ.} 9(2):144--151 (2011).

\bibitem{BZ14} 
I. Bouchemakh and M. Zemir. 
On the Broadcast Independence Number of Grid Graph.
{\em Graphs Combin.} 30:83--100 (2014).

\bibitem{BS09}
B. Bre\v sar and S. \v Spacapan.
Broadcast domination of products of graphs.
{\em Ars Combin.} 92:303--320 (2009).

\bibitem{BMGY}
R.C. Brewster, G. Mac Gillivray and F. Yang. 
Broadcast domination and multipacking in strongly chordal graphs.  
{\it Discrete Appl. Math.}, in press.

\bibitem{BMT13}
R.C. Brewster, C.M. Mynhardt and L. Teshima. 
New bounds for the broadcast domination number of a graph. 
{\it Cent. Eur. J. Math.} 11(7):1334--1343  (2013).



\bibitem{CHM11}
E.J. Cockayne, S. Herke and C.M. Mynhardt.
Broadcasts and domination in trees.
{\em Discrete Math.} 311(13):1235--1246 (2011).

\bibitem{DDH09}
J. Dabney, B.C. Dean, and S.T. Hedetniemi.
A linear-time algorithm for broadcast domination in a tree. 
{\em Networks} 53(2):160--169 (2009).

\bibitem{DEHHH06}
J.E. Dunbar, D.J. Erwin, T.W. Haynes, S.M. Hedetniemi and S.T. Hedetniemi.
Broadcasts in graphs. 
{\em Discrete Appl. Math.} 154:59--75 (2006).

\bibitem{E01}
D.J. Erwin.
Cost domination in graphs. 
PhD Thesis, Western Michigan University (2001).

\bibitem{E04}
D.J. Erwin.
Dominating broadcasts in graphs. 
{\em Bull. Inst. Combin. Appl.} 42:89--105 (2004).

\bibitem{GM17}
L. Gemmrich and C.M. Mynhardt. 
Broadcasts in Graphs: Diametrical Trees. 
{\it Australas. J.  Combin.} 69(2):243--258 (2017).

\bibitem{HM14}
B.L. Hartnell and C.M. Mynhardt. 
On the Difference between Broadcast and Multipacking Numbers of Graphs. 
{\it Utilitas Math.} 94 (2014).

\bibitem{H06}
S.T. Hedetniemi. 
Unsolved algorithmic problems on trees.
{\em AKCE Int. J. Graphs Combin.} 3(1):1-37 (2006).

\bibitem{HL06}
P. Heggernes and D. Lokshtanov.
Optimal broadcast domination in polynomial time. 
{\em Discrete Math.} 36:3267--3280 (2006).

\bibitem{HM09}
S. Herke and C.M. Mynhardt.
Radial trees. 
{\em Discrete Math.} 309:5950--5962 (2009).

\bibitem{LM15}
S. Lunney and C.M. Mynhardt.
More trees with equal broadcast and domination numbers.
{\em Australas. J. Combin.}  61:251--272 (2015).

\bibitem{MR17}
C.M. Mynhardt and A. Roux. 
Dominating and Irredundant Broadcasts in Graphs. 
{\it Discrete Appl. Math.} 220:80--90 (2017).

\bibitem{MT}
C.M. Mynhardt and L. Teshima. 
Broadcasts and multipackings in trees. 
{\it Utilitas Math.} To appear.

 
\bibitem{MW13}
C.M. Mynhardt and J. Wodlinger. 
A class of trees with equal broadcast and domination numbers.
{\em Australas. J. Combin.} 56:3--22 (2013).

\bibitem{MW15}
C.M. Mynhardt and J. Wodlinger. 
Uniquely radial trees. 
{\it Comb. Math. and Comb. Comp.} 93:131--152  (2015).


\bibitem{S08}
S.M. Seager.
Dominating Broadcasts of Caterpillars.
{\em  Ars Combin.} 88:307--319 (2008).

\bibitem{SK14}
K.W. Soh and  K.M. Koh.
Broadcast domination in graph products of paths. 
{\em Australas. J. Combin.} 59:342--351 (2014).

\end{thebibliography}
\end{document}